\newif\ifsubmission
\newif\ifcomment
\newif\ifcamera
\author{Anonymous Authors}
\author{
  Yotam Kenneth-Mordoch
  \qquad
  Robert Krauthgamer%
  \thanks{ The Harry Weinrebe Professorial Chair of Computer Science.
    Work partially supported by the Israel Science Foundation grant \#1336/23,
    by the Israeli Council for Higher Education (CHE) via the Weizmann Data Science Research Center,
    and by a research grant from the Estate of Harry Schutzman.
  }
  \\ Weizmann Institute of Science
  \\ \texttt{\{yotam.kenneth,robert.krauthgamer\}@weizmann.ac.il}
}
\newtheorem{theorem}{Theorem}[section]
\newtheorem{proposition}[theorem]{Proposition}
\newtheorem{lemma}[theorem]{Lemma}
\newtheorem{corollary}[theorem]{Corollary}
\newtheorem{definition}[theorem]{Definition}
\def\compactify{\itemsep=0pt \topsep=0pt \partopsep=0pt \parsep=0pt}
\newcommand{\mintcut}{\mathrm{cut}}
\newcommand{\C}{C}
\newcommand{\R}{\mathbb{R}}
\DeclareMathOperator*{\E}{\mathbb{E}}
\newcommand{\Exp}[2]{\E_{ #2}\left[ #1 \right]}
\newcommand{\Probability}[1]{\Pr\left[ #1 \right]}
\newcommand{\ProbOn}[2]{\Pr_{#2}\left[ #1 \right]}
\newcommand{\tO}{\tilde{O}}
\DeclareMathOperator{\poly}{poly}
\DeclareMathOperator{\polylog}{polylog}
\newcommand{\indic}[1]{ \mathds{1}_{\{#1\}} } %
\providecommand{\set}[1]{{\{#1\}}}
\newcommand{\eqdef}{\coloneqq}
\newcommand{\Uc}{U(\C)}
\newcommand{\dd}{\mathbf{d}}
\newcommand{\BalCutPrune}{\text{BalCutPrune}}
\newcommand{\HH}{\mathcal{H}}
\newcommand{\F}{F}
\newcommand{\cross}[2]{\crosscut_{#2}(#1)}
\newcommand{\crossbar}[2]{\overline{\crosscut}_{#2}(#1)}
\newcommand{\Star}{P}
\newcommand{\cutedges}[2]{E_{#2}(#1,V\setminus #1)}
\DeclareMathOperator{\crosscut}{cr}
\newcommand{\colnote}[3]{\textcolor{#1}{$\ll$\textsf{#2}$\gg$\marginpar{\tiny\bf #3}}}
\newcommand{\rnote}[1]{\colnote{purple}{#1--Robi}{RK}}
\newcommand{\ynote}[1]{\colnote{blue}{#1--Yotam}{YK}}
\newcommand{\rnote}[1]{}
\newcommand{\ynote}[1]{}
\newtheorem{claim}[theorem]{Claim}
\title{All-Pairs Minimum Cut using $\tilde{O}(n^{7/4})$ Cut Queries}
\date{}
\begin{document}

\maketitle

\begin{abstract}
We present the first non-trivial algorithm for the all-pairs minimum cut problem in the cut-query model.
Given cut-query access to an unweighted graph $G=(V,E)$ with $n$ vertices,
our randomized algorithm constructs a Gomory-Hu tree of $G$,
and thus solves the all-pairs minimum cut problem,
using $\tilde{O}(n^{7/4})$ cut queries.
\end{abstract}

\section{Introduction}
\label{sec:introduction}
Optimization over graphs, such as finding a minimum $s,t$-cut, is a fundamental topic in computer science due to its wide range of applications from network design to combinatorial optimization and computational complexity.
Accordingly, there exists a vast body of work on solving such problems efficiently in many different models of computation.
One recently suggested model is the \emph{cut-query model} \cite{RSW18},
where algorithms access the input, which is an unweighted graph $G=(V,E)$,
only through cut queries:
the query specifies $S\subseteq V$
and obtains in return the size of the corresponding cut,
i.e. $\mintcut_G(S) \eqdef |E(S,V\setminus S)|$,
where $E(S,T)$ is the set of edges with one endpoint in $S$ and the other in $T$.

There are two main motivations for studying the cut-query model.
The first one is scenarios where it is costly or even impossible to access graph edges directly,
for example in genomic sequencing that models the structure of a genome as a graph.
It is impossible to directly access the graph edges, but one can perform an experiment to compute the number of edges crossing a cut in the graph, and it is imperative to minimize the number of experiments  \cite{GK98,ABKRS02,BGK05,CK08}.%
\footnote{The access model used is additive queries, where given a set $S\subseteq V$ the query returns the number of edges in $E\cap S\times S$. It is easy to see that one can simulate a cut query using $O(1)$ additive queries.}
Another example is distributed systems,
where many low-power devices with limited communication capabilities form a graph.
A cut query can be implemented by a central server sending a message to all devices in a subset $S\subseteq V$ and asking them to report the number of their incident edges crossing the cut, which is more efficient than asking them to report all their neighbors, as in the standard neighbor-query model.

The second motivation is that graph-cut functions are a natural example of submodular functions, and hence the cut-query model is a natural model for studying submodular function minimization (SFM).
A \emph{submodular function} is a set function $f:2^V\to \mathbb{R}$ that satisfies the diminishing returns property
\begin{equation*}
    \forall S \subseteq T\subseteq V,
    \forall v\in V\setminus T,
    \qquad f(S\cup \{v\}) - f(S) \ge f(T\cup \{v\}) - f(T)
    .
\end{equation*}
Denoting $n=|V|$,
the minimum of a submodular function $f$ can be found using $\tO(n^2)$ value queries to $f$ \cite{LSW15,MN21,Jiang23}, 
where we use the usual notation $\tO(\cdot)$ to hide polylogarithmic factors in $n$.
Meanwhile, the known lower bound for SFM is only $\Omega(n)$ queries \cite{GPRW20,LLSZ21}, and $\Omega(n\log n)$ queries for deterministic algorithms \cite{CGJS22}.
The special case of graph cuts therefore poses an intriguing case study to either find stronger lower bounds or inspire better algorithms for SFM.
In particular, while global minimum cut can be found using $O(n)$ queries,
known algorithms for minimum $s,t$-cut use $O(n^{5/3})$ queries \cite{RSW18,MN20,AEGLMN22,ASW25},
leaving a substantial gap to the $\Omega(n)$ query lower bound
(notice that it is a submodular minimization problem on $n-2$ variables).

We study the all-pairs minimum cut problem, whose objective is to find a minimum $s,t$-cut for every pair of vertices $s,t\in V$ in an unweighted graph $G=(V,E)$.
Recent advances reduce this problem to solving $\polylog n$ instances of minimum $s,t$-cut on contracted subgraphs of $G$, and the reduction itself only takes time that is linear in $|E|$ \cite{AKT20,AKT21,JPS21,AKT22a,AKT22b,AKLPST22,AKLP+25}.%
\footnote{Since these graphs are contracted subgraphs of $G$, they may have parallel edges, preventing the direct application of existing minimum $s,t$-cut algorithms in the cut-query model.
For further discussion, see \Cref{sec:technical-overview}.
}
It seems likely that these techniques extend to the cut-query model,
yielding a reduction to $\polylog n$ instances of minimum $s,t$-cut
also in the cut-query model.
Since minimum $s,t$-cut is a submodular minimization problem (on $n-2$ variables),
such a reduction would imply that the query complexity of the all-pairs minimum cut problem is at most $\polylog n$ times that of SFM. 
In that case,
proving a truly super-linear (exceeding linear by more than polylogarithmic factors)
lower bound for all-pairs minimum cut 
would surpass the currently known lower bounds for SFM.
Conversely, our algorithmic results rule out establishing an $\tilde{\Omega}(n^2)$ lower bound for SFM via the all-pairs minimum cut problem in unweighted graphs.

Note that it is possible to recover the entire graph $G$ using $O(n^2)$ cut queries \cite{CK08,BM11,RSW18},
after which the algorithm can solve any minimization problem without additional cut queries.
Hence, our focus is on subquadratic query complexity.

Our main result is the first non-trivial algorithm for the all-pairs minimum cut problem in the cut-query model, improving upon the naive approach that recovers the entire graph using $O(n^{2})$ cut queries.
Our algorithm is based on efficiently constructing a Gomory-Hu tree \cite{GH61}, a well-known data structure that compactly represents the minimum $s,t$-cut for every pair of vertices $s,t\in V$ in a graph $G=(V,E)$.
\begin{theorem}
    \label{theorem:gomory-hu-tree}
    There exists a randomized algorithm for constructing a Gomory-Hu tree of an unweighted graph $G=(V,E)$ with $n$ vertices,
    that uses $\tO(n^{7/4})$ cut queries and succeeds with probability $1-1/\poly(n)$.
\end{theorem}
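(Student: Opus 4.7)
The plan is to adapt the recent framework for sub-quadratic Gomory-Hu construction (cited in the introduction) to the cut-query model. That framework reduces Gomory-Hu construction to $\polylog n$ calls of a minimum $s,t$-cut primitive on contracted subgraphs of $G$. My strategy is twofold: first, lift the reduction itself to the cut-query model, and second, implement the min-cut primitive using $\tilde{O}(n^{7/4})$ cut queries per call. The theorem then follows by a union bound over the $\polylog n$ randomized invocations.

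I would begin by verifying that the reduction transfers to the cut-query model. Contractions are free here: a query $\mintcut_{G/\calP}(S)$ on a contracted graph equals $\mintcut_G(\bigcup_{U\in S} U)$ on the original $G$, so any recursive sub-instance can be serviced by one query on $G$ per query on the contracted graph. Hence the entire query cost reduces to that of the $\polylog n$ min-cut invocations, plus lower-order terms for the framework's bookkeeping.

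The main obstacle is designing the min $s,t$-cut subroutine. Existing $O(n^{5/3})$ cut-query algorithms crucially assume simple graphs, whereas contracted subgraphs may carry parallel edges with cut values as large as $\Theta(n^2)$, so those algorithms cannot be used as a black box. I would follow an expander-decomposition (or Steiner-mincut) paradigm: spend cut queries to expose a clustering of the vertex set into $k$ representative anchors and their satellites, then resolve the $s,t$-cut by combining the intra-cluster information with a boundary argument across the clusters. Balancing the decomposition/sparsification cost against the boundary-resolution cost (with $k$ chosen so that both terms equal $\tilde{O}(n^{7/4})$) yields the target query complexity; a natural choice is $k \approx n^{3/4}$, giving $\tilde{O}(n \cdot k) = \tilde{O}(n^{7/4})$ for the decomposition phase and $\tilde{O}(n^{5/2}/k) = \tilde{O}(n^{7/4})$ for the boundary phase.

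The hardest technical point will be controlling what a cut query reveals on a multigraph. Since contracted subgraphs arise from the simple graph $G$ with at most $\binom{n}{2}$ edges, the total multiplicity is still $O(n^2)$, which I would exploit via random sampling of anchors together with concentration arguments to bound the information gained per query. Putting everything together, the $\polylog n$ min-cut invocations succeed individually with high probability and, by a union bound, collectively with probability $1 - 1/\poly(n)$, completing the proof of \Cref{theorem:gomory-hu-tree}.
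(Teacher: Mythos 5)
There is a genuine gap at the heart of your proposal. You correctly identify the obstacle---that the known reductions produce minimum $s,t$-cut instances on \emph{contracted multigraphs}, where the cut value can be $\Theta(n^2)$ and the existing $\tO(n^{4/3}c^{1/3})$ cut-query algorithms are useless---but you then do not actually overcome it. The paragraph proposing a ``clustering into $k$ representative anchors'' with a ``boundary argument,'' and the balancing $\tO(nk)$ versus $\tO(n^{5/2}/k)$, is not an algorithm: neither term is derived from any concrete procedure, and there is no argument for why a decomposition of cost $\tO(nk)$ would reveal enough about the $s,t$-cut, nor why the residual work would cost $\tO(n^{5/2}/k)$. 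Indeed, the paper explicitly lists efficient minimum $s,t$-cut in weighted/contracted graphs in the cut-query model as an \emph{open problem} (see the ``Future Work'' discussion), so a proof that routes through that primitive would need to resolve a question the authors deliberately avoid.

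The paper's actual route is structurally different and is worth internalizing. Rather than solving minimum $s,t$-cut on contracted multigraphs, it (i) reduces Gomory-Hu construction to \emph{single-source} minimum cut solved in parallel over all parts of a partition (\Cref{lemma:gomory-hu-tree-reduction}), and (ii) solves single-source minimum cut via a \emph{relaxed} isolating-cuts problem in which a minimum $(v,R\setminus v)$-cut need only be returned when it is also a minimum $v,u$-cut for some $u\in R$. This relaxation is what makes the problem tractable: it splits the target cuts into $\alpha$-friendly ones, captured by a friendly cut sparsifier built from an expander decomposition (\Cref{lemma:friendly-cut-sparsifier}), and $\alpha$-unfriendly ones, captured by the new $(\tau,\F)$-star contraction (\Cref{lemma:tau-sigma-star-contraction}). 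The latter exploits the structural fact that a minimum $s,t$-cut has at most one $\alpha$-unfriendly vertex, which can be excluded from the contraction; nothing in your proposal plays this role. Your observation that cut queries on a contraction can be simulated by cut queries on $G$ is correct and is used in the paper, but it only transfers the \emph{access model}, not the \emph{query complexity} of the subproblem, which is where the real work lies.
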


\subsection{Technical Overview}
\label{sec:technical-overview}
We now provide an overview of our algorithms and their technical components,
and along the way explain our two main technical contributions.
The first one is introducing a weaker version of a problem called isolating-cuts.
The key point is that this relaxation strikes a good balance --
it suffices for our intended application of constructing a Gomory-Hu tree,
yet can be solved efficiently using cut queries.
The second contribution is a new randomized contraction procedure,
termed $(\tau,\F)$-star contraction.
This procedure solves minimum $s,t$-cut,
and is actually an adaptation of previous procedures
that solve the easier task of global minimum cut.

Our algorithm for the all-pairs minimum cut problem constructs
for $G$ a \emph{Gomory-Hu tree} \cite{GH61},
which is a weighted tree on the same vertex set $V$ such that
the minimum $s,t$-cut in the tree is also the minimum $s,t$-cut in $G$.
The classical algorithm of Gomory and Hu~\cite{GH61}
is based on iteratively refining the graph into a tree structure,
using $n-1$ calls to a minimum $s,t$-cut procedure. 
Recent work \cite{AKT21,AKT22a,AKT22b,JPS21}
has managed to accelerate this refinement process
using calls to a procedure that solves single-source minimum cut,
and these offer more choices for which $s,t$-cuts to refine by. 
The refinement process then works effectively like a logarithmic-depth recursion,
where the input graph is partitioned recursively,
and the size of each part decreases by a constant factor each time.

Our algorithm follows this general approach,
and one technical difference is that given a partition of the vertices,
instead of solving single-source minimum cut in each part separately,
we solve the problem for all the parts in parallel.
This difference is crucial in the cut-query model,
essentially because the bottleneck is information theoretic,
and our parallel version aggregates information from the different parts.
The following lemma presents the algorithm that constructs a Gomory-Hu tree
using calls to single-source minimum cut;
its proof is similar to the one in \cite{AKT22a}
and appears in \Cref{sec:gomory-hu-algorithm}.

\begin{restatable}{lemma}{ghtreereduction}
\label{lemma:gomory-hu-tree-reduction}
Suppose that given cut-query access to an unweighted graph $G$ on $n$ vertices,
a set of perturbation edges that guarantee unique minimum $s,t$-cuts,
and a partition of the vertex set $V_1,\ldots,V_k$ with pivots $\{p_i\in V_i\}_{i\in[k]}$,
one can return for every $i\in [k]$ and $v\in V_i$ the minimum $p_i,v$-cut $S_v$
using $q(n)$ cut queries and with success probability $1-\rho_F(n)$.

Then, one can compute a Gomory-Hu tree of an input graph on $n$ vertices using $\tO(q(n) + n^{1.5})$ cut queries.
The algorithm is randomized and succeeds with probability $1 - O(\rho_F(n)\cdot\log^2n) - 1/\poly(n)$.
\end{restatable}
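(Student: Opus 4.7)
The plan is to adapt the recursive Gomory-Hu tree construction of \cite{AKT22a} to the cut-query model, replacing its single-source min-cut subroutine with the oracle hypothesized by the lemma. The classical Gomory-Hu algorithm makes $n-1$ sequential $s,t$-cut calls; the AKT framework instead runs an $O(\log n)$-depth recursion in which each level invokes \emph{one} parallel single-source call on all parts of the current partition, refining it by a constant factor. Concretely, I would maintain a partition $V_1,\ldots,V_k$ of $V$ with pivots $p_i\in V_i$, call the oracle once to obtain all cuts $\{S_v\}_{v\in V_i}$, and then refine each $V_i$ by placing $u,v$ into the same new part iff $S_u=S_v$. Sampling each $p_i$ uniformly from $V_i$ ensures that in expectation a random vertex lands in a new part of size at most $|V_i|/2$, bounding the recursion depth by $O(\log n)$.

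The algorithm would proceed as follows. Take the given perturbation edges (which ensure unique minimum $s,t$-cuts with probability $1-1/\poly(n)$) and initialize the partition to $\{V\}$ with a random pivot. Then, for $O(\log n)$ levels: invoke the oracle on the current partition (spending $q(n)$ queries and $\rho_F(n)$ failure probability), refine each part by the returned cuts (well-defined thanks to uniqueness), sample fresh random pivots in the refined parts, and record each distinct cut $S_v$ as a candidate Gomory-Hu tree edge whose weight is obtained by a single cut query $|E(S_v,V\setminus S_v)|$. After the recursion halts, the accumulated candidate edges are glued into the final tree by the standard Gomory-Hu submodular exchange argument, exactly as in \cite{AKT22a}.

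For query accounting, the $O(\log n)$ oracle invocations contribute $\tO(q(n))$ queries, and weight extraction across all levels contributes $O(n\log n)$. The remaining $\tO(n^{1.5})$ term pays for the overhead of simulating cut queries on the \emph{contracted} auxiliary graphs that arise after refinement: at level $\ell+1$, the oracle must treat each refined part as a self-contained instance, so cuts crossing the part via other parts must be accounted for, and a standard $\tO(\sqrt{n})$-per-query sparsification across $O(n)$ effective boundary queries gives $\tO(n^{1.5})$. A union bound over $O(\log n)$ levels, the $O(\log n)$ random-pivot balancedness events, and perturbation-uniqueness yields the claimed success probability $1 - O(\rho_F(n)\log^2 n) - 1/\poly(n)$.

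The hardest part will be the interaction between the contracted auxiliary graphs of the recursion and the fact that our oracle is specified to operate on the original $G$. Two things need to be checked: (a) that the minimum $p_i,v$-cuts returned by the oracle actually coincide with the min cuts in the appropriate auxiliary graph at each level, which relies crucially on uniqueness via perturbation together with a careful inductive invariant on the current partition; and (b) that the implied contractions can be implemented within the $\tO(n^{1.5})$ additive budget, i.e.\ that the oracle-in-$G$ faithfully simulates an oracle on the auxiliary graph. Once these two ingredients are in place, the remaining argument that iterated refinement produces a valid Gomory-Hu tree follows the \cite{AKT22a} blueprint essentially verbatim.
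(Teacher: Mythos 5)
Your overall blueprint (the $O(\log n)$-depth AKT-style recursion with one parallel single-source call per level) matches the paper's, but there are concrete gaps. First, the $\tO(n^{1.5})$ term is misattributed. In the paper it pays for an explicit preprocessing step: recovering a $2\sqrt{n}$-NI sparsifier (\Cref{claim:nagamochi-ibaraki-sparsifier}) and using it to build an initial \emph{partial $\sqrt{n}$-tree}, which settles all pairs whose minimum cut is at most $\sqrt{n}$ before the recursion even starts. Your proposed mechanism --- a ``$\tO(\sqrt{n})$-per-query sparsification'' overhead for simulating queries on contracted auxiliary graphs --- does not exist in this model: a cut query on a contraction of $G$ is simulated exactly by one cut query on $G$ (replace each super-vertex by its constituent vertices), so there is no per-query cost to account for. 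Omitting the partial-tree step leaves the $n^{1.5}$ term unexplained and drops a component of the actual algorithm.

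Second, your refinement rule (``$u,v$ go to the same new part iff $S_u=S_v$'') is not the Gomory-Hu refinement used in the paper and does not by itself yield a valid partition tree or the depth bound. The paper classifies vertices as \emph{good} when $|S_v\cap V_i|\le |V_i|/2$, assigns each good vertex to the \emph{largest} good cut containing it, performs a Gomory-Hu step with the resulting cuts, and --- crucially --- \emph{resamples the pivot} up to $20\log n$ times whenever fewer than $|V_i|/4$ vertices are good. This resampling is exactly what produces $O(\log^2 n)$ oracle invocations and hence the $O(\rho_F(n)\log^2 n)$ failure term in the statement; your count of $O(\log n)$ invocations is inconsistent with that bound and with the need to guarantee (rather than merely expect) that part sizes shrink by a constant factor. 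Finally, you correctly flag but do not resolve the point that the oracle returns cuts in $G$ while the recursion needs cuts in the auxiliary graphs $G_i$; the paper closes this by invoking the Gomory-Hu tree invariant that these minimum $p_i,v$-cuts coincide, and this step cannot be left as an unverified assumption since it is the only reason the hypothesized oracle (which knows nothing about the current tree) is usable at all.
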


The main focus of our work is thus to solve the single-source minimum cut problem in the cut-query model.
We present an efficient algorithm for this problem in the following lemma,
whose proof appears in \Cref{sec:single-source-cuts}. 

\begin{restatable}[Single-Source Minimum Cut]{lemma}{singleSourceMinCut}
\label{lemma:single-source-min-cut}
Given cut-query access to an unweighted graph $G$ on $n$ vertices,
a set of perturbation edges $\tilde{E}$ that guarantee unique minimum $s,t$-cuts,
and a partition of the vertex set $V_1,\ldots,V_k$ with pivots $\{p_i\in V_i\}_{i\in[k]}$,
one can compute for every $i\in [k]$ and $v\in V_i$ the minimum $p_i,v$-cut $S_v$ using $q(n)=\tO(n^{1.75})$ cut queries.
The algorithm is randomized and succeeds with probability $1-1/\poly(n)$.
\end{restatable}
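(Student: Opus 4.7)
My plan is to parameterize by a threshold $\tau$ and, for every $i\in[k]$ and $v\in V_i$, separate into two regimes according to whether $\mintcut_G(p_i,v)\le\tau$ (small-cut case) or $\mintcut_G(p_i,v)>\tau$ (large-cut case). The two subroutines I would use are the tools advertised in the technical overview: a weakened isolating-cuts procedure for the small-cut regime, and the $(\tau,\F)$-star contraction for the large-cut regime. Their query costs depend oppositely on $\tau$, so balancing them at $\tau\approx n^{3/4}$ should yield the $\tO(n^{7/4})$ bound.

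For the small-cut regime I would adapt the Li--Panigrahi style isolating-cuts framework to cut queries, using the weakened guarantee that correctness is only required when the relevant min-cut has value at most $\tau$. Assign each non-pivot vertex an $O(\log n)$-bit random label; for each bit index $b$, perform one bounded-value max-flow computation that simultaneously separates every pivot $p_i$ from the vertices of $V_i$ whose $b$-th bit equals $0$. The crucial step is aggregation: because $V_1,\ldots,V_k$ are disjoint, cut queries on the union of all part-specific source/sink sets split additively into per-part cuts, so one ambient max-flow per bit label suffices rather than $k$ of them. An $s,t$-min-cut of value at most $\tau$ in an $n$-vertex graph can be computed via cut queries in roughly $\tO(n\tau)$, and $O(\log^2 n)$ bit labels suffice, giving $\tO(n\tau)$ queries for this case.

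For the large-cut regime I would sample $F\subseteq V$ of size roughly $n/\tau$ and, for every $u\notin F$, find a random $F$-neighbor via $O(\log n)$ cut-query binary searches and contract $u$ into it, obtaining the contracted graph $G_F$ on $|F|$ super-vertices. Mirroring the analysis of star-contraction for global min-cut, any fixed cut of value exceeding $\tau$ survives these contractions with high probability, and a union bound over the $O(n)$ relevant $(p_i,v)$ pairs lets us recover every large min-cut from $G_F$. Because $G_F$ has only $\tO(n/\tau)$ super-vertices and each of its cut queries is simulated by $O(1)$ queries in $G$, running the isolating-cuts harness on $G_F$ (aggregated across parts as above) gives a large-cut cost that, together with the small-cut bound, is optimized by $\tau\approx n^{3/4}$ to $\tO(n^{7/4})$. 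The main obstacle I foresee is designing the weakened isolating-cuts formulation so that the parallel aggregation across the $k$ parts is both sound -- correctly recovering the min-cut for every $v$ with $\mintcut_G(p_i,v)\le\tau$ -- and implementable with one small-value max-flow per bit label rather than $k$; once that is in place, the $(\tau,\F)$-star contraction analysis and a standard union bound over pivots and targets complete the argument.
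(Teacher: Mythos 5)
Your proposal has a genuine gap at its core: the large-cut regime rests on the claim that ``any fixed cut of value exceeding $\tau$ survives these contractions with high probability,'' and this is backwards. In star/random contraction arguments, a cut $C$ survives only if no contracted edge crosses it, and the probability that the edge sampled at a vertex $v$ crosses $C$ is roughly $\cross{v}{C}/\deg(v)$; so it is cuts whose edges form a \emph{small} fraction of every incident vertex's degree (friendly cuts, e.g.\ the global minimum cut) that survive, and the larger or more concentrated the cut, the more likely it is destroyed. This is precisely the difficulty the paper has to overcome: a minimum $s,t$-cut can be $o(1)$-friendly at $s$ or $t$ (almost all of $s$'s edges cross it), so it survives a naive contraction with probability $o(1)$. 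The fix is to exclude the terminals (the set $\F$) and the low-degree vertices from contraction and to use the \emph{minimality} of the $s,t$-cut to argue that at most one vertex is $\alpha$-unfriendly, so the at most $O(\alpha\,|\cutedges{\C}{}|)$ cut edges not incident to that vertex are spread thinly over high-degree vertices. Your scheme, which contracts \emph{every} vertex outside a random $F$ of size $n/\tau$ (thereby also contracting the terminals $p_i,v$ themselves), has no such argument and would destroy exactly the cuts you need. Relatedly, the correct dichotomy is not small versus large cut value but friendly versus unfriendly: a cut of value far above $\tau$ can still be $\alpha$-friendly, and such cuts are handled in the paper by an entirely different tool (a friendly cut sparsifier built from an expander decomposition, costing $\tO(\alpha^{-1}n\sqrt{w})$ queries), for which your proposal has no analogue.

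A second, independent gap is the reduction from single-source minimum cut to isolating cuts. The bit-label trick you describe reduces the \emph{isolating-cuts} problem to $O(\log n)$ max-flows, but an isolating cut for $v$ is a minimum $(v,R\setminus v)$-cut, which in general is not a minimum $p_i,v$-cut. To get single-source minimum cuts one additionally needs the threshold sets $T_j=\{v: \tilde c(v)\ge (1+\delta)^j\}\cup\{p_i\}$ built from $(1\pm\epsilon)$-approximate connectivities, together with the structural fact that if the minimum $p_i,v$-cut is unfriendly then either $v$ or $p_i$ is the unfriendly vertex and all other vertices on the relevant side have connectivity at most $0.8\,\lambda_{p_i,v}$ --- this is what guarantees that at the right scale $j^*$ the isolating cut for $v$ within $T_{j^*}$ coincides with the minimum $p_i,v$-cut. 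Your small-cut regime (essentially an NI-sparsifier computation at cost $\tO(n\tau)$) is sound in spirit and matches the paper's handling of the low-degree case, but without the friendly sparsifier, the terminal-excluding star contraction, and the threshold-set reduction, the argument does not go through.
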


\Cref{theorem:gomory-hu-tree} follows immediately from \Cref{lemma:gomory-hu-tree-reduction,lemma:single-source-min-cut}.
The main technical tool for solving the single-source minimum cut problem
is procedure for (a relaxed version of) the isolating-cuts problem,
which we define next.

\begin{definition}%
In the \emph{isolating-cuts problem},
the input is a graph $G = (V, E)$ and a set $R \subseteq V$, where $|R| \ge 2$,
and the goal is to output for each $v \in R$ some $S_v\subset V$ that is a $(v, R \setminus v)$-minimum cut.
\end{definition}

It was shown in~\cite{LP20,AKT21} how to reduce the isolating-cuts problem
to $O(\log n)$ instances of minimum $s,t$-cut in contracted multigraphs.
The best algorithm for computing minimum $s,t$-cuts
in the cut-query model uses $\tO(n^{4/3}\cdot c^{1/3})$ cut queries,
where $c$ is the value of the cut \cite{RSW18,ASW25},%
\footnote{This bound is not written there explicitly. In fact, the bound stated in \cite{ASW25} is $\tO(n^{5/3}\cdot W)$, where $W$ is the maximum edge weight in $G$, which is only weaker because the value of a minimum $s,t$-cut is at most $nW$.}
but unfortunately in a contracted multigraph $c$ can be as large as $\Omega(n^2)$.
For example, take a random graph $G(n,1/2)$ and contract two random vertex subsets of cardinality $n/4$ into two vertices $s,t$;
the resulting multigraph is likely to have $\Theta(n)$ vertices
and minimum $s,t$-cut of value $\Omega(n^2)$.
Therefore, existing algorithms do not solve such minimum $s,t$-cut instances efficiently.

Our first technical contribution is to introduce a relaxed variant of the isolating-cuts problem, termed \emph{weak isolating cuts},
that suffices for solving the single-source minimum cut problem
yet can be solved efficiently in the cut-query model.%
\footnote{Another relaxation of isolating cuts, aimed at solving global minimum cut, was introduced in \cite{ASW25}.}
In this relaxation, the goal is to compute a minimum $(v,R\setminus v)$-cut
only if it is also a minimum $v,u$-cut for some $u\in R$,
and otherwise any $(v,R\setminus v)$-cut can be returned.
We observe that the existing algorithms for single-source minimum cut work even 
when calls to isolating cuts are replaced with calls to our relaxed version.
Our main technical result is thus to solve the relaxed problem directly,
i.e., without going through minimum $s,t$-cut in contracted multigraphs,
stated as follows. 

\begin{lemma}[Weak Isolating Cuts]
\label{lemma:isolating-cuts}
Given cut-query access to an unweighted graph $G=(V,E)$ on $n$ vertices,
and also a set of vertices $R\subseteq V$ with maximum degree $d=\max_{v\in R} d_G(v)$ and a set of perturbation edges $\tilde{E}$ that guarantee unique minimum $s,t$-cuts,
one can solve the weak isolating-cuts problem on $R$ using $\tO(\min\{nd, n^{7/4}\})$ cut queries.
The algorithm is randomized and  succeeds with probability $1-1/\poly(n)$.
\end{lemma}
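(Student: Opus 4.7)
The argument combines two ingredients: a Li--Panigrahi style reduction from isolating cuts to $O(\log n)$ parallel bipartition subproblems, adapted to exploit the weakened correctness requirement, and the paper's second technical contribution, the $(\tau,\F)$-star contraction, used as a black box for the cut computation per bipartition. We build a single procedure whose parameter $\tau$ is tuned to achieve whichever of the two claimed bounds is smaller.

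First, we sample $t=\Theta(\log n)$ independent uniformly random bipartitions $(A_1,B_1),\ldots,(A_t,B_t)$ of $R$. The weak formulation only requires, for each $v\in R$ whose true minimum $(v,R\setminus v)$-cut $S_v^*$ also realizes a minimum $v,u$-cut for some specific $u\in R$, that the output equal $S_v^*$. A union bound then shows that every such specific pair $(v,u)$ is separated by at least one bipartition with probability $1-1/\poly(n)$, so it suffices to solve each bipartition subproblem and aggregate the $t$ outputs by the standard intersection rule.

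For each bipartition $(A,B)$ the goal is, for every $v\in A$, to recover the minimum $(v,B)$-cut, whose value is at most $d_G(v)\le d$ since $\{v\}$ is always feasible. The obstacle flagged in the technical overview is that contracting $A$ to a super-source and $B$ to a super-sink creates a multigraph whose minimum cut can be $\Omega(nd)$, too large for existing cut-query minimum $s,t$-cut algorithms. Instead we apply the $(\tau,\F)$-star contraction directly to the original graph, which randomly contracts vertices outside a small set of centers (using a skeleton $\F$ of already-revealed edges to guide the contraction) and preserves every cut of value at most $\tau$ with constant probability, at a query cost that grows polynomially in $n$ and $\tau$. The perturbation edges $\tilde E$ guarantee that the minimum cut is unique, so the cut returned by the contracted instance is unambiguously the one to lift back. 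The two claimed bounds arise from two choices of $\tau$: for the $\tO(nd)$ bound we set $\tau=\Theta(d)$, which suffices because every target cut has value at most $d$, and after $\polylog(n)$ independent repetitions the total cost is $\tO(nd)$; for the $\tO(n^{7/4})$ bound we set $\tau=\Theta(n^{3/4})$, the value at which the cost of the star contraction and the cost of solving the residual instance balance, yielding $\tO(n^{7/4})$ per bipartition and hence in total. The algorithm runs whichever of the two configurations uses fewer queries.

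\textbf{Main obstacle.} The most delicate step will be the correctness analysis of the $(\tau,\F)$-star contraction in the multi-target regime: a single execution must, with high probability, simultaneously preserve the target cut of every $v\in R$ whose value is at most $\tau$, rather than a single pre-specified $s,t$-pair as the standard minimum-cut presentation guarantees. We expect to handle this by a union bound over $v\in R$ together with the structural guarantees of the contraction, using uniqueness from $\tilde E$ to make the preserved cut identifiable from the contracted instance; a secondary obstacle is designing the skeleton $\F$ so that the same contraction works for every $v$ on both sides of the bipartition without incurring a further factor of $|R|$.
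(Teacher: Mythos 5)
There is a genuine gap. Your plan rests on the claim that the $(\tau,\F)$-star contraction ``preserves every cut of value at most $\tau$ with constant probability.'' It does not: its preservation guarantee (\Cref{lemma:tau-sigma-star-contraction}) applies only to an \emph{$\alpha$-unfriendly} minimum $s,t$-cut with $s,t\in\F$, and the analysis crucially exploits that the single unfriendly endpoint absorbs a $(1-\alpha)$-fraction of the cut edges while being excluded from contraction, so that $\sum_{v\in H}\cross{v}{}/\deg(v)=O(\alpha\,|\cutedges{\C}{}|/\tau)=O(1)$. For an $\alpha$-friendly minimum $(v,R\setminus v)$-cut the crossing edges can be spread over many contracted vertices each with crossing ratio bounded away from $0$, and the preservation probability $\prod_v(1-\cross{v}{}/\deg(v))$ degrades exponentially. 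The paper handles exactly this complementary case with a second, entirely different tool that your proposal omits: the $(\alpha,w)$-friendly cut sparsifier (\Cref{lemma:friendly-cut-sparsifier}), an expander-decomposition-based contraction recovered with $\tO(\alpha^{-1}n\sqrt{w})$ queries, instantiated at $\alpha=n^{-1/4}$, $w=n$. The final answer is the minimum over the candidates produced by the friendly sparsifier and by $O(\log n)$ independent $(\Theta(n^{3/4}),R)$-star contractions. Without the friendly half, your algorithm can miss the target cut with probability $1-o(1)$.

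Two further points. First, your route to the $\tO(nd)$ bound via $\tau=\Theta(d)$ fails on its own terms: the contracted graph has $\tO((n/\tau)^2+n\tau)$ edges, and for small $d$ the term $(n/d)^2$ dwarfs $nd$ (e.g.\ $d=O(1)$ gives $n^2$ edges to recover). The paper instead builds a $d$-NI sparsifier with $\tO(nd)$ queries when $d\le 100n^{3/4}$, which deterministically preserves all cuts of value at most $d$ and lets one solve the (stronger) isolating-cuts problem offline. Second, the random-bipartition reduction of Li--Panigrahi is unnecessary here and is precisely what reintroduces the contracted-multigraph obstacle the paper is engineered to avoid; the paper computes the minimum $(v,R\setminus v)$-cuts directly in the perturbed contracted graphs (with edges internal to $R$ removed, justified by \Cref{claim:equal-minimum-terminal-cut}), using the weak-correctness requirement only to argue that the relevant cut has at most one $\alpha$-unfriendly vertex, which lies in $\F=R$ and is therefore never contracted.
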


The proof of this lemma, which appears in \Cref{sec:isolating-cuts},
is based on splitting the minimum $(v,R\setminus v)$-cuts
into two sets, of friendly cuts and unfriendly cuts.
Given a cut $C\subseteq V$, let $\cross{v}{C}$ be
the number of edges in $E(C,V\setminus C)$ that are incident to $v$,
and let us omit the subscript when clear from context.
Define the \emph{friendliness ratio} of a vertex $v\in V$
(with respect to $C$) as $1-\cross{v}{}/\deg(v)$,
where $\deg(v)$ is the degree of $v$ in $G$.
The cut $C$ is called \emph{$\alpha$-friendly}
if every vertex $v\in V$ has friendliness ratio $1-\cross{v}{}/\deg(v)\ge \alpha$,
and otherwise it is called \emph{$\alpha$-unfriendly}. 
The case of friendly cuts is handled using a friendly cut sparsifier,
which was first introduced in \cite{AKT22b} and is described next.%
\footnote{A slight difference is that we consider a general parameter $\alpha=\alpha(n)$ instead of fixing $\alpha=0.4$.}

Throughout, a \emph{contraction} of $G=(V,E)$ is a graph $H$
obtained from $G$ by contracting some subsets of vertices.
A \emph{super-vertex} is a vertex in $H$ formed by contracting multiple vertices in $G$. 
Observe that contracting two vertices $u,v\in V$
is equivalent to adding an edge of infinite capacity between them. 
This view is useful because now $H$ has the same vertex set as $G$,
and moreover cuts in $H$ are at least as large as those in $G$
(when comparing the same $S\subset V$).
In the cut-query model,
one can simulate access to a contraction $H$ of the input graph $G$
by replacing each super-vertex in $H$ with the vertices in $V$ that form it.

\begin{definition}%
\label{definition:friendly-cut-sparsifier}
An \emph{$(\alpha,w)$-friendly cut sparsifier} of a graph $G$ is a contraction $H$ of $G$
such that for  every $\alpha$-friendly cut $C\subseteq V$ of $G$ with at most $w$ edges we have $\mintcut_H(C)=\mintcut_G(C)$.
\end{definition}

We show through a straightforward extension of \cite{AKT22b}
an algorithm in the cut-query model that recovers (i.e., reports explicitly)
the edges of an $(\alpha,w)$-friendly cut sparsifier of $G$
using $\tO(\alpha^{-1}n\sqrt{w})$ cut queries.
It can be shown that the contraction resulting from the procedure has at most $\tO(\alpha^{-1}n\sqrt{w})$ edges, and hence $H$ is indeed sparser than $G$.
We will eventually set $\alpha=n^{-1/4}$ and $w=n$,
to obtain a sparsifier with $\tO(n^{7/4})$ edges.
The main technical challenge in the algorithm is to find an expander decomposition of $G$, which we achieve using $\tO(n)$ cut queries.
Afterwards, the friendly sparsifier is obtained by applying a contraction defined by the expander decomposition.
We present an algorithm for constructing a friendly cut sparsifier 
in the following lemma, whose proof appears in \Cref{sec:friendly-cut-sparsifiers}. 

\begin{lemma} [Friendly Cut Sparsifier]
\label{lemma:friendly-cut-sparsifier}
Given cut-query access to an unweighted graph $G$ on $n$ vertices
and parameters $\alpha$ and $w$, 
one can recover all the edges of a friendly $(\alpha,w)$-cut sparsifier of $G$ %
using $\tO(\alpha^{-1}n\sqrt{w})$ cut queries.
The algorithm is randomized and succeeds with probability $1-1/\poly(n)$.
\end{lemma}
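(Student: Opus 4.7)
The plan is to follow the strategy of \cite{AKT22b} adapted to the cut-query model: compute an expander decomposition of $G$, let $H$ be the multigraph obtained by contracting each cluster into a super-vertex, and then explicitly recover all the edges of $H$. A natural choice of the conductance threshold is $\phi \approx \sqrt{w}/(\alpha n)$, because an expander decomposition with conductance $\phi$ leaves at most $\tO(\phi\cdot |E|)=\tO(\phi n^2)$ inter-cluster edges, which for this $\phi$ matches the claimed bound $\tO(\alpha^{-1}n\sqrt{w})$ on $|E(H)|$.

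The first step is to compute this expander decomposition using $\tO(n)$ cut queries. I would do this via a top-down divide-and-conquer: at each node of the recursion tree, either certify that the current induced subgraph $G[V']$ is a $\phi$-expander, or find an approximately balanced cut of conductance $O(\phi\log n)$ and recurse on both sides. The balanced-cut primitive can be implemented in the cut-query model by adapting prior cut-query machinery (e.g., a \BalCutPrune-style routine), and a cut query inside $G[V']$ is simulated by $O(1)$ cut queries in $G$. With $O(\log n)$ recursion depth and $\tO(n)$ queries per level, the total query cost is $\tO(n)$.

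For the sparsifier correctness, it suffices to show that every $\alpha$-friendly cut $C$ with $\mintcut_G(C)\le w$ contracts cleanly in $H$, i.e., no cluster $V_i$ is split between $C$ and $V\setminus C$. Assume toward contradiction that some $V_i$ is split, and let $A \eqdef V_i \cap C$ be the smaller side by volume within $V_i$. The expander property of $V_i$ gives $|E_G(A, V_i\setminus A)| \ge \phi \cdot \text{vol}_{V_i}(A)$, and since $V_i\setminus A \subseteq V\setminus C$ these edges all contribute to $\mintcut_G(C)$. Combining this with the friendliness bound $\sum_{v \in A}\cross{v}{C} \le (1-\alpha)\text{vol}(A)$, and with the fact that vertices in $A$ lose only a small fraction of their volume to inter-cluster edges (a consequence of the decomposition having only $\tO(\phi n^2)$ cross edges in total), one derives $\mintcut_G(C) \gtrsim \alpha\phi\cdot\text{vol}(A)/\polylog(n)$, which with our choice of $\phi$ exceeds $w$ and yields the required contradiction.

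The final step is to recover all edges of $H$ from cut queries. Since $H$ is determined by the multiplicities $|E_G(V_i, V_j)|$ for each pair of clusters, and each such multiplicity equals $\tfrac{1}{2}(\mintcut_G(V_i)+\mintcut_G(V_j)-\mintcut_G(V_i\cup V_j))$, it suffices to identify the $\tO(\alpha^{-1}n\sqrt{w})$ inter-cluster edges and list them together with their endpoints at the original-vertex level. Using standard cut-query edge-recovery (binary search within each cluster pair, as in the recovery techniques of \cite{RSW18}) and processing edges one by one, the total cost is $\tO(|E(H)|)=\tO(\alpha^{-1}n\sqrt{w})$ queries. I expect the main obstacle to lie in the expander decomposition step, since existing expander decomposition algorithms are built around neighbor or edge access, and rebuilding one that truly costs only $\tO(n)$ cut queries requires carefully leveraging cut-query primitives for balanced sparse cuts together with the subgraph-simulation trick.
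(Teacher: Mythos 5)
Your overall architecture (expander decomposition, contract clusters, recover the contracted graph edge-by-edge) matches the paper's, but your correctness argument has a genuine gap, and it is exactly the gap that the paper's extra ``shaving'' step exists to close. You contract each cluster $V_i$ wholesale and argue that a split cluster forces $\mintcut_G(C)\gtrsim \alpha\phi\cdot\vol{A}$ where $A=V_i\cap C$. Even granting that inequality, it contradicts $\mintcut_G(C)\le w$ only when $\vol{A}\gtrsim w/(\alpha\phi)= n\sqrt{w}$ for your choice $\phi\approx\sqrt{w}/(\alpha n)$, and nothing lower-bounds $\vol{A}$: the split part $A$ can be a single vertex of degree $2$, in which case the cut is $\alpha$-friendly, has far fewer than $w$ edges, and is nevertheless destroyed by contracting $V_i$. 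Moreover, your intermediate claim that ``vertices in $A$ lose only a small fraction of their volume to inter-cluster edges'' does not follow from the global bound of $\tO(\phi n^2)$ cross edges; that bound is an aggregate over all vertices, and an individual vertex of $A$ may have essentially all of its edges leaving $V_i$, in which case friendliness gives you no edges from that vertex into $A$ and the expansion of $V_i$ gives you nothing about the cut.

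The paper resolves both issues by first removing from each cluster $H_i$ every vertex $v$ with $d_G(v)<10\alpha^{-1}\sqrt{w}$ or with more than an $\alpha/4$ fraction of its edges leaving $H_i$, and contracting only the shaved clusters $H_i'$. For a surviving vertex $x\in A$, friendliness plus the shaving conditions force at least $(3\alpha/4)\deg(x)\ge\Omega(\sqrt{w})$ edges from $x$ into $A$ itself; simplicity of $G$ then gives $|A|=\Omega(\sqrt{w})$ as a count of \emph{vertices}, which is why the paper runs the decomposition with constant $\phi$ and the \emph{uniform} demand $\dd(v)=\phi^{-1}\sqrt{w}$ (rather than volume): the expander inequality then bounds $\min\{|A|,|V_i\setminus A|\}\le\sqrt{w}$, yielding the contradiction. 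The shaved vertices are not free — their incident edges stay uncontracted — but low-degree shaved vertices contribute at most $O(\alpha^{-1}n\sqrt{w})$ edges and the high-outdegree shaved vertices are charged against the $\tO(n\sqrt{w})$ inter-cluster edges at a rate of $\alpha/4$ per incident edge, again giving $\tO(\alpha^{-1}n\sqrt{w})$. You would need to add this shaving step and the corresponding two-part size accounting to make your argument go through; the edge-recovery step and the $\tO(n)$-query decomposition via a cut sparsifier plus an offline $\BalCutPrune$ routine are otherwise in line with the paper.
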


The primary challenge in proving \Cref{lemma:isolating-cuts} is to handle unfriendly cuts.
Our second technical contribution addresses this by introducing
a new randomized contraction procedure,
called \emph{($\tau,\F$)-star contraction}, 
which aims to contract most of the edges in the graph
while preserving an unfriendly minimum $s,t$-cut for $s,t\in F$
(with constant probability, separately for each pair). 
Previous work used similar procedures for the easier task of
finding a global minimum cut \cite{GNT20,AEGLMN22,KK25},
and their main idea is that the probability of preserving a fixed cut $C\subset V$
is approximately $\prod_{v\in V} \big( 1-\cross{v}{C}/\deg(v) \big)$,
so for instance if only $O(1)$ vertices have small friendliness ratio,
then the cut $C$ is preserved with rather large probability.%
\footnote{For a global minimum cut, the worst-case scenario for that probability can be shown to be when $O(1)$ vertices have friendliness ratio $1/2$ and the rest have friendliness ratio $1$.}
Unfortunately, a minimum $s,t$-cut might be $o(1)$-friendly and preserved with low probability $o(1)$.

Our key insight is that if $C\subseteq V$ is a minimum $s,t$-cut,
then at most one vertex (in fact either $s$ or $t$) %
is $\alpha$-unfriendly, i.e., has friendliness ratio below $\alpha$, 
and thus by excluding it from the contraction
we can leverage the machinery developed previously for global minimum cut. 
A further complication is that the argument why (in effect)
only $O(1)$ vertices have low friendliness ratio
assumes that $|\cutedges{C}{}|/\min_{v\in V}\deg(v)= O(1)$.
This assumption clearly holds for a global minimum cut, %
however the minimum $s,t$-cut value can be much larger than the minimum degree.
To overcome this, we show that
if the contraction excludes vertices of degree $o(|E(C,V\setminus C)|)$,
then again only $O(1)$ vertices have low friendliness ratio.
Unfortunately, this poses another challenge,
because now edges incident to the low-degree vertices are not contracted,
and the number of such edges might be $\Omega(n\cdot |E(C,V\setminus C)|)=\Omega(n^2)$.
Ultimately, we prove that one need only consider the $O(\alpha \cdot |E(C,V\setminus C)|)$ edges of the cut that are not incident to the $\alpha$-unfriendly vertex,
which we indeed exclude from the contraction due to its low friendliness ratio as discussed above.
Therefore, we can indeed set the degree threshold (to exclude contraction)
at $\alpha\cdot |E(C,V\setminus C)|$,
and again argue that only $O(1)$ vertices have low friendliness ratio.

Our contraction procedure is a variant of the $\tau$-star contraction
from \cite{AEGLMN22,KK25}, which is defined as follows.
Let $H=\{v \in V \mid \deg(v)\ge \tau\}$ be the set of high-degree vertices,
and sample a set $\Star\subseteq V$ of so-called \emph{star vertices},
by including each vertex $v\in V$ independently with probability $p=\Theta(\log n/\tau)$.
Then, for each high-degree non-star vertex $u\in H\setminus \Star$,
sample an edge uniformly from $E(u,\Star)$ and contract it (unless $E(u,\Star)=\emptyset$).
It can be shown that the resulting graph $G'$
preserves a fixed global minimum cut with constant probability,
and has $\tO((n/\tau)^2 + n\tau)$ edges with high probability \cite{KK25}.
Our variant introduces an additional feature,  
a prescribed set of ``forbidden'' vertices $F\subseteq V$ that
are always excluded from $H$. 

\begin{definition}%
\label{definition:star-contraction}
\emph{$(\tau,\F)$-star contraction} is a procedure defined for
an unweighted graph $G=(V,E)$ on $n$ vertices, a threshold $\tau\in [n]$,
and a vertex subset $\F\subseteq V$,
and it works as follows:
    \begin{enumerate} \compactify
        \item let $H=\{v \in V\setminus \F \mid \deg(v)\ge \tau\}$ be the set of high-degree vertices not in $\F$;
        \item sample $\Star\subseteq V$ by including each $v \in V$ independently with probability $p=\Theta(\log n/\tau)$; 
        \item for each $u\in H\setminus \Star$, sample an edge uniformly from $E(u,\Star)$ and contract it.
    \end{enumerate}
\end{definition}

To solve the weak isolating-cuts problem,
we apply $(\tau,\F)$-star contraction with parameters $\tau = \Theta(n\alpha)$ and $F=R$
(recall $R$ is given in the input of the isolating-cuts problem).
We now describe at a high level why this contraction preserves with constant probability
a minimum $s,t$-cut for $s,t\in\F$. 
Consider such a cut $\C\subseteq V$ that it is $\alpha$-unfriendly,
and observe that our choice of $\tau$ implies
$|\cutedges{\C}{}| \le\min\set{\deg(s),\deg(t)} \le n \le \tau/(100\alpha)$.
We wish to bound the probability that the cut is not preserved,
i.e., at least one of its edges is contracted. 
Each vertex $v\in H\setminus \Star$ samples one incident edge to be contracted,
and the probability that this edge belongs to the cut $E(\C,V\setminus \C)$
is $\cross{v}{\Star}/d_{\Star}(v)$,
where $\cross{v}{\Star} \coloneqq |E(\C,V\setminus \C) \cap E(v,\Star)|$,
and $d_\Star\eqdef |E(v,\Star)|$.
Therefore,
\begin{equation*}
    \Probability{\text{the cut $\C$ is preserved} \mid \Star}
    \ge \prod_{v\in H} \left(1-\frac{\cross{v}{\Star}}{d_{\Star}(v)}\right)
    ,
\end{equation*}
where by convention $\cross{v}{\Star}/d_{\Star}(v)=0$ whenever $d_{\Star}(v)=0$. 
We bound this expression using the following proposition. %

\begin{proposition}[Proposition 4.1 in \cite{AEGLMN22}]
    \label{proposition:probability-optimization}
    For integer $n\ge 1$ and $0\leq a < 1 \leq b$, define
\begin{align*}
    F(a,b) 
    := 
    \min \qquad
     &\prod_{i\in [n]} (1-x_i)
    \\
     \text{subject to}
    &\sum_{i\in [n]} x_i = b,
    \\
    & 0\le x_i \le a, \qquad \forall i\in [n].
\end{align*}
Then $F(a,b)\ge(1-a)^{\lceil b/a \rceil}$.
\end{proposition}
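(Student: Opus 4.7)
The plan is to exploit the fact that $\log(1-x)$ is concave on $[0,1)$, so the objective $\sum_i \log(1-x_i)$ is a concave function of $(x_1,\ldots,x_n)$. The feasible region $P = \{x \in \mathbb{R}^n : \sum_i x_i = b,\ 0 \le x_i \le a\}$ is a compact convex polytope (assuming feasibility, i.e., $na \ge b$; otherwise the bound is vacuous). Since a concave function attains its minimum over such a set at an extreme point, it suffices to evaluate $\prod_i(1-x_i)$ at the vertices of $P$ and show each value is at least $(1-a)^{\lceil b/a \rceil}$.

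Next I would characterize the vertices of $P$. A vertex is determined by $n$ linearly independent active constraints. The equality $\sum_i x_i = b$ is always active, so at least $n-1$ of the box constraints must also be active. Since for each coordinate at most one of $x_i = 0$ and $x_i = a$ can hold, at least $n-1$ coordinates lie in $\{0, a\}$ at a vertex, and thus at most one coordinate takes a value strictly in the open interval $(0, a)$. Let $k$ be the number of coordinates equal to $a$; the remaining ``free'' coordinate (if it exists) must equal $r := b - ka$, while the other coordinates are $0$.

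Now I would evaluate the objective at such a vertex. The zero coordinates contribute factors of $1$, so
\begin{equation*}
    \prod_{i \in [n]}(1-x_i) = (1-a)^k \cdot (1-r),
\end{equation*}
where we interpret $(1-r) = 1$ if there is no free coordinate (equivalently $r = 0$). Feasibility of the vertex with $r \in [0,a)$ forces $k = \lfloor b/a \rfloor$. If $b/a \in \mathbb{Z}$, then $r = 0$ and the product equals $(1-a)^{b/a} = (1-a)^{\lceil b/a \rceil}$, matching the claimed bound. Otherwise $r \in (0,a)$ and $\lceil b/a \rceil = k+1$, so the inequality
\begin{equation*}
    (1-a)^k (1-r) \ge (1-a)^{k+1}
\end{equation*}
reduces to $1 - r \ge 1 - a$, which holds since $r \le a$.

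The routine parts are the concavity observation and the box-constraint bookkeeping; the only subtle step is the vertex characterization, where one needs to make sure the counting of active constraints is tight and that the single ``fractional'' coordinate is uniquely determined. Once this is established, the rest is a direct calculation separating the integer and non-integer cases of $b/a$.
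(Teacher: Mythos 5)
Your proof is correct. Note that the paper itself gives no proof of this statement---it is imported verbatim as Proposition 4.1 of \cite{AEGLMN22}---so there is no in-paper argument to compare against; the standard proof in the cited source rests on the same core observation you use, namely that an optimizer has at most one coordinate strictly inside $(0,a)$ (there obtained by a local exchange argument: shifting mass between two interior coordinates with their sum fixed decreases the product). Your route via concavity of $\sum_i\log(1-x_i)$ and enumeration of the vertices of the polytope is a clean, self-contained way to reach the same conclusion, and you correctly handle the side conditions (positivity of the factors since $a<1$, vacuousness when $na<b$, and the integer versus non-integer cases of $b/a$).
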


To apply this proposition we need upper bounds that hold with high constant probability
for $\max_{v\in H} \cross{v}{\Star}/d_{\Star}(v)$
and $\sum_{v\in H} \cross{v}{\Star}/d_{\Star}(v)$.
Recall that at most one vertex (in fact either $s$ or $t$) 
is $\alpha$-unfriendly, 
and assume without loss of generality it is $s$.
We can further show (we skip the argument in this overview) that
$\Exp{\cross{v}{\Star}/d_{\Star}(v) \mid s\not\in P}{\Star} \le  2\crossbar{v}{-s}/\deg(v)$,
where $\crossbar{v}{-s}$ is the number of cut edges incident on $v$ not including the edge $(v,s)$ if it exists.

Before providing those upper bounds, let us argue that
with high probability 
no edge incident to $s$ is contracted.
The probability that $s$ is sampled into $\Star$ is $p= \Theta(\log n/\tau) = o(1)$,
where the last equality holds for a large enough $\tau$,
and we thus assume henceforth that $s\notin \Star$ (otherwise the algorithm fails).
Observe that %
only edges in $E(H,\Star)$ can be contracted,
and because $s$ is neither in $P$ nor in $H$ (recall $s\in F$),
it follows that no edge incident to $s$ is contracted, as claimed. 

Let us also bound $\sum_{v\in H}\crossbar{v}{-s}$, which will be used shortly.
Since $s$ is $\alpha$-unfriendly and $\C$ is a minimum $s,t$-cut,
we have $\cross{s}{} > (1-\alpha)\cdot\deg(s)\ge (1-\alpha)\cdot|\cutedges{\C}{}|$.
Therefore, the number of edges in the cut $\C$ that are not incident to $s$
is at most $\alpha\cdot |\cutedges{\C}{}|$,
and furthermore $\sum_{v\in H}\crossbar{v}{-s} \le 2\alpha\cdot |\cutedges{\C}{}|$
because the sum counts each of these edges at most twice.
Altogether, 
\begin{equation*}
    \Exp{\sum_{v\in H} \frac{\cross{v}{\Star}}{d_{\Star}(v)}}{P}
    \le
    2\sum_{v\in H} \frac{\crossbar{v}{-s}}{\deg(v)}
    \le \frac{4\alpha\cdot |\cutedges{\C}{}|}{\tau}
    \le \frac{1}{25}
    , 
\end{equation*}
where the first inequality uses that $\deg(v)\ge\tau$ for every $v\in H$,
and the last one uses $|\cutedges{\C}{}| \le \tau/(100\alpha)$ established above.
Moreover, the summands above are non-negative, 
hence we can bound their maximum by their sum and obtain
$\Exp{\max_{v\in H}\cross{v}{\Star}/d_{\Star}(v)}{P} \le 1/25$. 
We can thus conclude, by Markov's inequality,
that with constant probability these two quantities are indeed bounded.

We can therefore use \Cref{proposition:probability-optimization}
to conclude that the cut $\C$ is preserved with constant probability.
Using arguments similar to the $\tau$-star contraction,
the number of edges in the contracted graph is bounded, with high probability,
by $O((n/\tau)^2 + n\tau)$ (ignoring edges incident to $\F$),
and for our setting $\tau = \Theta(n\alpha)$ and $\alpha=n^{-1/4}$,
this bound becomes $O(1/\alpha^2 + n^2\alpha) = \tO(n^{7/4})$. 
The following lemma states the guarantees of the $(\tau,\F)$-star contraction
procedure more formally,
and its proof appears in \Cref{sec:sigma-tau-star-contraction}.

\begin{lemma}[Guarantees for $(\tau,\F)$-star contraction]
\label{lemma:tau-sigma-star-contraction}
Let $G=(V,E)$ be an unweighted graph on $n$ vertices, $\F\subseteq V$ a subset of excluded vertices, and $\tau\ge \log^2 n$ a parameter.
Now suppose $s,t\in \F$ are such that $\min\{\deg(s),\deg(t)\} \le \tau/(100\alpha)$
and that $\C\subseteq V$ is a minimum $s,t$-cut that is $\alpha$-unfriendly.
Then a $(\tau,\F)$-star contraction of $G$ produces a contraction $G'$
such that with probability $2/5$ no edge of $\C$ is contracted,
and with probability $1 - \poly(n)$ the following hold:
\begin{enumerate} \compactify
\item The number of edges in $G'$ is at most $O((n/\tau + |\F|)^2 + n\tau)$. 
\item The number of edges in $G'$ that are incident to $\Star$
  is at most $O((n/\tau)^2 + n|\F|/\tau)$. 
\end{enumerate}
\end{lemma}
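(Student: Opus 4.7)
The plan is to establish the two guarantees separately.

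\emph{Preservation with probability $2/5$ (Part 1).} Since $\C$ is $\alpha$-unfriendly and any non-endpoint vertex $v$ satisfies $\cross{v}{\C}\le\deg(v)/2$ (otherwise moving $v$ across $\C$ would strictly reduce the cut, contradicting minimality), we may assume without loss of generality that $s$ is the $\alpha$-unfriendly witness. My first step is to note that $s\notin \Star$ with probability $1-o(1)$, using $p=\Theta(\log n/\tau)$ and $\tau\ge \log^2 n$; condition on this event. Because $s\in \F$ is excluded from $H$ by construction, no edge incident to $s$ is ever chosen for contraction. Given $\Star$, the probability that no cut edge of $\C$ is contracted is then exactly $\prod_{v\in H}(1-\cross{v}{\Star}/d_{\Star}(v))$, which I will lower-bound via \Cref{proposition:probability-optimization} with parameters $a=b=1/5$.

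To apply the proposition I must control both $\max_{v\in H}\cross{v}{\Star}/d_{\Star}(v)$ and $\sum_{v\in H}\cross{v}{\Star}/d_{\Star}(v)$. The key inequality is $\Exp{\cross{v}{\Star}/d_{\Star}(v) \mid s\notin \Star}{\Star}\le 2\crossbar{v}{-s}/\deg(v)$ for each $v\in H$: since $\deg(v)\ge \tau$ and $p\tau=\Omega(\log n)$, Chernoff yields $d_{\Star}(v)\ge p\deg(v)/2$ with probability $1-1/\poly(n)$, and on that event $\cross{v}{\Star}$ has conditional mean $p\cdot\crossbar{v}{-s}$. Summing over $v\in H$ and using the $\alpha$-unfriendliness of $s$ to deduce that at most $\alpha|\cutedges{\C}{}|$ cut edges avoid $s$ (since $\cross{s}{\C}\ge(1-\alpha)\deg(s)\ge(1-\alpha)|\cutedges{\C}{}|$), I obtain $\sum_{v\in H}\crossbar{v}{-s}\le 2\alpha|\cutedges{\C}{}|\le \tau/50$, and therefore the expected sum is at most $1/25$. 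Markov's inequality and a union bound then place both the sum and the max below $1/5$ with probability at least $3/5-o(1)$, and \Cref{proposition:probability-optimization} gives a conditional product of at least $4/5$. Combining, the preservation probability is at least $(3/5-o(1))\cdot(4/5)\ge 2/5$.

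\emph{Edge-count bounds (Part 2).} Chernoff gives $|\Star|=\tilde O(n/\tau)$ with probability $1-1/\poly(n)$, since $pn=\Theta(n\log n/\tau)$. I would classify the edges of $G'$ by their endpoints after contraction: (i) edges among super-vertices derived from $\Star$, numbering $O(|\Star|^2)=\tilde O((n/\tau)^2)$; (ii) edges from $\Star$ to $\F$, at most $O(|\Star|\cdot|\F|)=\tilde O(n|\F|/\tau)$; (iii) edges among $\F$, at most $O(|\F|^2)$; and (iv) edges incident to non-$\F$ low-degree vertices (those with $\deg(v)<\tau$ that remain as singletons in $G'$), contributing at most $\sum_{v\in V\setminus(\F\cup H)}\deg(v)\le n\tau$. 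Summing all four groups yields the total edge bound $O((n/\tau+|\F|)^2+n\tau)$, while groups (i) and (ii) alone give the bound $O((n/\tau)^2+n|\F|/\tau)$ for edges incident to $\Star$.

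\emph{Main obstacle.} The technically most delicate step is establishing $\Exp{\cross{v}{\Star}/d_{\Star}(v) \mid s\notin \Star}{\Star}\le 2\crossbar{v}{-s}/\deg(v)$, because the denominator $d_{\Star}(v)$ is random and can a priori be zero. The workaround is to condition on the Chernoff-good event $\{d_{\Star}(v)\ge p\deg(v)/2\}$, which requires $p\deg(v)=\Omega(\log n)$ --- barely satisfied for $v\in H$ since $p\deg(v)\ge p\tau=\Theta(\log n)$ --- and to absorb the failure probability into the $1/\poly(n)$ budget. A secondary subtlety is to choose the hidden constant in $p=\Theta(\log n/\tau)$ large enough that all failure probabilities (Chernoff tails, Markov events, and the event $s\in \Star$) together fit inside the $3/5$ budget, so that the final preservation probability indeed reaches $2/5$.
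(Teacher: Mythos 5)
Your proposal is correct and follows essentially the same route as the paper: isolate the single $\alpha$-unfriendly endpoint $s$, note that $s\in\F$ and $s\notin\Star$ (whp) ensure no edge at $s$ is contracted, bound $\sum_{v\in H}\crossbar{v}{-s}$ by $2\alpha|\cutedges{\C}{}|\le\tau/50$ using the unfriendliness of $s$, apply Markov and \Cref{proposition:probability-optimization}, and count edges of $G'$ via $|\Star|=\tO(n/\tau)$, $\F$, and the low-degree singletons.

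The one place you genuinely diverge is the key inequality $\Exp{\cross{v}{\Star}/d_{\Star}(v)\mid s\notin\Star}{}\lesssim\crossbar{v}{-s}/\deg(v)$. The paper proves this (\Cref{claim:expectation-c-r-ratio}) by conditioning on $d_{\Star}(v)=b$ and observing that, by exchangeability, the sampled neighbor set is then uniform of size $b$, so $\cross{v}{\Star}$ is hypergeometric and the bound holds exactly for every $b>0$, with no concentration needed and no factor of $2$. You instead condition on the Chernoff event $d_{\Star}(v)\ge p\deg(v)/2$ and pay a factor of $2$ plus a $1/\poly(n)$ additive error from the bad event (correctly bounding $X/Y\le 1$ there). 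Your route works — the constants still close since you only need the expected sum below $1/25$ — and it is arguably more robust, but note two small imprecisions: conditioned on the Chernoff event the mean of $\cross{v}{\Star}$ is not exactly $p\cdot\crossbar{v}{-s}$ (you should instead bound $\E[(X/Y)\indic{\text{good}}]\le 2\E[X]/(p\deg(v))$ using the unconditional mean), and the preservation probability given $\Star$ equals the product over $H\setminus\Star$, so $\prod_{v\in H}(1-\cross{v}{\Star}/d_{\Star}(v))$ is a lower bound rather than an exact expression. For the edge count, you should state explicitly (as the paper does) that whp every $v\in H$ has $E(v,\Star)\neq\emptyset$ and is therefore contracted, which is what justifies that the only uncontracted vertices outside $\Star\cup\F$ have degree below $\tau$; your Chernoff bound already supplies this.
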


We conclude with a brief recap of the algorithm for solving the weak isolating-cuts problem on graph $G$ and vertex subset $R$.
The algorithm first sets $\alpha=n^{-1/4}$, $w=n$ and $\tau=\Theta(n\alpha)=\Theta(n^{3/4})$.
It then constructs an $(\alpha,w)$-friendly cut sparsifier $H$ of $G$ using \Cref{lemma:friendly-cut-sparsifier},
and uses its explicit description to find (without additional cut queries),
all the minimum $s,t$-cuts (meaning for all $s,t\in R$) that are $\alpha$-friendly.
Next, to find the minimum $s,t$-cuts that are $\alpha$-unfriendly,
the algorithm applies the $(\tau,R)$-star contraction procedure.
By \Cref{lemma:tau-sigma-star-contraction}, each minimum $(s,R\setminus s)$-cut
that is also a minimum $s,t$-cut for some $t\in R$,
is preserved in the contraction $G'$ with constant probability, 
and this success probability is further amplified to $1-1/\poly(n)$ using $O(\log n)$ repetitions.
Notice that since both steps are based on contractions,
the algorithm's final output can be simply the minimum $s,t$-cut found throughout the steps and repetitions.
The query complexity of both steps is $\tO(n^{7/4})$
by \Cref{lemma:friendly-cut-sparsifier,lemma:tau-sigma-star-contraction}.

\subsection{Related Work}

\paragraph*{Gomory-Hu Tree construction} 
Our algorithm for all-pairs minimum cut is based on constructing a Gomory-Hu tree \cite{GH61}.
It has long been known that a Gomory-Hu tree can be constructed using $n-1$ calls to minimum $s,t$-cut procedure \cite{GH61}.
In recent years, there has been significant progress on this problem, culminating in algorithms that make $O(\polylog n)$ calls to minimum $s,t$-cut procedures and require $\tilde{O}(m)$ time elsewhere \cite{AKT20,JPS21,AKT21,AKT22a,AKT22b,AKLPST22,ALPS23,AKLP+25}.
Over roughly the same time span, there has been significant progress on faster algorithms (in running time, not cut queries) for the minimum $s,t$-cut problem itself,
and currently the best algorithm runs in $m^{1+o(1)}$ time \cite{CKLPGS25}.

\paragraph*{Cut Query Algorithms}
The study of cut-query algorithms was initiated in \cite{RSW18},
motivated by the relation between cut queries and submodular minimization.
and by now a few results are known for this model.
For the global minimum cut problem,
$O(n)$ randomized cut queries suffice in unweighted graphs \cite{RSW18,AEGLMN22}, matching the lower bound \cite{GPRW20, LLSZ21},
and $\tO(n)$ randomized cut queries suffice in weighted graphs \cite{MN20}.
For deterministic algorithms, $\tO(n^{5/3})$ cut queries suffice in unweighted graphs \cite{ASW25}.
For the easier problem of determining whether a graph is connected, it is known that $O(n)$ randomized queries suffice \cite{CL23,LC24}.

Yet another aspect of efficiency in this model is the round complexity,
which refers to the number of parallel rounds of queries required to solve a problem. 
For example, for every $r\ge 1$, a global minimum cut can be found
using $\tO(n^{1+1/r})$ cut queries in $O(r)$ rounds \cite{KK25}.
Additionally, there exists a one round randomized algorithm for connectivity that uses $\tO(n)$ queries \cite{ACK21}.

Meanwhile, for the minimum $s,t$-cut problem,
the known algorithms use $\tO(n^{5/3})$ cut queries \cite{RSW18,ASW25},
leaving a substantial gap to the lower bound of $\Omega(n)$ queries.
Another recent work has examined the problem of approximating the maximum cut using cut queries \cite{PRW24}.

\subsection{Future Work}

\paragraph*{Gap Between All-Pairs Minimum Cut and Minimum $s,t$-Cut}
Existing algorithms for the all-pairs minimum cut problem achieve time complexity that is comparable, namely, within a polylogarithmic factor,
of computing a single minimum $s,t$-cut \cite{AKT20,JPS21,AKT22a,AKT22b,AKLPST22,ALPS23}. 
Our work leaves a substantial gap between these two problems in the cut-query model.

A promising direction for narrowing this gap is developing efficient algorithms for the weighted minimum $s,t$-cut problem in the cut-query model. 
An efficient algorithm for this problem would immediately yield an efficient solution to the isolating-cuts problem, and hence to the all-pairs minimum cut problem using existing techniques.
Interestingly, there is a round-complexity lower bound for solving the minimum $s,t$-cut problem in weighted graphs,
as an algorithm that runs in $r$ rounds requires $\Omega(n^2/r^5)$ cut queries \cite{ACK19}.

\paragraph*{Applicability to Other Computational Models}
One surprising aspect of cut-query algorithms is that they often translate to efficient algorithms in other computational models.
For example, the global minimum cut algorithms of \cite{RSW18,MN20} also yield efficient algorithms for finding a minimum cut in the streaming and sequential models.
One reason for the wide applicability of cut-query algorithms is that they often use the following three-step approach:
(1) apply structural analysis to contract the graph while preserving some desired property;
(2) recover all the edges of the contracted graph; and
(3) solve the problem on the smaller graph.
This approach is often useful in other computational models
since a contracted graph offers a succinct representation that facilitates faster computation.
Our techniques for the all-pairs minimum cut employ this strategy
and construct two succinct data structures,
namely, a friendly cut sparsifier and a $(\tau,\F)$-star contraction,
that together encode all minimum $s,t$-cuts using only $\tO(n^{7/4})$ edges.
Hence, algorithms that construct these data structures efficiently in other computational models, such as dynamic, streaming, or parallel models,
may yield new algorithms for the minimum $s,t$-cut problem in these settings.

\section{Preliminaries}
\label{sec:preliminaries}
\paragraph{Cut Query Primitives}
Throughout the proof we use the following cut query primitives to find the weight of edges between two sets of vertices, $w(E(S,T))\coloneqq \sum_{e\in E}w_e$.
Notice that when the graph is unweighted then $w(E(S,T))= |E(S,T)|$.
\begin{claim}
  \label{claim:s-t-num-edges}
  Let $S,T \subseteq V$ be two disjoint sets.
  It is possible to find $w(E(S,T))$ using $O(1)$ non-adaptive cut queries.
\end{claim}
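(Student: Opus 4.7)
The plan is to use an inclusion-exclusion identity that expresses $w(E(S,T))$ as a fixed linear combination of three cut values, each of which can be obtained with a single (non-adaptive) cut query.

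First, I would recall how the cut function $\mintcut_G(\cdot)$ decomposes when we look at the cuts induced by $S$, $T$, and $S \cup T$. Since $S$ and $T$ are disjoint, the vertex set partitions into $S$, $T$, and $U \coloneqq V \setminus (S \cup T)$. Edges fall into six types according to which of these three parts their endpoints lie in. The cut $\mintcut_G(S) = w(E(S,V\setminus S))$ counts exactly the $S$--$T$ edges plus the $S$--$U$ edges; the cut $\mintcut_G(T) = w(E(T,V\setminus T))$ counts exactly the $S$--$T$ edges plus the $T$--$U$ edges; and the cut $\mintcut_G(S\cup T) = w(E(S\cup T, V\setminus (S\cup T)))$ counts exactly the $S$--$U$ edges plus the $T$--$U$ edges.

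Adding the first two and subtracting the third, the $S$--$U$ and $T$--$U$ contributions cancel, leaving exactly $2\,w(E(S,T))$. Hence
\begin{equation*}
    w(E(S,T)) = \tfrac{1}{2}\bigl(\mintcut_G(S) + \mintcut_G(T) - \mintcut_G(S\cup T)\bigr).
\end{equation*}
The right-hand side is computed by issuing the three cut queries on $S$, $T$, and $S\cup T$; these queries do not depend on each other's answers, so they are non-adaptive, and there are only three of them, giving the claimed $O(1)$ bound.

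There is no real obstacle here — the only thing to check is that the identity is valid both in the unweighted case (where $w(E(S,T)) = |E(S,T)|$ and cut values are edge counts) and in the weighted setting invoked by the paper's notation $w(E(S,T)) = \sum_{e\in E}w_e$ (the identity is linear in edge weights, so it holds edge-by-edge). I would also note briefly that the construction works equally well when $S$, $T$, or $S\cup T$ equals $\emptyset$ or $V$, since the corresponding cut value is $0$ and the identity degenerates correctly.
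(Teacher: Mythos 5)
Your proof is correct and uses exactly the same identity as the paper, namely $\mintcut_G(S\cup T)=\mintcut_G(S)+\mintcut_G(T)-2w(E(S,T))$, evaluated via the three non-adaptive queries on $S$, $T$, and $S\cup T$. The extra verification of edge cases and the weighted setting is fine but not needed beyond what the paper already states.
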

\begin{proof}
  Observe that $\mintcut_G(S\cup T) = \mintcut_G(S)+\mintcut_G(T)-2w(E(S,T))$.
  Hence, we can find $w(E(S,T))$ using three queries, one for each of $\mintcut_G(S)$, $\mintcut_G(T)$, and $\mintcut_G(S\cup T)$.  
\end{proof}
\begin{claim}
  \label{claim:one-edge-S-T}
  Given an unweighted graph $G$ on $n$ vertices, and two disjoint sets $S,T\subseteq V$, one can sample an edge uniformly from $E(S,T)$ using $O(\log n)$ cut queries.
\end{claim}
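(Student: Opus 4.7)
The plan is to do a balanced binary search that uses \Cref{claim:s-t-num-edges} as a subroutine. First I would compute $k \coloneqq |E(S,T)|$ using $O(1)$ cut queries; if $k=0$ then no edge exists and the procedure reports failure. Otherwise, I would split $S$ arbitrarily into two halves $S_1, S_2$ of almost equal size, compute $|E(S_1,T)|$ and $|E(S_2,T)|$ using $O(1)$ cut queries each, and pick index $i \in \{1,2\}$ with probability $|E(S_i,T)|/k$, then recurse on the pair $(S_i, T)$. After $O(\log|S|)$ levels the first set has shrunk to a single vertex $s^\ast$, and I would then repeat the same halving process on $T$ (always keeping the singleton $\{s^\ast\}$ on the other side), until $T$ is reduced to a single vertex $t^\ast$. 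The procedure outputs the edge $(s^\ast, t^\ast)$.

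For correctness, I would argue by a telescoping product: conditioning on the sequence of chosen halves, the probability that a particular edge $e \in E(S,T)$ is returned equals
\begin{equation*}
\frac{|E(S_{i_1},T)|}{|E(S,T)|} \cdot \frac{|E(S_{i_1 i_2},T)|}{|E(S_{i_1},T)|} \cdots \frac{1}{|E(\{s^\ast\},\{t^\ast\})|}
= \frac{1}{|E(S,T)|},
\end{equation*}
where $S_{i_1 i_2 \cdots}$ is the nested halving containing the endpoint of $e$ in $S$, and analogously for the halving of $T$. Thus the sampled edge is uniform on $E(S,T)$.

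For the query bound, each recursive level spends $O(1)$ cut queries (two invocations of \Cref{claim:s-t-num-edges}, which applies since the sets remain disjoint throughout — we only ever take subsets of the original disjoint $S,T$), and the depth of the recursion is $O(\log|S|) + O(\log|T|) = O(\log n)$, giving $O(\log n)$ cut queries in total.

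I do not expect any serious obstacle here; the only subtlety is ensuring that the sets passed to \Cref{claim:s-t-num-edges} are disjoint at every level, which is automatic since the recursion only restricts $S$ and $T$ to subsets of the original pair. One minor care is handling degenerate cases (empty $E(S_i,T)$ on one side), which are absorbed by the rule ``pick $S_i$ with probability $|E(S_i,T)|/k$'' — an empty side is simply never chosen.
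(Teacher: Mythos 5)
Your proposal is correct and follows essentially the same approach as the paper: a binary search that halves one side, recurses with probability proportional to $|E(S_i,T)|$ computed via \Cref{claim:s-t-num-edges}, and then repeats on the other side, with the same telescoping-product argument for uniformity. The only immaterial difference is the order in which the two sides are refined.
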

\begin{proof}
  At each stage the algorithm randomly partitions the vertex set $T$ into two sets of roughly equal size $T_1,T_2$, and then find $|E(S,T_i)|$ using $O(1)$ cut queries by \Cref{claim:s-t-num-edges}.
  It then recurses on either $T_1$ or $T_2$ with probability proportional to $|E(S,T_i)|$, i.e. $T_1$ is chosen with probability $|E(S,T_1)|/(|E(S,T_1)|+|E(S,T_2)|)$.
  Notice that this process terminates in $O(\log n)$ rounds and yields an endpoint $t\in T$ of the edge, but does not recover a corresponding endpoint $s\in S$.
  To find the other endpoint, the algorithm repeats the process, but now partitioning $S$ to find an edge in $E(S,t)$.
  It is straightforward to see that this indeed yields a uniform edge of $E(S,T)$ in $O(\log n)$ cut queries.
  This concludes the proof.
\end{proof}

\paragraph{Edge Perturbation}
Throughout the paper, we use a perturbation technique to ensure that the minimum $s,t$-cuts in the graph are unique.
We do this by adding an edge set $\tilde{E}$ of all possible edges $(u,v) \in \binom{V}{2}$, where each edge gets a random weight $w(u,v)$ sampled uniformly from $\set{1/n^{10},\ldots,n^{7}/n^{10}}$.
Since those edges are defined externally, and are not a part of the graph, we can access them freely without having to use any cut queries.
Note that this is slightly different from the perturbation technique used in previous work, where the perturbation is applied  only to edges that exist in the graph.
This is impossible to use in the cut-query model, as algorithms do not have access to the edges of the graph, and hence they cannot perturb existing edges.
However, our method adds noise more broadly, and hence only increases the probability that the minimum cuts in the perturbed graph are unique.
\begin{claim}[\cite{BENW16,AKT20,AKT22a}]
  Given an unweighted graph $G=(V,E)$ with $n$ vertices, if one adds a random perturbation sampled uniformly from $\set{1/n^{10},\ldots,n^{7}/n^{10}}$ to all edges in $G$, then with probability at least $1-n^{-3}$ every minimum $s,t$-cut in the perturbed graph $\tilde{G}=(V,E\cup \tilde{E})$ is unique.
\end{claim}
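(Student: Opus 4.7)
The plan is to use the classical Isolation Lemma argument of Mulmuley--Vazirani--Vazirani. A naive union bound over pairs of $s,t$-cuts would be exponential in $n$ and hence useless, so instead I would union bound over triples $(s,t,e)$, where $e\in \binom{V}{2}$ is a potential ``ambiguating'' edge, and for each triple use the principle of deferred decisions on the random weight $w(e)$ to obtain a per-triple failure bound of $n^{-7}$.

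Fix $\{s,t\}\in \binom{V}{2}$ and $e\in \binom{V}{2}$, and condition on all weights $\{w(e') : e'\neq e\}$. Under this conditioning, define
\begin{align*}
  \alpha_{s,t,e} &\eqdef \min\set{w(S) - w(e) \,:\, \text{$S$ is an $s,t$-cut with } e \in E(S, V \setminus S)}, \\
  \beta_{s,t,e} &\eqdef \min\set{w(S) \,:\, \text{$S$ is an $s,t$-cut with } e \notin E(S, V \setminus S)}.
\end{align*}
Both quantities are determined by the conditioning. The minimum-weight $s,t$-cut that includes $e$ has total weight $\alpha_{s,t,e}+w(e)$, while the minimum-weight $s,t$-cut that avoids $e$ has weight $\beta_{s,t,e}$. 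These two weights agree exactly when $w(e) = \beta_{s,t,e} - \alpha_{s,t,e}$, a single fixed value; since $w(e)$ is uniform over $n^7$ possibilities, this bad event $B_{s,t,e}$ has probability at most $n^{-7}$.

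Next I would argue that if the minimum $s,t$-cut in the perturbed graph is not unique for some pair $\{s,t\}$, then $B_{s,t,e}$ is triggered for some edge $e$. Indeed, let $S_1\neq S_2$ be two distinct minimum $s,t$-cuts and take any $v\in S_1\triangle S_2$, say $v\in S_1\setminus S_2$. Since $\tilde E$ is the complete graph on $V$, it contains the edge $e=(v,s)$; because $v,s\in S_1$ while $v\notin S_2$ and $s\in S_2$, this $e$ lies in $E(S_2,V\setminus S_2)\setminus E(S_1,V\setminus S_1)$. Thus both ``$e$ is in the min-cut'' and ``$e$ avoids the min-cut'' are achieved at the same minimum weight, so $B_{s,t,e}$ occurs.

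Finally, a union bound over all $\binom{n}{2}$ choices of $\{s,t\}$ and all $\binom{n}{2}$ choices of $e$ gives failure probability at most $\binom{n}{2}^2 \cdot n^{-7} \le n^{-3}$, as claimed. The proof is essentially textbook; the only mildly delicate point is ensuring that an ``isolating'' edge $e$ always exists in $\tilde E$, which works cleanly precisely because $\tilde E$ is taken to be the complete graph on $V$ rather than being restricted to edges of $G$ (as in prior work that assumes direct edge access).
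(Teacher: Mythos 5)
Your proof is correct and is exactly the standard isolation-lemma argument (deferred decision on one edge weight, union bound over triples $(s,t,e)$) that the paper imports by citation from \cite{BENW16,AKT20,AKT22a} without reproving. The one delicate point---that an isolating edge $e=(v,s)$ is guaranteed to exist because the perturbation set $\tilde{E}$ is the complete graph on $V$, unlike the prior-work variant that perturbs only edges present in $G$---is precisely the adaptation the paper's surrounding discussion emphasizes, and you handle it correctly.
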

\begin{corollary}
  \label{corollary:perturbation-edges}
  Let $\tilde{E} = \binom{n}{2}$ be the set of all possible edges in $G$, each with weight sampled uniformly from $\set{1/n^{10},\ldots,n^{7}/n^{10}}$.
  Then, with probability at least $1-n^{-3}$ every minimum $s,t$-cut in the perturbed graph $\tilde{G}=(V,E\cup \tilde{E})$ is unique.
\end{corollary}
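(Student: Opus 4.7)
The plan is to derive this as a direct instantiation of the preceding claim (from \cite{BENW16,AKT20,AKT22a}) via the standard Isolation Lemma argument. The corollary simply specifies that $\tilde{E}$ is taken to be the complete edge set $\binom{V}{2}$, so that the perturbed graph $\tilde{G}$ becomes a weighted complete graph in which each original edge $e\in E$ has weight $1+w(e)$ and each non-edge has weight $w(e)$, with $w(e)$ drawn independently and uniformly from $\set{1/n^{10},\ldots,n^{7}/n^{10}}$.

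First, I would fix a pair $s,t\in V$ and any two distinct $s,t$-cuts $S_1,S_2$. Their weights in $\tilde{G}$ are linear functions of the independent perturbations $\{w(e)\}_{e\in\binom{V}{2}}$, and since $S_1\neq S_2$ the symmetric difference $E(S_1,\overline{S_1})\triangle E(S_2,\overline{S_2})$ is nonempty. Isolating any $e$ in that symmetric difference and conditioning on the remaining perturbations, the equation ``weights of $S_1$ and $S_2$ agree'' pins $w(e)$ to a single value, which the uniform distribution over $n^{7}$ atoms hits with probability at most $1/n^7$.

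Next, I would apply the Isolation Lemma: for a fixed pair $s,t$, the probability that the minimum $s,t$-cut is not unique is at most $|\binom{V}{2}|/n^{7}\le 1/n^{5}$. A union bound over all at most $\binom{n}{2}<n^{2}$ vertex pairs $s,t$ then yields a total failure probability of at most $1/n^{3}$, matching the statement.

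There is no real obstacle: the proof is morally identical to the one of the preceding claim, and the only thing to verify is that the range $n^{7}$ of perturbation values comfortably absorbs the two nested union bounds (over distinct cut pairs within a fixed $(s,t)$, and over all $(s,t)$). As remarked in the paragraph preceding the corollary, perturbing non-edges as well can only increase the probability of unique minimum cuts relative to perturbing only edges in $E$, so the corollary follows immediately.
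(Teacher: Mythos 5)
Your proof is correct and follows essentially the same route as the paper, which simply invokes the cited claim together with the observation that perturbing all pairs (rather than only edges of $E$) can only help; your self-contained isolation-lemma derivation (ground set $\binom{V}{2}$ of size $<n^2$, weights uniform over $n^7$ atoms, giving non-uniqueness probability $\le n^{-5}$ per pair and $\le n^{-3}$ after the union bound over pairs $s,t$) is exactly the argument behind that claim. One small remark: the pairwise-collision computation in your second paragraph is not what carries the proof (a union bound over the exponentially many pairs of cuts would fail); it is the isolation lemma invoked afterwards that does the work, and it applies here because each pair's weight is its uniform perturbation plus a fixed offset $\mathds{1}_{\{e\in E\}}$.
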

We note that in general we perform most algorithms directly on the graph $G$, and then add the perturbation edges $\tilde{E}$ only for computing the minimum $s,t$-cuts.
Throughout the paper, when we write that an algorithm is given a perturbation edge set $\tilde{E}$, we assume that the edges guarantee unique minimum $s,t$-cuts and that the algorithm fails otherwise.
Finally, note that given a contracted graph $G'=(V',E')$ of $G$ and a cut $S\subseteq V'$ we have that $\mintcut_{\tilde{G}'}(S)=\mintcut_{\tilde{G}}(S)$, where $\tilde{G},\tilde{G'}$ are the perturbed versions of $G,G'$ respectively.
Hence, if no edge of the unique minimum $s,t$-cut in $\tilde{G}$, $\C\subseteq V$, is contracted in $\tilde{G}'$ then $\C$ is also the unique minimum $s,t$-cut in $\tilde{G}'$.

\paragraph{Cut Sparsification}
Our algorithm uses two types of cut sparsifiers.
The first, the Nagamochi-Ibaraki (henceforth NI) sparsifier \cite{NI92} is used to handle small cuts.
Informally, this sparsifier is a subgraph that preserves the value of all cuts of value at most $k$.
\begin{definition}%
A \emph{$k$-NI sparsifier} of an unweighted graph $G=(V,E_G)$ with parameter $k\in[n]$ is a subgraph $H=(V,E_H)$ that satisfies
  \begin{equation*}
    \forall C\subseteq V,
    \qquad
    \min \{ \mintcut_G(C), k \} 
    = \min \{\mintcut_H(C), k \}
    .
  \end{equation*}
\end{definition}
\begin{lemma}[Lemma 2.1 of \cite{NI92}]
  Let $G=(V,E)$ be an unweighted graph with $n$ vertices, and for all $k\in[n]$ let $T_k$ be a maximal spanning forest of $G\setminus(\cup_{j<k}T_j)$.
  Then, $H=(V,\cup_{j=1}^k T_j)$ is a $k$-NI sparsifier of $G$.
\end{lemma}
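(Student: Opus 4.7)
The plan is to analyze, for any vertex subset $S$, how the edges of the cut $(S, V\setminus S)$ distribute across the sequence of forests $T_1, T_2, \ldots$. Set $c \eqdef |E_G(S, V\setminus S)|$, $c_j \eqdef |E_{T_j}(S, V\setminus S)|$, and $G_j \eqdef G\setminus \bigcup_{i<j} T_i$, so that $T_j$ is a maximal spanning forest of $G_j$ by hypothesis. The heart of the argument is the structural claim that $c_j \ge 1$ whenever some edge of the cut survives in the residual graph $G_j$.

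To establish the claim, I would take any edge $(u,v)\in E_{G_j}(S, V\setminus S)$ and observe that $u$ and $v$ lie in the same connected component of $G_j$. Since $T_j$ is a maximal spanning forest of $G_j$, it spans each such component, so $u$ and $v$ are also connected in $T_j$. The resulting $u$-$v$ path in $T_j$ must cross the bipartition $(S, V\setminus S)$ at least once, giving $c_j \ge 1$.

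From here a short induction on $j$ yields $\sum_{i\le j} c_i \ge \min(j, c)$ for every $j$. If the partial sum has already reached $c$ the bound is automatic; otherwise at least one cut edge remains in $G_j$, so by the claim $c_j \ge 1$ and the sum grows by at least one. This inequality handles both regimes of the sparsifier definition. When $c \le k$, instantiating at $j = c$ gives $\sum_{i=1}^c c_i \ge c$, which forces equality (since there are only $c$ cut edges in total), so every cut edge lies in $\bigcup_{i=1}^c T_i \subseteq E_H$ and thus $\mintcut_H(S) = c = \mintcut_G(S)$. When $c > k$, instantiating at $j = k$ gives $\sum_{i=1}^k c_i \ge k$, so $\mintcut_H(S) \ge k$. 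In either case $\min\{\mintcut_H(S), k\} = \min\{\mintcut_G(S), k\}$, as required.

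The only delicate point I anticipate is keeping straight which graph the maximality of $T_j$ refers to: $T_j$ is a maximal spanning forest of the residual $G_j$, not of the original $G$. Pairs of vertices that have become disconnected in $G_j$ fall outside the scope of the structural claim, but this is harmless because all edges separating them have already been absorbed into earlier forests $T_i$ with $i < j$. Beyond that, the proof is routine bookkeeping of the inductive count.
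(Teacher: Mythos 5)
Your proof is correct and is the standard argument for this classical result: the key observation that a maximal spanning forest of the residual graph must cross any cut that still has a surviving edge, followed by the counting argument that each of the first $\min\{k,c\}$ forests contributes at least one cut edge. The paper itself states this lemma as a cited result from Nagamochi--Ibaraki without proof, and your argument matches the original one.
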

It is known that a $k$-NI sparsifier can be computed in $O(m)$ time for every $k\in [n]$ \cite{NI92}.
We show how to construct such a sparsifier using $\tO(kn)$ cut queries.
\begin{claim}
  \label{claim:nagamochi-ibaraki-sparsifier}
  Given an unweighted graph $G=(V,E)$ with $n$ vertices, one can construct a $k$-NI sparsifier $H$ of $G$ using $\tO(k n)$ cut queries.
\end{claim}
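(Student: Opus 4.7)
The plan is to construct the $k$ spanning forests $T_1,T_2,\ldots,T_k$ iteratively, spending $\tO(n)$ cut queries on each, for a total of $\tO(kn)$.

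To build a single $T_j$, which by definition is a maximal spanning forest of the residual graph $G_j \eqdef G\setminus(\cup_{i<j}T_i)$, I would run a Boruvka-style algorithm for $O(\log n)$ rounds. Starting with the partition in which each vertex is its own component, in each round I would do the following: for every current component $C$ that still has outgoing edges in $G_j$, sample a uniformly random edge from $E_{G_j}(C,V\setminus C)$ using $O(\log n)$ cut queries via \Cref{claim:one-edge-S-T}; add the sampled edges to $T_j$ and update the components to be the connected components of the edges collected so far.

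A key point is that one can simulate cut queries on the residual graph $G_j$ at no asymptotic overhead. The edges of $\cup_{i<j}T_i$ have all been discovered explicitly by the algorithm, so for any disjoint $S,T\subseteq V$,
\begin{equation*}
    |E_{G_j}(S,T)| \;=\; |E_G(S,T)| \;-\; |E_{\cup_{i<j}T_i}(S,T)|,
\end{equation*}
where the first term costs $O(1)$ cut queries on $G$ by \Cref{claim:s-t-num-edges} and the second is read off from the known edge list without any cut query. Hence \Cref{claim:one-edge-S-T} applies to $G_j$ with the same $O(\log n)$ cost.

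For the round-count analysis, observe that in each round the sampled edges define a functional digraph on the set of live components (those still having outgoing edges in $G_j$) in which every node has out-degree $1$; since there are no self-loops, every weakly connected component of this digraph contains a cycle of length at least $2$, so the number of components after merging is at most half the previous count. Thus $O(\log n)$ rounds suffice, after which no component has an outgoing edge in $G_j$, meaning the accumulated edges form a maximal spanning forest of $G_j$. Per round there are at most $n$ components and hence at most $n$ sampling calls of $O(\log n)$ cut queries each, giving $O(n\log^2 n)=\tO(n)$ cut queries per forest and $\tO(kn)$ in total. The only real subtlety is the residual-graph simulation together with verifying that the halving argument for Boruvka still works when edges are sampled uniformly rather than chosen by minimum weight, both of which are routine.
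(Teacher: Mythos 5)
Your proposal is correct and reaches the same $\tO(kn)$ bound, but by a genuinely different route for building each forest. The paper grows each maximal spanning forest $T_j$ \emph{sequentially}: repeatedly pick a component $C$ of the current forest, sample one edge of $E_{G_j}(C,V\setminus C)$ via \Cref{claim:one-edge-S-T}, and add it; since the component count drops by one per step, this is $n-1$ steps of $O(\log n)$ queries each. You instead run Bor\r{u}vka-style rounds in which every live component samples an outgoing edge simultaneously, and you argue halving via the functional-digraph/no-self-loop observation; this costs $O(n\log^2 n)$ queries per forest instead of $O(n\log n)$, still $\tO(n)$. Your explicit treatment of simulating cut queries on the residual graph $G_j$ (subtracting the already-known edges of $\cup_{i<j}T_i$) is a point the paper leaves implicit, and it is needed in both versions. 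One small detail you should patch: if two components $A$ and $B$ each sample a (different) edge with both endpoints straddling $A$ and $B$, then adding both sampled edges to $T_j$ closes a cycle, so $T_j$ is no longer a forest and the Nagamochi--Ibaraki lemma does not apply verbatim. The standard fix is to add only a spanning forest of the sampled edges over the contracted components (discarding redundant ones); this costs no queries and does not affect the halving argument. The paper's sequential variant avoids this issue automatically, which is one reason its version is slightly cleaner.
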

\begin{proof}
  We show how to construct a maximal spanning forest $T$ of a given graph $G$ using $\tO(n)$ cut queries.
  Then, removing the edges in $T$ from $G$ and repeating the process $k$ times yields a Nagamochi-Ibaraki sparsifier that preserves all cuts of value at most $k$ using $\tO(kn)$ cut queries.

  To construct a maximal spanning forest $T$ of $G$, use the following procedure.
  Begin by initializing $T$ to have no edges.
  At each step pick some connected component $C$ of $T$, find some edge $e\in E(C,V\setminus C)$ using \Cref{claim:one-edge-S-T}, and add it to $T$.
  Notice that the number of connected components in $T$ decreases by one in each iteration, and hence the procedure terminates after at most $n-1$ iterations.
  This concludes the proof of the claim.
\end{proof}
The second type of cut sparsifier we use is defined as follows.
\begin{definition}%
A \emph{$(1\pm\epsilon)$-cut sparsifier} a graph $G$ is a weighted subgraph $H$
that satisfies
  \begin{equation*}
    \forall S\subseteq V,
    \qquad
    (1-\epsilon)\cdot\mintcut_G(S) 
    \leq \mintcut_H(S) 
    \leq (1+\epsilon)\cdot\mintcut_G(S)
    .
  \end{equation*}
\end{definition}
\begin{theorem}[\cite{RSW18}]
    \label{theorem:cut-sparsifier-construction}
    Given an unweighted graph $G=(V,E)$ with $n$ vertices and quality parameter $\epsilon\in (0,1)$, one can construct a $(1+\epsilon)$-cut sparsifier $H$ of $G$ using $\tO(\epsilon^{-2} n)$ cut queries.
    The algorithm is randomized and succeeds with probability $1-\poly(n)$.
\end{theorem}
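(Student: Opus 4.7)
My plan is to adapt the classical Benczúr-Karger sparsification scheme to the cut-query model, following the approach of \cite{RSW18}. The high-level recipe is that sampling each edge $e$ with probability $p_e = \Theta(\log n / (\epsilon^2 k_e))$, where $k_e$ is the strong connectivity of $e$, and reweighting the kept edges by $1/p_e$, yields a $(1 \pm \epsilon)$-sparsifier of expected size $\tO(n/\epsilon^2)$, using the classical bound $\sum_e 1/k_e = O(n)$. The task is therefore to realize this sampling scheme using only cut queries.

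The first step handles edges of low strong connectivity explicitly. Using \Cref{claim:nagamochi-ibaraki-sparsifier} with $K = \Theta(\log n/\epsilon^2)$, I recover the edges of the Nagamochi-Ibaraki forests $T_1, T_2, \ldots, T_K$ in $\tO(Kn) = \tO(n/\epsilon^2)$ cut queries, identifying a superset of all edges of strong connectivity at most $K$; each such edge is placed in $H$ with its original unit weight. For the residual graph $G' = G \setminus (T_1 \cup \cdots \cup T_K)$, every surviving edge has strong connectivity at least $K$, so the Benczúr-Karger rate collapses to a uniform $p = \Theta(1/K)$ and the contribution of $G'$ to the sparsifier is obtained by uniform Bernoulli sampling with weight $1/p$.

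The second step is to realize this uniform sampling in the cut-query model without direct edge access. I would first compute $m' = |E(G')|$ via \Cref{claim:s-t-num-edges} (subtracting the contribution of the known NI-forest edges), then draw $M \sim \mathrm{Binomial}(m', p)$ and use \Cref{claim:one-edge-S-T} to extract $M$ uniform random edges from $E(G')$ one at a time, bookkeeping each draw so that subsequent cut queries treat the sampled edge as removed. Each sample costs $\tO(1)$ cut queries. By the strong-connectivity bound applied to $G'$, the expected number of samples, and hence the total query cost of this step, is $\tO(n/\epsilon^2)$.

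The main obstacle I anticipate is verifying that the simulated without-replacement procedure is distributionally close enough to independent Bernoulli sampling to invoke the Benczúr-Karger Chernoff concentration, and that combining the explicit NI portion with the sampled residual portion yields a $(1 \pm \epsilon)$ approximation simultaneously on all cuts (not merely those confined to $G'$). Once this is established, a union bound over the $2^n$ cuts, together with the standard strength-based covering of near-minimum cuts, upgrades the pointwise guarantee to a high-probability uniform guarantee, giving the claimed bound of $\tO(n/\epsilon^2)$ cut queries with success probability $1 - 1/\poly(n)$.
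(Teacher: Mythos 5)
The paper does not prove this statement; it is imported verbatim from \cite{RSW18}, so your proposal can only be judged on its own terms, and on those terms there is a genuine gap in the second step. After peeling off $K=\Theta(\log n/\epsilon^2)$ Nagamochi--Ibaraki forests, you sample the residual graph $G'$ uniformly at rate $p=\Theta(1/K)=\Theta(\epsilon^2/\log n)$ and extract each sampled edge via \Cref{claim:one-edge-S-T}. The number of sampled edges is $p\cdot m'=\Theta(\epsilon^2 m'/\log n)$, and the bound $\sum_e 1/k_e=O(n)$ does not control this quantity: that bound yields $\tO(n/\epsilon^2)$ retained edges only when each edge is sampled at its own rate $\Theta(\log n/(\epsilon^2 k_e))$, whereas you have replaced every $k_e$ by the uniform lower bound $K$. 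For a dense graph (say $G=K_n$ with constant $\epsilon$) you would extract $\tilde{\Theta}(n^2)$ edges at $\Theta(\log n)$ queries each, i.e.\ $\tilde{\Theta}(n^2)$ cut queries --- no better than recovering $G$ outright. So the single-level scheme cannot give the claimed $\tO(\epsilon^{-2}n)$ query bound (nor the claimed sparsifier size).

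To repair this you need the full multi-level Bencz\'ur--Karger hierarchy: partition the edges into geometric strength classes $k_e\in[2^i,2^{i+1})$ and sample class $i$ at rate $\tO(1/(\epsilon^2 2^i))$. The real difficulty --- and the actual content of the construction in \cite{RSW18} --- is then to identify, or sample from, the high-strength classes without reading all of their edges, since naively computing $2^i$ NI forests to certify strength $2^i$ already costs $\tO(2^i n)$ queries. Your proposal does not address this at all. A secondary issue: edges outside the first $K$ NI forests are guaranteed only standard edge connectivity at least $K$ between their endpoints, not strong connectivity at least $K$, so even the correctness of uniform sampling on $G'$ requires the connectivity-based sampling theorem (Fung et al.) rather than the strength-based one, or a separate argument. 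The without-replacement concern you flag is, by contrast, minor and standard.
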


\section{Isolating Cuts}
\label{sec:isolating-cuts}
In this section we prove a slightly extended version of \Cref{lemma:isolating-cuts}, stated as follows.
\begin{lemma}
    \label{lemma:isolating-cuts-full}
    Given cut-query access to an unweighted graph $G=(V,E)$ on $n$ vertices, a set of vertices $R\subseteq V$ with maximum degree $d=\max_{v\in R} d_G(v)$, and a set of perturbation edges $\tilde{E}$ that guarantee unique minimum $s,t$-cuts, one can solve the weak isolating-cuts problem using $\tO(\min\{nd, n^{7/4}\})$ cut queries.
    The algorithm is randomized and succeeds with probability $1-1/\poly(n)$.

    Furthermore, if the algorithm is given in addition all the edges incident to vertices of degree at most $100n^{3/4}$, an $(n^{-1/4},n)$-friendly cut sparsifier, and an $n^{3/4}$-NI sparsifier, 
    then it can solve the weak isolating-cuts problem using $\tO(n^{1/2} + |R|n^{1/4})$ cut queries.
\end{lemma}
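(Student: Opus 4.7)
I plan to prove the furthermore (assumed-structures) version first and deduce the main bound from it. In the furthermore setting, for each $v\in R$ the algorithm maintains three candidate $(v,R\setminus v)$-cuts and outputs the one with the smallest $\mintcut_G$-value (verified by one cut query per candidate): (i) the minimum $(v,R\setminus v)$-cut in the given friendly sparsifier $H_f$; (ii) the minimum $(v,R\setminus v)$-cut in the given $n^{3/4}$-NI sparsifier; and (iii) a cut obtained from $O(\log n)$ independent runs of $(\tau,R)$-star contraction with $\tau=n^{3/4}$, each followed by an offline computation of the minimum $(v,R\setminus v)$-cut in the contracted graph. Every minimum $(v,R\setminus v)$-cut $C_v^\star$ that also coincides with some minimum $(v,u)$-cut for $u\in R$ --- the only case that weak isolating cuts must recover --- falls into one of three regimes: if its value is at most $n^{3/4}$ it is preserved exactly in the NI sparsifier; if it is $n^{-1/4}$-friendly then $\mintcut_G(C_v^\star)\le d_G(v)\le n$, so \Cref{definition:friendly-cut-sparsifier} yields preservation in $H_f$; otherwise it is $n^{-1/4}$-unfriendly with value exceeding $n^{3/4}$, matching the hypothesis of \Cref{lemma:tau-sigma-star-contraction} (with $s,t\in R$ and $\min\{\deg(s),\deg(t)\}\le n\le\tau/(100\alpha)$), so it is preserved with constant probability per run and with probability $1-1/\poly(n)$ overall.

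The two sparsifier-based candidates are computed offline with no additional queries. For the $(\tau,R)$-star contraction, \Cref{lemma:tau-sigma-star-contraction}(2) caps the star-incident edges of the contracted graph by $O((n/\tau)^2+n|R|/\tau)=O(n^{1/2}+|R|n^{1/4})$, and I recover exactly these via iterative edge-sampling from each star vertex using \Cref{claim:s-t-num-edges} and \Cref{claim:one-edge-S-T} (simulated on the contraction by expanding super-vertices), at $\tO(1)$ queries per recovered edge. The remaining edges of the contracted graph are either incident to a low-degree vertex (already given) or between two $R$-vertices; the latter contribute a constant offset to every $(v,R\setminus v)$-cut value and hence do not affect which partition is optimal, so they need not be recovered. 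The $O(\log n)$ repetitions preserve the $\tO(n^{1/2}+|R|n^{1/4})$ bound. To obtain the main bound I split on $d$: if $d\le n^{3/4}$, a $d$-NI sparsifier preserves every minimum $(v,R\setminus v)$-cut (whose value is at most $d_G(v)\le d$) and costs $\tO(nd)$ queries via \Cref{claim:nagamochi-ibaraki-sparsifier}; otherwise I build the three structures in $\tO(n^{7/4})$ total queries (by \Cref{lemma:friendly-cut-sparsifier}, \Cref{claim:nagamochi-ibaraki-sparsifier}, and iterative sampling for edges incident to the at most $n$ low-degree vertices) and invoke the furthermore, whose $\tO(n^{5/4})$ overhead is absorbed.

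The main obstacle I anticipate is implementing the $(\tau,R)$-star contraction within the $\tO(n^{1/2}+|R|n^{1/4})$ budget. A naive implementation samples one edge per vertex $u\in H\setminus\Star$ individually, costing $\tO(|H|)$ queries which can be $\tO(n)$ and thus break the bound. My plan is to reorient the enumeration around star vertices: for each $v\in\Star$, use \Cref{claim:s-t-num-edges} and \Cref{claim:one-edge-S-T} to discover edges incident to $v$ in the contracted graph one by one until the residual count drops to zero. The edge to contract at each $u\in H\setminus\Star$ is then selected uniformly from the star-incident edges that the star-side enumeration reveals at $u$. Verifying that this produces the uniform distribution over $E(u,\Star)$ required by \Cref{definition:star-contraction}, and tying the amortized cost to the star-incident edge count of \Cref{lemma:tau-sigma-star-contraction}(2), are the most delicate pieces of the argument and the point where I expect the analysis to require the most care.
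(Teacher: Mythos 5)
Your overall route is the paper's route: handle $\alpha$-friendly cuts via the $(n^{-1/4},n)$-friendly sparsifier, $\alpha$-unfriendly cuts via $O(\log n)$ runs of $(\tau,R)$-star contraction with $\tau=\Theta(n^{3/4})$, discard $R\times R$ edges by the additive-offset argument (this is exactly \Cref{claim:equal-minimum-terminal-cut}), take the argmin of candidates, split the main bound on $d\lessgtr 100n^{3/4}$ via an NI sparsifier, and in the ``furthermore'' recover only the $\tO(n^{1/2}+|R|n^{1/4})$ star-incident edges. (Two small notes: you need $\tau=100n^{3/4}$, not $n^{3/4}$, so that $\tau/(100\alpha)=n\ge\min\{\deg(s),\deg(t)\}$; and the NI-sparsifier candidate is not needed for correctness in the second regime, since the star-contraction lemma applies to every unfriendly minimum $s,t$-cut here regardless of its value.)

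The one step that does not work is your star-side reimplementation of the contraction. For $u\in H\setminus \Star$ you must sample a \emph{uniform} edge of $E_G(u,\Star)$. Enumerating edges incident to $\Star$ in $G$ can reveal $\sum_{v\in\Star}\deg_G(v)=\tO(n^2/\tau)=\tO(n^{5/4})$ edges, far over budget, while enumerating the star-incident edges of the contracted graph $G'$ is circular, since $G'$ is only defined after the contraction; and choosing uniformly among whatever edges happen to be ``revealed at $u$'' does not give the uniform distribution over $E_G(u,\Star)$ that \Cref{definition:star-contraction} and the analysis of \Cref{lemma:tau-sigma-star-contraction} require, unless all of $E_G(u,\Star)$ is revealed. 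The correct accounting is simpler: the sampling step costs $\tO(n)$ queries outright (one call to \Cref{claim:one-edge-S-T} per vertex of $H\setminus\Star$), which is harmlessly absorbed into the $\tO(\min\{nd,n^{7/4}\})$ main bound. For the ``furthermore,'' this $\tO(n)$ cannot be paid per instance; the resolution is that $\Star$ and the per-vertex edge samples do not depend on $R$ except through the exclusion $H=\{v\in V\setminus R:\deg(v)\ge\tau\}$, so one samples them once (for each of the $O(\log n)$ repetitions) over all high-degree vertices and, for a given $R$, simply refrains from contracting the samples of vertices in $R$. In other words, the contraction samples should be treated as shared precomputed data alongside the three structures you list, which is exactly how the lemma is consumed inside \Cref{claim:unfriendly-min-cut}; with that amendment your argument goes through.
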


The proof requires the following claim,
whose proof appears at the end of this section.
\begin{claim}
    \label{claim:equal-minimum-terminal-cut}
    Let $G=(V,E)$ be a perturbed unweighted graph, $R\subseteq V$ be a set of vertices, and $G'=(V,E\setminus (R\times R))$.
    For $v\in R$, denote by $S_v^G$ the minimum $(v,R\setminus v)$-cut in $G$.
    Then $S_v^G=S_v^{G'}$.
\end{claim}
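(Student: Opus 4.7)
The plan is to observe that for any $(v, R\setminus v)$-cut $S$, the total weight of edges in $R\times R$ that cross the cut is a quantity depending only on $v$ (and not on $S$), so subtracting it from every cut value does not affect the argmin.

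More concretely, I would proceed as follows. Fix $v \in R$ and consider any cut $S \subseteq V$ that separates $v$ from $R\setminus v$, i.e., $v \in S$ and $R\setminus v \subseteq V\setminus S$. An edge $(u,w) \in R\times R$ crosses this cut only when one endpoint lies in $S\cap R$ and the other in $(V\setminus S)\cap R$. Since $S\cap R = \{v\}$ and $(V\setminus S)\cap R = R\setminus v$, the crossing $R\times R$ edges are exactly the edges from $v$ to $R\setminus v$. Therefore
\begin{equation*}
  \mintcut_G(S) \;=\; \mintcut_{G'}(S) + w_G\bigl(E(v, R\setminus v)\bigr),
\end{equation*}
where the additive term $w_G(E(v, R\setminus v))$ collects the (perturbed) weight of all edges from $v$ to $R\setminus v$ in $G$ and is independent of the choice of $S$.

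Since the map $S \mapsto \mintcut_G(S)$ and $S \mapsto \mintcut_{G'}(S)$ differ, on the domain of $(v, R\setminus v)$-cuts, by a single constant, they achieve their minimum at the same set $S$. By \Cref{corollary:perturbation-edges} (applied to $G$) the minimizer in $G$ is unique, so $S_v^G$ is well-defined; the same perturbation edges are present in $G'$ between any pair not in $R\times R$, so the uniqueness of the minimizer in $G'$ among $(v, R\setminus v)$-cuts is inherited from the uniqueness in $G$ via the constant-shift identity. Hence $S_v^G = S_v^{G'}$.

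There is no real obstacle here; the only point requiring a moment of care is confirming that the set of \emph{admissible} cuts (those placing $v$ on one side and all of $R\setminus v$ on the other) is identical in $G$ and $G'$ — which is immediate because admissibility depends only on the vertex partition, not on the edge set — and that the constant $w_G(E(v, R\setminus v))$ really is the \emph{entire} contribution of $R\times R$ edges to every such cut, which follows from $|S\cap R| = 1$.
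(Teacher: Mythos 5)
Your proof is correct and follows essentially the same route as the paper's: both arguments observe that the only $R\times R$ edges crossing a $(v,R\setminus v)$-cut are those in $E(v,R\setminus v)$, so the cut values in $G$ and $G'$ differ by a constant independent of the cut, and the (unique) minimizers therefore coincide.
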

\begin{proof}[Proof of \Cref{lemma:isolating-cuts-full}]
    The algorithm for finding the weak isolating cuts is as follows.
    If $d\le 100 n^{3/4}$, known by a preliminary round of $O(n)$ queries, solve the (stronger) isolating-cuts problem by constructing a $d$-NI sparsifier of $G$ using \Cref{claim:nagamochi-ibaraki-sparsifier}, perturb it with $\tilde{E}$, and then find each minimum $(v,R\setminus v)$-cut in the perturbed sparsifier.
    Otherwise, construct an $(n^{-1/4},n)$-friendly cut sparsifier $H$ of $G$ using \Cref{lemma:friendly-cut-sparsifier}, and let $\tilde{H}$ be the perturbed version of $H$ with added edges $\tilde{E}$.
    For each vertex $v\in R$, let $S_v^f\subseteq V$ be the minimum $(v,R\setminus v)$-cut in the perturbed sparsifier $\tilde{H}$.

    Then, the algorithm performs $r=O(\log n)$ repetitions of $(100n^{3/4},R)$-star contraction on $G$ to obtain graphs $\{G_i\}_{i=1}^r$. 
    Denote by $G_i'=(V,E_i')$ the graph obtained from $G_i$ by removing all edges internal to $R$.
    We now explain how to recover the edges of $G_i'$.
    Given two vertex sets $S,T\subseteq V$ one can recover a single edge from $E(S,T)$ using $O(\log n)$ cut queries by \Cref{claim:one-edge-S-T}.
    Therefore, the entire edge set of the graph of $G_i'$ can be recovered using $\tO(|E_i'|)$ cut queries, by iteratively recovering the sets $E(R,V\setminus R)$ and $E(V\setminus R, V\setminus R)$ one edge at a time.
    It then perturbs each $G_i'$ to get a graph $\tilde{G}_i'$ with unique minimum $s,t$-cuts, and sets $S_v^i \subseteq V$ to be the minimum $(v,R\setminus v)$-cut in $\tilde{G}_i'$.
    Finally, the algorithm returns $S_v \gets \arg\min_{A\in \set{S_v^f}\cup\set{S_v^i}_i} \mintcut_G(A)$.

    We now argue that the algorithm returns every minimum $(v,R\setminus v)$-cut that is also a minimum $v,u$-cut for some $u\in R$.
    Fix one such cut $C\subseteq V$.
    Begin by noting that $|\cutedges{C}{H}|,|\cutedges{C}{G_i}|\ge |\cutedges{C}{G}|$ since these graphs are contractions of $G$, therefore it suffices to argue that $C\in\set{S_v^f}\cup\set{S_v^i}_{i=1}^r$ with high probability.
    If $C$ is $n^{-1/4}$-friendly then the algorithm will find it in $H$.
    Otherwise, i.e. $C$ is $n^{-1/4}$-unfriendly, it is preserved in $G_i$ with constant probability by \Cref{lemma:tau-sigma-star-contraction} since $\tau/(100 \alpha) = n\ge  \deg(s),\deg(t)$ where the equality is from $\tau=100n^{3/4},\alpha=n^{-1/4}$.
    Repeating this $O(\log n)$ times, we find that with probability at least $1-n^{-10}$, $C$ is preserved in at least one of the graphs $G_i$.
    Using a union bound over all vertices $w\in R$, the probability that at least one of the graphs $G_i$ preserves the minimum $(w,R\setminus w)$-cut, if it is also a minimum $w,u$-cut for some $u\in R$, is at least $1-n^{-10}|R|\ge 1-n^{-9}$.
    Finally, by \Cref{claim:equal-minimum-terminal-cut} the minimum $(v,R\setminus v)$-cut in $\tilde{G}_i$ is equal to the minimum $(v,R\setminus v)$-cut in $\tilde{G}_i'$ and hence we can find it in the perturbed graph $\tilde{G}_i'$.
    Therefore, using a union bound with the probability that the construction of the friendly sparsifier succeeds, the algorithm returns the minimum $(v,R\setminus v)$-cut with probability at least $1-1/\poly(n)$.

    It remains to analyze the query complexity of the algorithm.
    If $d\le 100n^{3/4}$, then constructing the NI sparsifier uses $\tO(dn)$ cut queries by \Cref{claim:nagamochi-ibaraki-sparsifier}.
    Otherwise, recovering the $(n^{-1/4},n)$-friendly cut sparsifier $H$ takes $O((1/n^{-1/4})n \sqrt{n})=O(n^{7/4})$ cut queries by \Cref{lemma:friendly-cut-sparsifier}.
    Then, performing the star contraction takes $\tO(n)$ cut queries since each vertex samples at most one neighboring vertex.
    To bound the number of queries needed to recover the edges of $G_i'$, notice that each $G_i'$ has at most $O\Big((n^{1/2}+|R|n^{1/4})+n^{7/4} \Big)$ edges by \Cref{lemma:tau-sigma-star-contraction}.
    Therefore, recovering each $G_i'$ takes $\tO(n^{7/4})$ cut queries using \Cref{claim:one-edge-S-T}.
    Hence, the overall query complexity of the algorithm is $\tO(\min\{nd,n^{7/4}\})$.

    Finally, if the algorithm is given all edges incident to vertices of degree at most $n^{3/4}$, an $(n^{-1/4},n)$-friendly cut sparsifier, and an $n^{3/4}$-NI sparsifier then it only needs to recover the edges incident to $\Star$.
    By \Cref{lemma:tau-sigma-star-contraction}, there are $\tO((n^2/\tau^2+|R|(n/\tau)))=\tO(n^{1/2} + |R|n^{1/4})$ such edges, where we used $\tau=\Theta(n^{3/4})$.
    Since recovering a single edge takes $O(\log n)$ cut queries using \Cref{claim:one-edge-S-T}, the overall query complexity of the algorithm is $\tO(n^{1/2} + |R|n^{1/4})$.
\end{proof}

It remains to prove \Cref{claim:equal-minimum-terminal-cut}.
\begin{proof}[Proof of \Cref{claim:equal-minimum-terminal-cut}]
    Let $C$ be a $(v,R\setminus v)$-cut in $G$.
    The edge set of the cut can be decomposed as
    \begin{equation*}
        E(C,V\setminus C)
        = E(v,R)\cup E(v,V\setminus (R\cup C))\cup E(C\setminus\{v\},V\setminus C)
        .
    \end{equation*}
    The second and third sets, $E(v,V\setminus (R\cup C))$ and $E(C\setminus\{v\},V\setminus C)$ are the same in $G$ and in $G'$.
    Furthermore, $E(v,R)$ does not depend on $C$ and hence for every two $(v,R\setminus v)$-cuts $C_1,C_2$ we have $\mintcut_G(C_1)>\mintcut_G(C_2)$ if and only if $\mintcut_{G'}(C_1)>\mintcut_{G'}(C_2)$ since $E(v,R)$ contributes the same additive quantity to both sides.
    Therefore, the minimizer over all $(v,R\setminus v)$-cuts in $G$ is the same as the minimizer over all $(v,R\setminus v)$-cuts in $G'$ (and vice versa).
\end{proof}

\section{\texorpdfstring{$(\tau,\F)$-Star Contraction}{(tau,F)-Star Contraction}}
\label{sec:sigma-tau-star-contraction}
In this section we prove \Cref{lemma:tau-sigma-star-contraction},
which provides the guarantees of the $(\tau,\F)$-star contraction procedure.
We will need the following claim, whose proof is similar to that of \cite[Proposition 4.5]{AEGLMN22} and appears in \Cref{sec:expectation-c-r-ratio}.

\begin{claim}
    \label{claim:expectation-c-r-ratio}
    Let $G=(V,E)$ be a simple $n$-vertex graph, $\C\subseteq E$ be some non-trivial cut of $G$, and $\Uc\eqdef \set{v\in V\mid \text{ $v$ is $\alpha$-unfriendly}}$.
    Then, for any $v\in V$,
    \begin{equation*}
        \Exp{\cross{v}{\Star}/d_{\Star}(v) \mid \Uc \cap \Star = \emptyset}{} \le \frac{\crossbar{v}{-\Uc}}{\deg(v)-|\Uc|}
        ,
    \end{equation*}
    where $\crossbar{v}{-\Uc}$ is the number of edges incident to $v$ that are in $\cutedges{\C}{}$ but not incident to any vertex in $\Uc$.
\end{claim}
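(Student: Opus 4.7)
The plan is to condition on the event $\Uc \cap \Star = \emptyset$ and then apply a standard symmetry argument over the neighbors of $v$ lying outside $\Uc$. Set $N'(v) \eqdef N(v) \setminus \Uc$ (where $N(v)$ is the neighborhood of $v$ in $G$), and let $N'_c(v) \subseteq N'(v)$ be the subset of neighbors joined to $v$ by a cut edge of $\C$, so that $|N'_c(v)| = \crossbar{v}{-\Uc}$ and $|N'(v)| \ge \deg(v) - |\Uc|$. The conditioning is essential but harmless: it forces every star-neighbor of $v$ to lie in $N'(v)$, yet it constrains only coordinates in $\Uc$, so the indicators $\{\mathds{1}[u \in \Star]\}_{u \in V \setminus \Uc}$ remain i.i.d.\ Bernoulli$(p)$ under the conditional law.

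Under the conditioning, I would write
\[
\frac{\cross{v}{\Star}}{d_{\Star}(v)} = \sum_{u \in N'_c(v)} \frac{\mathds{1}[u \in \Star]}{d_{\Star}(v)}
\]
with the convention $0/0 = 0$. The key step is to observe that by exchangeability of the i.i.d.\ indicators over $N'(v)$, the conditional expectation of each summand $\mathds{1}[u \in \Star]/d_{\Star}(v)$ takes a common value $\beta$ for every $u \in N'(v)$. Summing this value over all of $N'(v)$ collapses the associated sum of ratios to $\mathds{1}[d_{\Star}(v) > 0]$, whose expectation is at most $1$, giving $\beta \le 1/|N'(v)|$.

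Restricting the sum instead to the cut-neighbors then gives $\E[\cross{v}{\Star}/d_{\Star}(v) \mid \Uc \cap \Star = \emptyset] = |N'_c(v)| \cdot \beta \le |N'_c(v)|/|N'(v)|$, and the stated bound follows from the two inequalities above. I do not foresee any significant obstacle; the argument is essentially a one-line symmetry computation, and the only care needed is to confirm that the conditioning preserves the i.i.d.\ structure on $V \setminus \Uc$, which is immediate because the event $\Uc \cap \Star = \emptyset$ is measurable with respect to the complementary set of coordinates.
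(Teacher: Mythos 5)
Your proof is correct and follows essentially the same route as the paper's: both condition on $\Uc \cap \Star = \emptyset$, note that the indicators of the remaining neighbors of $v$ stay i.i.d., and exploit their exchangeability over $N(v)\setminus\Uc$. The paper packages this by further conditioning on $d_{\Star}(v)=b$ and invoking the mean of a hypergeometric variable, whereas you sum the exchangeable ratios $\mathds{1}_{\{u\in\Star\}}/d_{\Star}(v)$ to $\mathds{1}_{\{d_{\Star}(v)>0\}}\le 1$ directly --- a marginally cleaner rendering of the same symmetry fact.
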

\begin{proof}[Proof of \Cref{lemma:tau-sigma-star-contraction}]
    Throughout the proof set $p\coloneqq 800\cdot \log n/\tau$ and fix some $\alpha$-unfriendly minimum $s,t$-cut $\C\subseteq V$ of $G$ such that $s,t\in\F$ and $\min\set{\deg(s),\deg(t)}\ge \tau/(100\alpha)$.
    We begin by showing the correctness guarantee of the procedure, i.e. that no edges in $\cutedges{\C}{}$ is contracted with probability $2/5$.
    Denote a set $\Star$ as good, if it satisfies $\max_{v\in H} \cross{v}{\Star}/d_{\Star}(v)\le 1/2$ and $\sum_{v\in H} \cross{v}{\Star}/d_{\Star}(v) \le 1/2$.
    If a set $\Star$ is good, then,
    \begin{align*}
        \Probability{\text{the cut $\C$ is preserved}}
        &\ge 
        \Probability{\text{the cut $\C$ is preserved} \mid \text{$\Star$ is good}} \cdot \Probability{\text{$\Star$ is good}}
        \\
        &\ge \Big(1-\frac{1}{2}\Big)^{\lceil (1/2) /(1/2)\rceil} \cdot \Probability{\text{$\Star$ is good}}
        =  \frac{1}{2} \cdot \Probability{\text{$\Star$ is good}},
    \end{align*}
    where the first inequality is from the law of total probability, and the second is from \Cref{proposition:probability-optimization}.
    Therefore, we now focus on showing that $\Star$ is good with high constant probability.

    We start by bounding the probability that $\sum_{v\in H} \cross{v}{\Star}/d_{\Star}(v) > 1/2$.
    Observe that since $\C$ is a minimum $s,t$-cut, at most one vertex (namely $s$ or $t$) is $\alpha$-unfriendly.
    Henceforth, assume without loss of generality that $s$ is $\alpha$-unfriendly.
    The probability that $s$ is sampled into $\Star$ is $p= \Theta(\log n/\tau) = \Theta(1/\log n)$.
    For the rest of the proof, assume that $\set{s} \not \subseteq \Star$, and that the algorithm fails otherwise.
    Notice that under this assumption no edge incident to $s$ is contracted since the procedure only contracts edges in $H\times \Star$, and $s\not\in \Star, H$ as $H$ does not contain any vertex in $F$.
    By \Cref{claim:expectation-c-r-ratio} we have,
    \begin{align} 
        \Exp{\sum_{v\in H} \frac{\cross{v}{\Star}}{d_{\Star}(v)} \Big| \set{s} \cap \Star = \emptyset}{\Star}
        &= \sum_{v\in H} \Exp{\frac{\cross{v}{\Star}}{d_{\Star}(v)} \Big| \set{s} \cap \Star = \emptyset }{\Star}
        &  \le\sum_{v\in H} \frac{\crossbar{v}{-s}}{\deg(v)-\indic{(s,v)\in E}}
        \label{eq:expectation-sum}
        .
    \end{align}
    Therefore, we need to bound $\sum_{v\in H} \crossbar{v}{-s}$.
    Notice that $\sum_{v\in H} \crossbar{v}{-s}$ is at most twice the number of edges in $\cutedges{\C}{}$ that are not incident to $s$.
    Observe that $|E(\C,V\setminus \C)| \le \min\{\deg(s),\deg(t)\}$ as  $\C$ is a minimum $s,t$-cut and hence,
    \begin{align*}
        \sum_{v\in H} \crossbar{v}{-s} 
        &=2(|E(\C,V\setminus \C)| - |E(\set{s},V\setminus \C)|)
        = 2(|E(\C,V\setminus \C)| - \cross{s}{})
        \\
        &\le 2(\deg(s) - (1-\alpha)\deg(s))
        \le 2\alpha \deg(s) 
        \le \tau/50
        ,
    \end{align*}
    where the first inequality is from the friendliness ratio of $s$ and the last is from $\deg(s)\le \tau/(100\alpha)$ by the theorem statement.
    Furthermore, $\deg(v)\ge \tau$ for every $v\in H$, and therefore
    \begin{equation*}
        \sum_{v\in H} \frac{\crossbar{v}{-s}}{\deg(v)-\indic{(s,v)\in E}}
        \le \frac{\tau/50}{\tau-1}
        \le \frac{1}{20}
        ,
    \end{equation*}
    for every large enough $n$ (and hence $\tau$).
    Plugging back into \Cref{eq:expectation-sum} we have that,
    \begin{align*}
        \Exp{\sum_{v\in H} \frac{\cross{v}{\Star}}{d_{\Star}(v)} \Big| \set{s} \cap \Star = \emptyset}{\Star}
        \le\sum_{v\in H} \frac{\crossbar{v}{-s}}{\deg(v)-\indic{(s,v)\in E}}
        \le  \frac{1}{20}
        ,
    \end{align*}
    and by Markov's inequality we have,
    \begin{equation*}
        \ProbOn{\sum_{v\in H} \frac{\cross{v}{\Star}}{d_{\Star}(v)} \ge \frac{1}{2}}{\Star}
        \le \frac{1}{10}
        .
    \end{equation*}
    Using a union bound on the event that $\set{s}\not \subseteq \Star$ we have that with probability at least $4/5$,  $\sum_{v\in H} \cross{v}{\Star}/d_{\Star}(v) \le 1/2$.
    Notice that when this event occurs then $\max_{v\in H} \cross{v}{\Star}/d_{\Star}(v)\le 1/2$ as well since every term in the sum is non-negative.
    Therefore, the set $\Star$ is good with probability at least $4/5$.
    This concludes the proof of the correctness guarantee of the procedure, i.e., that no edge in  $E(\C,V\setminus \C)$ is contracted with probability at least $2/5$.

    To conclude the proof we provide the size guarantee.
    Partition the vertices of $G'$ into $\Star,\F$ and $L=V\setminus (\Star\cup \F)$.
    Recall that the algorithm samples each vertex into $\Star$ independently with probability $p$.
    By the Chernoff bound (\Cref{theorem:chernoff}), the probability that $d_{\Star}(v)\ge 0.9 \deg(v)\cdot p$ is at least,
    \begin{equation*}
        \Probability{d_{\Star}(v) \ge 720 \deg(v)\log n /\tau} 
        \ge 1-\exp(-(1/10)^2 \deg(v)800\log n /(2\tau)) 
        \ge 1-1/n^4
        ,
    \end{equation*}
    where the last inequality is from $\deg(v)\ge \tau$ as $v\in H$.
    Therefore, $d_{\Star}(v)\ge 0.9 \cdot p \deg(v)$ for all $v\in H$ with probability at least $1-1/n^3$.
    Therefore, every vertex in $H$ is contracted to some $p\in \Star$ with high probability.
    Hence, every vertex $v\in L$ that was not contracted has degree at most $\tau$.
    For the rest of the proof we assume that this event holds.
    
    The total number of edges in the graph is bounded by the number of edges internal to $\Star\cup\F$ plus the total number of edges incident to $L$.
    Since, $|\Star|\le \tO(n/\tau)$ with high probability the number of edges internal to $\Star\cup \F$ is at most $\tO\big((n/\tau +|\F|)^2\big)$.
    Furthermore, since the degree of the vertices of $L$ is bounded by $\tau$, the total number of edges incident on $L$ is at most $O(n\tau)$.
    Hence, the total number of edges in $G'$ is at most $\tO\Big( (n/\tau+|\F|)^2+n\tau \Big)$ with high probability.
    Discounting the edges in $\F\times\F$ we have that the number of edges in $G'$ is at most $\tO\Big( (n/\tau)^2 + |\F|n/\tau + n\tau \Big)$ with high probability following the same argument.
    Similarly, discounting both the edges in $\F\times\F$ and the edges incident to low degree vertices, we have that the number of edges in $G'$ is at most $\tO\Big( (n/\tau)^2 + |\F|n/\tau \Big)$ with high probability.
\end{proof}

\subsection{Proof of Claim \ref{claim:expectation-c-r-ratio}}
\label{sec:expectation-c-r-ratio}

We now prove \Cref{claim:expectation-c-r-ratio},
using similar arguments to \cite[Proposition 4.5]{AEGLMN22}.

\begin{proof}
    Begin by noting that by convention when $d_{\Star}(v)=0$ then the expectation is trivially $0$ and the claim trivially holds.
    For the rest of the proof condition on $d_{\Star}(v)>0$.
    Denote by  $X_1,\ldots,X_{\cross{v}{}},Y_1,\ldots,Y_{\deg(v)-\cross{v}{}}$ the random variables indicating if the edges incident to $v$ are sampled into $\Star$.
    The variables $X_i$ correspond to the edges incident to $v$ that are in $\C$, and the variables $Y_i$ correspond to the edges incident to $v$ that are not in $\C$.

    Let $X = \sum_{i} X_i$ and $Y = X + \sum_{i} Y_i$.
    Notice that the theorem statement is equivalent to showing that,
    \begin{align*}
        \Exp{X/Y \mid Y >0, \Uc \cap \Star = \emptyset}{} 
        \le  \frac{\crossbar{v}{-\Uc}}{\deg(v)-|\Uc|} 
        .
    \end{align*}
    Denote the number of edges incident to $v$ that are also incident to $\Uc$ by $a$.
    Now fix $Y=b$ for some $0<b\le \deg(v)-a$ (since we condition on no vertex of $\Uc$ being sampled into $\Star$) and notice that,
    \begin{align*}
        \Exp{X/Y \mid Y=b, \Uc \cap \Star = \emptyset}{} 
        & = \frac{1}{b} \cdot \Exp{X \mid Y=b, \Uc \cap \Star = \emptyset}{}
        .
    \end{align*}
    Notice that the expectation is upper bounded by the case when all the edges of $v$ that are incident to $\Uc$ are not part of the cut $\C$, i.e. they all correspond to variables in $\set{Y_i}_i$ and not in $\set{X_i}_i$.
    In this case, that $X$ is a hypergeometric random variable corresponding to making $b$ draws from $\deg(v)-a$ edges, of which $\crossbar{v}{-\Uc}$ are marked.
    Therefore, $\Exp{X \mid Y=b, \Uc \cap \Star = \emptyset}{} \le b\cdot \frac{\crossbar{v}{-\Uc}}{\deg(v)-a}$.
    Hence, for every $b$ we have that, $\Exp{X/Y \mid Y=b, \Uc \cap \Star = \emptyset}{} \le \frac{\crossbar{v}{-\Uc}}{\deg(v)-|\Uc|}$, where we used that $a\le |\Uc|$.

    Since this is true for all $b \ge 0$, we find that $\Exp{\cross{v}{\Star}/d_{\Star}(v) \mid \Uc \cap \Star = \emptyset}{} \le \frac{\crossbar{v}{-\Uc}}{\deg(v)-|\Uc|}$.
    This concludes the proof.
\end{proof}

\section{\texorpdfstring{Single-Source Minimum Cut} {Single-Source Minimum Cuts}}
\label{sec:single-source-cuts}
In this section we prove \Cref{lemma:single-source-min-cut},
which is copied here for convenience.

\singleSourceMinCut*

The lemma is based on \Cref{algorithm:single-source-min-cut}.
An important subroutine of the algorithm is \Cref{algorithm:unfriendly-min-cut}, which handles the case of $0.4$-unfriendly minimum $p_i,v$-cuts.
Throughout this section we denote the value of the minimum $s,t$-cut in the graph $G$ by $\lambda_{s,t}(G)$.
The following claim, whose proof is provided at the end of this section, details the guarantees of the unfriendly minimum cut algorithm.
\begin{claim}
    \label{claim:unfriendly-min-cut}
    Given an unweighted graph $G=(V,E)$, a partition of the vertices into sets $V_1,\ldots,V_k$, and a set of pivots $\{p_i\}_{i=1}^k$ such that $p_i\in V_i$, for every $i\in[k], v\in V_i$ \Cref{algorithm:unfriendly-min-cut} returns the minimum $p_i,v$-cut if it is $0.4$-unfriendly, and returns some larger cut otherwise.
    The algorithm uses $\tO(n^{1.75})$ cut queries and succeeds with high probability.
\end{claim}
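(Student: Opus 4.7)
The plan is to implement \Cref{algorithm:unfriendly-min-cut} by combining $(\tau,\F)$-star contraction (\Cref{lemma:tau-sigma-star-contraction}) with the preprocessed weak isolating-cuts routine of \Cref{lemma:isolating-cuts-full}, run on a family of randomly sampled vertex subsets at logarithmically many scales. First, in a preprocessing pass, compute the three auxiliary structures required by the second form of \Cref{lemma:isolating-cuts-full}: an $(n^{-1/4},n)$-friendly cut sparsifier via \Cref{lemma:friendly-cut-sparsifier}, an $n^{3/4}$-NI sparsifier via \Cref{claim:nagamochi-ibaraki-sparsifier}, and the explicit edge set incident to every vertex of degree at most $100n^{3/4}$, for a total cost of $\tO(n^{7/4})$ cut queries. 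Then, for each scale $\ell \in \{0,1,\ldots,\lceil\log_2 n\rceil\}$ and each of $O(\log n)$ independent repetitions, sample $R_\ell$ by including each vertex of $V$ independently with probability $2^{-\ell}$ and adjoining all pivots, and invoke the preprocessed form of \Cref{lemma:isolating-cuts-full} on $R_\ell$ (which internally uses a $(100n^{3/4},R_\ell)$-star contraction with $\alpha = 0.4$). For each $v\in V$, output the smallest recorded cut that separates $v$ from its pivot $p_{i(v)}$.

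For correctness, fix $v\in V_i$ and assume the minimum $p_i,v$-cut $C$ is $0.4$-unfriendly. By the key insight explained in the technical overview, the $0.4$-unfriendly endpoint of $C$ is either $p_i$ or $v$, and (in the regime where \Cref{lemma:tau-sigma-star-contraction} applies) the remaining endpoint satisfies the degree bound $\min\{\deg(p_i),\deg(v)\}\le \tau/(100\alpha)=2.5\,n^{3/4}$. Let $k_v = |\{u\in V\setminus\{p_i\} : \lambda_{u,v}\le \lambda_{p_i,v}\}|$ and pick the scale $\ell^\ast$ with $2^{-\ell^\ast} = \Theta(1/\max(k_v,1))$. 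At this scale, the event that simultaneously $v\in R_{\ell^\ast}$ and no vertex closer to $v$ than $p_i$ lies in $R_{\ell^\ast}$ occurs with probability $\Omega(1)$, so the minimum $(v,R_{\ell^\ast}\setminus v)$-cut in $G$ coincides with $C$. By \Cref{lemma:tau-sigma-star-contraction} the internal star-contraction step preserves $C$ with probability $2/5$, so a single iteration at the right scale succeeds with constant probability, and $O(\log n)$ independent repetitions amplify this to $1-1/\poly(n)$. A union bound over the $O(n^2)$ pivot-vertex pairs preserves the overall high-probability guarantee.

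The query complexity follows by summing preprocessing and main-phase costs. Preprocessing costs $\tO(n^{7/4})$. Each call to the preprocessed form of \Cref{lemma:isolating-cuts-full} on $R_\ell$ costs $\tO(n^{1/2} + |R_\ell|\,n^{1/4})$, and since the expected size of $R_\ell$ is at most $2^{-\ell}n + k$, summing over scales telescopes to $\tO(n + k\log n)$; the main phase then contributes $\tO(n^{5/4} + k\,n^{1/4}) = \tO(n^{7/4})$. The main obstacle will be verifying the degree hypothesis of \Cref{lemma:tau-sigma-star-contraction} in full generality: whenever $C$ is a $0.4$-unfriendly minimum $p_i,v$-cut, we must show that at least one of $\deg(p_i),\deg(v)$ is at most $2.5\,n^{3/4}$; in the residual ``both-endpoints-high-degree'' regime, $0.4$-unfriendliness forces the cut value to exceed $1.5\,n^{3/4}$, and such large cuts can be handled by a separate round of star contraction with an adjusted threshold $\tau'$ chosen so that both the degree hypothesis and the $\tO(n^{7/4})$ edge bound of the contracted graph hold simultaneously. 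A secondary subtlety is the correlation between the outer random subset $R_{\ell^\ast}$ and the internal star sampling inside \Cref{lemma:tau-sigma-star-contraction}, which we resolve by conditioning on $R_{\ell^\ast}$ first so that the two sources of randomness factor independently.
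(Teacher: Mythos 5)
Your approach diverges substantially from the paper's, and it contains a genuine gap in the correctness argument. The paper's \Cref{algorithm:unfriendly-min-cut} does not sample random terminal sets at geometric scales; it builds \emph{deterministic} threshold sets $T_j=\{v\in V_i\mid \tilde c(v)\ge(1+\delta)^j\}\cup\{p_i\}$ from connectivity estimates $\tilde c(v)$ obtained via a $(1\pm\epsilon)$-cut sparsifier, and then invokes the structural facts (\Cref{lemma:unfriendly-v,lemma:unfriendly-p}) that if the minimum $p_i,v$-cut is $0.4$-unfriendly then \emph{every} competing vertex on the relevant side has connectivity at most $0.8\lambda_{p_i,v}(G)$. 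This multiplicative gap is exactly what makes the threshold set $T_{j^*}$ intersect the relevant side of the cut in a single vertex, \emph{deterministically and simultaneously for all $v$}. Your random-sampling scheme cannot deliver the claim's guarantee: at the scale $2^{-\ell^*}=\Theta(1/k_v)$, the event you need already requires $v\in R_{\ell^*}$, which has probability $\Theta(1/k_v)$, so the joint event has probability $O(1/k_v)$, not $\Omega(1)$. After $O(\log n)$ repetitions the success probability for a fixed $v$ is only $O(\log n/k_v)$, which is far from $1-1/\poly(n)$ when $k_v$ is large; yet the claim demands recovering the minimum $p_i,v$-cut for \emph{every} $v\in V_i$. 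This is precisely why the paper replaces random terminal sampling with the unfriendliness-driven threshold construction.

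Two secondary points. First, your ``main obstacle'' about the degree hypothesis of \Cref{lemma:tau-sigma-star-contraction} is based on a misreading: inside \Cref{lemma:isolating-cuts-full} the star contraction is run with $\tau=100n^{3/4}$ and $\alpha=n^{-1/4}$ (not $\alpha=0.4$), so $\tau/(100\alpha)=n$ and the condition $\min\{\deg(s),\deg(t)\}\le\tau/(100\alpha)$ holds vacuously; cuts that are $0.4$-unfriendly but $n^{-1/4}$-friendly are caught by the $(n^{-1/4},n)$-friendly sparsifier inside that routine, so no separate ``both-endpoints-high-degree'' round is needed. Second, your accounting omits that each $T_j$ in the paper is contained in a single part $V_i$, which is what makes $\sum_{i,j}|T_j|n^{1/4}=\tO(n^{5/4})$ after the shared $\tO(n^{7/4})$ preprocessing; with terminal sets spanning all of $V$ this bookkeeping still works, but the correctness failure above is the substantive problem you would need to repair, most naturally by importing the paper's \Cref{lemma:unfriendly-v,lemma:unfriendly-p} and the sparsifier-based thresholding.
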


\begin{algorithm}
    \caption{Unfriendly Minimum $s,t$-Cuts}
    \label{algorithm:unfriendly-min-cut}
    \begin{algorithmic}[1]
        \State \textbf{Input:} An unweighted graph $G=(V,E)$, vertex partition $V_1,\ldots,V_k$, a set of pivots $\{p_i\}_{i=1}^k$, and perturbation edges $\tilde{E}$.
        \Procedure{Single-Source-Unfriendly-Cuts}{$G,\{V_i\},\{p_i\},\tilde{E}$}
            \State $\epsilon,\delta\gets 1/100, r\gets O(\log n)$
            \State $G'\gets $ a quality $(1\pm \epsilon)$-cut sparsifier of $G$ using \Cref{theorem:cut-sparsifier-construction}
            \For{$i\in [k]$}
                \State $\tilde{c}(v)\gets 2\lambda_{p_i,v}(G')/(1-\epsilon)$ for all $v\in V_i$
            \EndFor
            \For{$i\in [k]$}
                \For{$j\in[r]$}
                    \State $T_j \gets \{v\in V_i \mid \tilde{c}(v)\ge (1+\delta)^j\}\cup \set{p_i}$ 
                    \State $\{S_v^j\}_{v\in T_j} \gets $ Isolating-Minimum-Cuts($G_i, T_j,\tilde{E}$)
                \EndFor
                \State $S_v \gets \arg\min_{S\in \{S_v^1,\ldots,S_v^r,V\setminus S_{p_i}^1,\ldots, S_v^r,V\setminus S_{p_i}^r\}} \mintcut_{\tilde{G}}(S)$ for all $v\in V_i$
            \EndFor
            \State \Return $\{S_v\}_{v\in V}$.
        \EndProcedure
    \end{algorithmic}
\end{algorithm}

\begin{algorithm}
    \caption{Single-Source Minimum $s,t$-Cuts}
    \label{algorithm:single-source-min-cut}
    \begin{algorithmic}[1]
        \State \textbf{Input:} An unweighted graph $G=(V,E)$, vertex partition $V_1,\ldots,V_k$, a set of pivots $\{p_i\}_{i=1}^k$, and perturbation edges $\tilde{E}$.
        \Procedure{Single-Source-Min-Cuts}{$G,\{V_i\},\{p_i\},\tilde{E}$}
            \State $H \gets $ a friendly $(2/5, n)$-cut sparsifier of $G$
            \State $\tilde{H} \gets$ perturb $H$ with edges $\tilde{E}$ such that the minimum $s,t$-cuts are unique
            \State $\{S^f_v\}_{v\in V} \gets$ compute the minimum $p_i,v$-cuts in $\tilde{H}$ for all $i\in [k]$ and $v\in V_i$
            \State $\{S^{u}_v\}_{v\in V} \gets$ \textsc{Single-Source-Unfriendly-Cuts}($G,\{V_i\},\{p_i\},\tilde{E}$)
            \State $\tilde{G} \gets$ perturb $G$ with edges $\tilde{E}$ such that the minimum $s,t$-cuts are unique
            \For{$v\in V$}
                \State $S_v \gets \arg\min_{S\in \{S^f_v,S^{u}_v\}} \mintcut_{\tilde{G}}(S)$
            \EndFor
            \State \Return $\{S_v\}_{v\in V}$ \Comment{return the cut value as well}
        \EndProcedure
    \end{algorithmic}
\end{algorithm}

\begin{proof}[Proof of \Cref{lemma:single-source-min-cut}]
    We begin by proving the correctness of \Cref{algorithm:single-source-min-cut}.
    Fix some vertex $v\in V_i$ and a pivot $p_i\in V_i$.
    Notice that by \Cref{lemma:friendly-cut-sparsifier}, we have $\lambda_{p_i,v}(H)\ge \lambda_{p_i,v}(G)$ and similarly by the guarantees of \Cref{claim:unfriendly-min-cut}, we have $\mintcut_G(S^{u}_v)\ge \lambda_{p_i,v}(G)$.
    Furthermore, the minimum $p_i,v$-cut $S_v$ is either $0.4$-friendly, in which case it is found in $H$ by \Cref{lemma:friendly-cut-sparsifier}, or it is $0.4$-unfriendly, in which case it is found by \Cref{claim:unfriendly-min-cut}.
    Hence, the minimum between the cuts $S^f_v$ and $S^{u}_v$ yields the unique minimum $p_i,v$-cut.

    To analyze the query complexity of the algorithm, observe that constructing the $(0.4,n)$-friendly cut sparsifier $H$ requires $\tO(n^{3/2})$ cut queries by \Cref{lemma:friendly-cut-sparsifier}.
    Then, perturbing $H$ with edges $\tilde{E}$, and finding the minimum cuts requires no cut queries.
    By \Cref{claim:unfriendly-min-cut}, the unfriendly minimum cuts procedure uses $\tO(n^{1.75})$ cut queries.
    Therefore, the overall query complexity of the algorithm is $\tO(n^{1.75})$.
    Finally, the algorithm succeeds with high probability since both \Cref{claim:unfriendly-min-cut} and \Cref{lemma:friendly-cut-sparsifier} succeed with high probability, and the perturbation edges $\tilde{E}$ are chosen such that the minimum cuts in the perturbed graph are unique with high probability.
\end{proof}

\subsection{Proof of Claim \ref{claim:unfriendly-min-cut}}
The proof of the claim is similar to that of Theorem 1.4 in \cite{AKT22b}, and requires the following lemmas.
\begin{lemma}
    \label{lemma:unfriendly-v}
    If the minimum $p,v$-cut, $S\subseteq V$ such that $v\in S$, is $0.4$-unfriendly, and $v$ is $0.4$-unfriendly, then for all $v'\in V\setminus \{v\}$, $\lambda_{p,v'}(G)\le 0.8\cdot \lambda_{p,v}(G)$.
\end{lemma}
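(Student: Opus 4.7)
Plan. For each $v' \in V \setminus \{v\}$ I will exhibit an explicit $p,v'$-cut witnessing the desired bound. Write $\lambda \eqdef \lambda_{p,v}(G)$ and $c \eqdef \cross{v}{S}$. The hypothesis that $v$ is $0.4$-unfriendly w.r.t.~$S$ gives $c > 0.6\deg(v)$, and the singleton cut $\{v\}$ (which is a $p,v$-cut since $v \ne p$) forces $\lambda \le \deg(v)$.

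\emph{Case~1: $v' \in S \setminus \{v\}$.} Take $T_1 \eqdef S \setminus \{v\}$; since $p \notin S$ and $v' \in S\setminus\{v\} = T_1$, this is a valid $p,v'$-cut. An edge count gives
\[
|\partial T_1| \;=\; \lambda + \deg(v) - 2c \;<\; \lambda - 0.2\deg(v) \;\le\; 0.8\lambda,
\]
where the middle step uses $c > 0.6\deg(v)$ and the last uses $\deg(v) \ge \lambda$.

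\emph{Case~2: $v' \in V \setminus S$ with $v' \ne p$.} Here the plan is to work with two candidate $p,v'$-cuts, the ``swap'' cut $T_2 \eqdef (S\setminus\{v\}) \cup \{v'\}$ and the singleton $\{v'\}$, and to argue that at least one of them achieves size below $0.8\lambda$. A direct edge decomposition yields
\[
|\partial T_2| \;=\; \lambda + \deg(v) + \deg(v') - 2c - 2|E(v',S)| + 2\,\mathbf{1}[vv' \in E].
\]
Two structural facts will drive the bound. First, since $S \cup \{v'\}$ is itself a $p,v$-cut of size $\lambda + \deg(v') - 2|E(v',S)|$, the minimality of $S$ forces $|E(v',S)| \le \deg(v')/2$. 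Second, the ``at most one $\alpha$-unfriendly vertex per side of a minimum $p,v$-cut'' principle (removing $u$ from or adding $u$ to $S$ would otherwise produce a strictly smaller $p,v$-cut) shows that $v' \in V\setminus S$ with $v'\ne p$ cannot be $0.4$-unfriendly w.r.t.~$S$, giving $|E(v',S)| \le 0.6\deg(v')$. Substituting $c > 0.6\deg(v)$ turns the $\deg(v) - 2c$ contribution to $T_2$ into a negative slack of at least $0.2\deg(v) \ge 0.2\lambda$; the plan is to trade this against the $\deg(v') - 2|E(v',S)| + 2\mathbf{1}[vv'\in E]$ term, and to switch to the singleton $\{v'\}$, of size $\deg(v')$, in the complementary regime where $\deg(v')$ is small.

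\emph{Main obstacle.} The tight sub-regime is $\deg(v) \approx \lambda$ (so the slack from $v$'s unfriendliness is barely $0.2\lambda$) together with $\deg(v') \approx \lambda$ (so the singleton $\{v'\}$ is itself on the threshold of $0.8\lambda$). I expect to handle it by combining the two ingredients above with the upper bound $\deg(v) < 5\lambda/3$, which follows from $c > 0.6\deg(v)$ and $c \le \lambda$, and by doing a careful case split on the size of $|E(v',S)|$ to ensure that whenever the swap cut $T_2$ barely fails, the singleton $\{v'\}$ succeeds. This combination is the step I expect to require the most delicate bookkeeping.
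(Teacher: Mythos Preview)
Your Case~1 is correct and is in fact the whole proof: the lemma as printed has a typo and should restrict to $v'\in S\setminus\{v\}$ rather than $v'\in V\setminus\{v\}$. This is evident from symmetry with the companion \Cref{lemma:unfriendly-p}, which restricts to $v'\in(V\setminus S)\setminus\{p\}$, and from the only place the lemma is invoked (the proof of \Cref{claim:unfriendly-min-cut}), where it is applied solely to vertices $u\in S_v\setminus\{v\}$. The paper does not spell out a proof of the lemma, but your Case~1 computation --- $|\partial(S\setminus\{v\})|=\lambda+\deg(v)-2c<\lambda-0.2\deg(v)\le 0.8\lambda$ --- is exactly the standard argument.

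Case~2 cannot be completed, because the assertion is simply false for $v'\in (V\setminus S)\setminus\{p\}$. Take $G=K_n$: the singleton $S=\{v\}$ is a minimum $p,v$-cut with $\lambda=n-1$, and $v$ is $0.4$-unfriendly since $\cross{v}{S}=\deg(v)$; yet for any $v'\neq v,p$ one has $\lambda_{p,v'}=n-1\not\le 0.8\lambda$. Your own fact~1 already points the wrong way --- it shows $\deg(v')-2|E(v',S)|\ge 0$, so $|\partial T_2|\ge|\partial T_1|$, and the swap never improves on Case~1; meanwhile the singleton $\{v'\}$ helps only when $\deg(v')<0.8\lambda$, which nothing in the hypotheses guarantees. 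The ``delicate bookkeeping'' you anticipate is a genuine obstruction, not a technicality: drop Case~2 entirely.
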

\begin{lemma}
    \label{lemma:unfriendly-p}
    If the minimum $p,v$-cut, $S\subseteq V$ such that $v\in S$, is $0.4$-unfriendly, and $p$ is $0.4$-unfriendly, then for all $v'\in (V\setminus S) \setminus\{p\}$, $\lambda_{p,v'}(G)\le 0.8\cdot \lambda_{p,v}(G)$.
\end{lemma}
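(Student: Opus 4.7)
The plan is to exhibit a concrete $p,v'$-cut whose value is strictly less than $0.8\cdot\lambda_{p,v}(G)$. Since $v\in S$ forces $p\in V\setminus S$, and $v'\in(V\setminus S)\setminus\{p\}$, the natural candidate is to flip $p$ across the cut: set $S'\eqdef S\cup\{p\}$. Then $p\in S'$ while $v'\notin S'$, so $S'$ is a valid $p,v'$-cut and $\lambda_{p,v'}(G)\le|E(S',V\setminus S')|$.

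The key step is to evaluate $|E(S',V\setminus S')|$ in terms of $\lambda_{p,v}(G)$, $\deg(p)$, and $\cross{p}{S}$ (the number of edges incident to $p$ that cross $S$). When $p$ switches to the $S$-side, the $\cross{p}{S}$ edges from $p$ into $S$ become internal and stop crossing, whereas the remaining $\deg(p)-\cross{p}{S}$ edges from $p$ now cross the new cut. Hence
\begin{equation*}
  |E(S',V\setminus S')| \;=\; \lambda_{p,v}(G) - \cross{p}{S} + \bigl(\deg(p)-\cross{p}{S}\bigr) \;=\; \lambda_{p,v}(G) + \deg(p) - 2\cross{p}{S}.
\end{equation*}

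To close the argument I would invoke two simple facts. The $0.4$-unfriendliness of $p$ with respect to $S$ means $\cross{p}{S}/\deg(p) > 0.6$, so $\deg(p)-2\cross{p}{S} < -0.2\deg(p)$. And since $\{p\}$ is itself a $p,v$-cut of value $\deg(p)$, we have $\lambda_{p,v}(G)\le\deg(p)$. Combining,
\begin{equation*}
  \lambda_{p,v'}(G) \;\le\; |E(S',V\setminus S')| \;<\; \lambda_{p,v}(G) - 0.2\deg(p) \;\le\; 0.8\cdot\lambda_{p,v}(G),
\end{equation*}
which gives the claim. There is no substantive obstacle here: the argument is a single cut modification plus the trivial singleton bound. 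The only care required is bookkeeping of which side each vertex occupies, so that $S'$ is indeed a $p,v'$-cut for every $v'\in(V\setminus S)\setminus\{p\}$; importantly, only the unfriendliness of $p$ itself is used, not the $0.4$-unfriendliness of the remaining vertices of $S$.
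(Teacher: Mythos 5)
Your proof is correct, and it is the standard argument for this lemma (the paper itself states it without proof, deferring to the analogous statement in \cite{AKT22b}, where the proof is exactly this vertex-flipping computation). The cut $S'=S\cup\{p\}$ separates $p$ from every $v'\in(V\setminus S)\setminus\{p\}$ simultaneously, your edge-count identity $|E(S',V\setminus S')|=\lambda_{p,v}(G)+\deg(p)-2\cross{p}{S}$ is right, and combining $\cross{p}{S}>0.6\deg(p)$ with the singleton bound $\lambda_{p,v}(G)\le\deg(p)$ gives the claim; your observation that only the unfriendliness of $p$ (not of the whole cut) is needed is also accurate.
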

\begin{proof}[Proof of \Cref{claim:unfriendly-min-cut}]
    Throughout the proof, assume that the sparsifier construction succeeds and that the algorithm fails otherwise.
    Fix some vertex $v\in V_i \setminus \set{p}$ and a pivot $p_i\in V_i$.
    Begin by noting that the minimum $p_i,v$-cut returned by the algorithm is at least the value of the minimum $p_i,v$-cut in $G$ by \Cref{lemma:isolating-cuts-full}, even if the minimum $p_i,v$-cut is friendly.
    We show that if the minimum $p_i,v$-cut is $0.4$-unfriendly, then the algorithm return it with high probability.
    Since by \Cref{lemma:isolating-cuts-full} the weak isolating-cuts procedure returns a cut that is at least the value of the minimum $p_i,v$, it suffices to show that \Cref{algorithm:unfriendly-min-cut} finds the minimum $p_i,v$-cut in at least one of its iterations.

    Denote the unique minimum $p_i,v$-cut by $S_v$.
    Notice that if $T_j \cap (V_i\setminus S_v) = \set{p_i}$ for some iteration $j$ then the minimum $v,T_j\setminus v$ is the minimum $p_i,v$-cut, and similarly if $T_j \cap S_v = \set{v}$.
    Therefore, it remains to show that one of these events occurs in some iteration $j$.
    Let $j^*$ be the index such that $(1+\delta)^{j^*} \le \lambda_{p_i,v}(G') < (1+\delta)^{j^*+1}$.
    We now split the proof into two cases, depending on whether the $p_i$ or $v$ is $0.4$-unfriendly.
    Notice that at most one of them can be $0.4$-unfriendly since we consider a minimum $p_i,v$-cut.

    \textbf{Case 1:} $p_i$ is $0.4$-unfriendly.
    In this case, by \Cref{lemma:unfriendly-p}, for all $u\in (V_i\setminus S_v) \setminus\{p_i\}$, $\lambda_{p_i,u}(G)\le 0.8\cdot \lambda_{p_i,v}(G)$.
    Therefore, $\tilde{c}(u)\le (1+3\epsilon)\lambda_{p_i,u}(G)\le 0.8\cdot (1+3\epsilon)\lambda_{p_i,v}(G)\le \lambda_{p_i,v}/(1+\delta) \le (1+\delta)^j$.
    Therefore, $u\not\in T_j$ and we obtain the desired event that $T_j \cap (V_i\setminus S_v) = \set{p_i}$.

    \textbf{Case 2:} $v$ is $0.4$-unfriendly.
    In this case, by \Cref{lemma:unfriendly-v}, for all $u\in S_v \setminus\{v\}$, $\lambda_{p_i,u}(G)\le 0.8\cdot \lambda_{p_i,v}(G)$.
    Therefore, $\tilde{c}(u)\le (1+3\epsilon)\lambda_{p_i,u}(G)\le 0.8\cdot (1+3\epsilon)\lambda_{p_i,v}(G)\le \lambda_{p_i,v}/(1+\delta) \le (1+\delta)^j$.
    Therefore, $u\not\in T_j$ and we obtain the desired event that $T_j \cap S_v = \set{v}$.

    Hence, the algorithm succeeds in finding the minimum $p_i,v$-cut as long as all the weak isolating-cuts instances succeed.
    Notice that the algorithm solves $k\cdot r = \tO(n)$ instances of the weak isolating-cuts problem, and since each instance succeeds with high probability, the overall success probability of the algorithm is high.

    To conclude the proof we analyze the query complexity of the algorithm.
    The first step of the algorithm is to construct a quality $(1\pm \epsilon)$-cut sparsifier $G'$ of $G$ using \Cref{theorem:cut-sparsifier-construction}, which requires $\tO(n)$ cut queries.
    Then, the algorithm solves $k\cdot r$ instances of the weak isolating-cuts problem.
    To do so efficiently, it recovers the following data from the graph once, and then passes it to all the instances.
    First, it recovers all the edges incident to vertices of degree at most $n^{3/4}$ using $\tO(n^{7/4})$ cut queries using \Cref{claim:one-edge-S-T}.
    The algorithm then constructs a $(n^{-1/4},n)$-friendly cut sparsifier $H$ of $G$ using \Cref{lemma:friendly-cut-sparsifier}, which requires $\tO(n^{7/4})$ cut queries.
    Finally, it constructs an $n^{7/4}$-NI sparsifier of $G$ using \Cref{claim:nagamochi-ibaraki-sparsifier}, which requires $\tO(n^{7/4})$ cut queries.
    Hence, the query complexity of each weak isolating-cuts instance is at most $\tO(n^{1/2} + |T_j|n^{1/4})\le \tO(n^{1/2} + |V_i|n^{1/4})$.
    Since $\sum_i |V_i|=n$ and $k\le n$, the total query complexity of all weak isolating-cut instances is at most $\tO(n^{3/2})$.
    Therefore, the overall query complexity of the algorithm is $\tO(n^{1.75})$.
\end{proof}

\section{Friendly Cut Sparsifiers}
\label{sec:friendly-cut-sparsifiers}
In this section we prove \Cref{lemma:friendly-cut-sparsifier}, showing how to construct a friendly cut sparsifier using $\tO(\alpha^{-1}n\sqrt{w})$ cut queries.
We begin by describing the expander decomposition procedure, which is the main technical tool used for constructing the friendly cut sparsifier.
\subsection{Expander Decomposition}
We begin by introducing the notion of expander decomposition and several useful facts about it.
Let $G=(V,E)$ be some graph, and assume we are given a demand vector $\dd\in \R^{|V|}_+$.
Then, $G$ is a said to be a \emph{$(\phi,\dd)$-expander} if,
\begin{equation*}
    \forall S\subseteq V,
    \qquad
    \Phi_G^{\vec{d}}(S) 
    \eqdef
    \frac{\mintcut_G(S)}{\min(\dd(S),\dd(V\setminus S))} 
    \ge \phi
    .
\end{equation*}
We use the following lemma for finding a decomposition of a graph into a collection of $(\phi,\dd)$-expanders, it is based on a reduction to the $\BalCutPrune$ procedure which has been used for constructing expander decomposition in the deterministic sequential setting \cite{CGLNPS20,LS21}.
The proof of the lemma is provided in \Cref{sec:expander-decomposition}.
\begin{restatable}{lemma}{expanderDecomposition}
    \label{lemma:expander-decomposition}
    Let $G=(V,E,w)$ be a weighted graph with $1\le w_e \le U$ for all $e\in E$, $\dd\in \R^{|V|}_+$ be a demand vector such that $\dd(v)\in\set{0}\cup[1,U]$, and $\phi\in (0,1)$ a parameter.
    There exists an algorithm that partitions $V$ into sets $V_1,\ldots,V_k$ such that, 
    \begin{enumerate}
        \item For every $i\in [k]$, $G[V_i]$ is a $(\phi,\dd_{V_i})$-expander, where $\dd_{V_i}$ is the vector $\dd$ limited to the vertices in $V_i$.
        \item The total weight of inter-cluster edges is at most $O(\phi \dd(V)\log (mU))$.
    \end{enumerate}
    The algorithm uses $\tO(n)$ randomized cut queries and succeeds with probability $1-1/\poly(n)$.
\end{restatable}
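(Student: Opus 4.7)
The plan is to reduce to the deterministic sequential setting by first extracting a cut sparsifier of $G$ and then invoking a standard expander-decomposition algorithm on it without further cut queries. The deterministic algorithms of \cite{CGLNPS20,LS21} build a $(\phi,\dd)$-expander decomposition via repeated calls to $\BalCutPrune$ and access the graph only through cut values, so running them on a sufficiently accurate sparsifier of $G$ requires no additional queries to $G$ itself.

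Concretely, I would first invoke a weighted variant of \Cref{theorem:cut-sparsifier-construction} with quality $\epsilon = 1/10$ to obtain an explicit $(1\pm\epsilon)$-cut sparsifier $H$ of $G$ using $\tO(n)$ cut queries; the weighted case reduces to the unweighted one by bucketing edges into $O(\log U)$ weight classes and sparsifying each class separately, which only adds polylogarithmic factors absorbed into $\tO(\cdot)$. I would then run the sequential deterministic algorithm on $H$ with target expansion parameter $\phi' = (1+4\epsilon)\phi$ and demand vector $\dd$, producing a partition $V_1,\ldots,V_k$. Since $H$ is explicit, this step uses no further cut queries, and the only randomness is in the sparsifier construction, giving overall success probability $1 - 1/\poly(n)$.

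For correctness, I would transfer both guarantees from $H$ to $G$ via the identity
\[
    w(E(A,B)) = \tfrac{1}{2}\bigl(\mintcut(A)+\mintcut(B)-\mintcut(A\cup B)\bigr)
\]
for disjoint $A,B$, which implies that $H$ preserves the weight of every bipartite cut between disjoint vertex subsets of $V$ up to a factor $(1\pm O(\epsilon))$. Applied with $A=S$ and $B = V_i\setminus S$ for arbitrary $S\subseteq V_i$, this propagates the local expansion from $H[V_i]$ to $G[V_i]$ with parameter $\phi$; applied to the inter-cluster edge set, it transfers the total-weight bound from $O(\phi' \dd(V)\log(mU))$ in $H$ to $O(\phi \dd(V)\log(mU))$ in $G$. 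The main obstacle is the weighted-graph adaptation of the cut sparsifier, addressed by the bucketing reduction above; a secondary subtlety is verifying that the sequential algorithm of \cite{CGLNPS20,LS21} accepts a general demand vector with entries in $\set{0}\cup [1,U]$, which should be routine since $\dd$ enters the algorithm only through the definition of expansion and not through the graph structure.
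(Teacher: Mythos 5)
Your reduction has a genuine gap at the step where you transfer the expansion guarantee from $H[V_i]$ back to $G[V_i]$. The identity $w(E(A,B)) = \tfrac{1}{2}(\mintcut(A)+\mintcut(B)-\mintcut(A\cup B))$ is correct, but applying the sparsifier guarantee to each of the three terms separately only yields
\[
  \bigl| w_H(E(A,B)) - w_G(E(A,B)) \bigr| \le \tfrac{\epsilon}{2}\bigl(\mintcut_G(A)+\mintcut_G(B)+\mintcut_G(A\cup B)\bigr),
\]
an \emph{additive} error proportional to the boundary sizes of $A$ and $B$, not a multiplicative $(1\pm O(\epsilon))$ approximation of $w(E(A,B))$. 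When $A\cup B \ne V$ this error can dwarf $w_G(E(A,B))$ entirely: take $V_i=\set{u,v}$ with a single edge $(u,v)$ where both $u$ and $v$ have large degree to $V\setminus V_i$; a $(1\pm\epsilon)$-cut sparsifier of $G$ may drop or arbitrarily reweight the edge $(u,v)$ while still preserving every global cut, so $H[V_i]$ being an expander says nothing about $G[V_i]$. In other words, a global cut sparsifier does not preserve cuts of induced subgraphs, which is exactly what the expansion condition $\Phi_{G[V_i]}^{\dd}(S)\ge\phi$ requires. (Your transfer of the inter-cluster weight bound is fine, since $\sum_i \mintcut(V_i)$ involves only global cuts.)

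The paper sidesteps this by never sparsifying $G$ once and for all. Instead, at each round of the $\BalCutPrune$ recursion it simulates exact cut queries to the induced subgraph $G[W]$ of the current cluster via \Cref{claim:s-t-num-edges} (for $S\subseteq W$, $\mintcut_{G[W]}(S)=w(E(S,W\setminus S))$ costs $O(1)$ queries to $G$), builds a fresh $(1\pm 1/2)$-cut sparsifier of $G[W]$ with $\tO(|W|)$ queries, and uses it only to approximate global cuts of $G[W]$ — for which the sparsifier guarantee genuinely applies. Summing $\tO(|W|)$ over the disjoint active clusters gives $\tO(n)$ queries per round, and the $\BalCutPrune$ potential argument bounds the number of rounds by $O(\log(mU))$, yielding the stated $\tO(n)$ total. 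To repair your argument you would need to re-sparsify each cluster as the recursion proceeds, which is precisely the paper's route.
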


\subsection{Friendly Cut Sparsifier Construction}
In this section we provide a construction of a friendly cut sparsifier.
Recall that a cut $S\subseteq V$ is $\alpha$-friendly if for every vertex $v\in V$ we have $\cross{v}{}/d(v)\le 1-\alpha$.
Note that originally, friendly cut sparsifiers were defined with a fixed $\alpha=0.4$ \cite{AKT22b}.
We show how to implement the algorithm of \cite{AKT22b} using $O(\alpha^{-1}n\sqrt{w})$ cut queries.
The proof requires the following lemma, whose proof follows similar lines to \cite{AKT22b} and is deferred to \Cref{sec:friendly-cut-sparsifier-akt}.
\begin{algorithm}
    \caption{Friendly Cut Sparsifier Construction}
    \label{algorithm:friendly-cut-sparsifier}
    \begin{algorithmic}[1]
        \State \textbf{Input:} An unweighted graph $G=(V,E)$
        \Procedure{Friendly-Cut-Sparsifier}{$G,\alpha$}
            \State $V_1,\ldots,V_k \gets (\phi,\dd)$-expander decomposition of $G$ with $\phi=0.01$ and $\dd(v)=\phi^{-1}\sqrt{w}$ for all $v\in V$
            \For{$i\in [k]$}
                \State $R_i \gets \Big\{ v\in V_i \mid d_G(v)<\max \{ 10\alpha^{-1}\sqrt{w}, 4\alpha\cdot |E_{G}(\set{v}, V\setminus H_i)|\} \Big\}$
                \State $H_i' \gets H_i \setminus R_i$
            \EndFor
        \State $G' \gets$ contract in $G$ each $H_i'$ into a single vertex $v_i$
        \State recover the edges of $G'$ using $O(|E'|)$ cut queries iteratively using \Cref{claim:one-edge-S-T}.
        \State \Return $G'$.
        \EndProcedure
    \end{algorithmic}
\end{algorithm}
\begin{restatable}{lemma}{friendlySparsifierAKT}
    \label{lemma:friendly-sparsifier-AKT}
    Given an unweighted graph $G$ on $n$ vertices, applying \Cref{algorithm:friendly-cut-sparsifier} on $G$ with parameter $\alpha$ and $w$, returns a friendly $(\alpha,w)$-cut sparsifier $G'$ of $G$ with $\tO(\alpha^{-1}n\sqrt{w})$ edges.
\end{restatable}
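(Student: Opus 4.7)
The plan is to verify both defining properties of the contracted graph $G'$ output by \Cref{algorithm:friendly-cut-sparsifier} -- that $G'$ is an $(\alpha,w)$-friendly cut sparsifier and that it has only $\tilde{O}(\alpha^{-1}n\sqrt{w})$ edges -- using the expander decomposition from \Cref{lemma:expander-decomposition} as the main structural tool. I will first invoke that lemma with $\phi = 0.01$ and the uniform demand $\dd(v) = \phi^{-1}\sqrt{w}$, obtaining a partition $V_1,\ldots,V_k$ in which each $G[V_i]$ is a $(\phi,\dd_{V_i})$-expander, with total inter-cluster edge count at most $\tilde{O}(\phi\cdot \dd(V)) = \tilde{O}(n\sqrt{w})$.

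For correctness, I fix an arbitrary $\alpha$-friendly cut $C$ of $G$ with $|E(C,V\setminus C)|\le w$ and argue that every $H_i'$ lies entirely on one side of $C$, so contracting $H_i'$ preserves $\mintcut_G(C)$. For each cluster, set $S_i \eqdef V_i\cap C$ and $T_i \eqdef V_i\setminus C$, assuming without loss of generality $|S_i|\le |T_i|$. The expander guarantee yields $|E_{G[V_i]}(S_i,T_i)| \ge \phi\cdot \dd(S_i) = \sqrt{w}\cdot |S_i|$, and since these edges belong to $E(C,V\setminus C)$, I conclude $|S_i|\le \sqrt{w}$. Suppose for contradiction that some $v \in H_i'\cap S_i$ exists. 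Because $v\notin R_i$, both $d_G(v)\ge 10\alpha^{-1}\sqrt{w}$ and the inter-cluster degree $|E(v, V\setminus V_i)|$ is controlled to be at most a small constant times $\alpha\cdot d_G(v)$, via the second clause of the $R_i$ threshold. The $\alpha$-friendliness of $C$ at $v$ gives $d_G(v) - \cross{v}{C} \ge \alpha\cdot d_G(v)$, and decomposing the non-cut edges at $v$ (which lies in $S_i\subseteq C$) as $|E(v, S_i\setminus v)| + |E(v, C\setminus V_i)|$, the second summand is at most $|E(v, V\setminus V_i)|$ and hence only a small fraction of $\alpha\cdot d_G(v)$. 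Subtracting yields $|E(v, S_i\setminus v)| = \Omega(\alpha\cdot d_G(v)) = \Omega(\sqrt{w})$, contradicting the simple-graph upper bound $|E(v, S_i\setminus v)| \le |S_i|-1 \le \sqrt{w}-1$. Hence $H_i'\subseteq T_i$, and contracting $H_i'$ leaves $\mintcut(C)$ unchanged.

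For the size bound, the edges of $G'$ split into inter-cluster edges of $G$, totalling $\tilde{O}(n\sqrt{w})$ by the decomposition, and edges incident to $R \eqdef \bigcup_i R_i$. Summing degrees in $R$ by cases, the low-degree clause of $R_i$ contributes at most $10\alpha^{-1}\sqrt{w}\cdot n = O(\alpha^{-1}n\sqrt{w})$ in total, while the external-edge clause contributes $O(\alpha^{-1})$ times the total inter-cluster degree at vertices of $R$, which is $O(\alpha^{-1}\cdot n\sqrt{w})$. Combining, $|E(G')| = \tilde{O}(\alpha^{-1}n\sqrt{w})$.

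The main obstacle is calibrating the constants inside the $R_i$ threshold so that the three estimates in the correctness argument -- the expander bound $|S_i|\le \sqrt{w}$, the friendliness-based lower bound $|E(v, S_i\setminus v)| = \Omega(\sqrt{w})$, and the trivial upper bound $|E(v, S_i\setminus v)|\le |S_i|-1$ -- actually collide to yield the desired contradiction. Once these constants are balanced, both the friendly-preservation property and the size bound follow by direct accounting.
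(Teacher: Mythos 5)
Your proposal is correct and follows essentially the same route as the paper: an expander decomposition with uniform demand $\phi^{-1}\sqrt{w}$, the observation that the expander property forces the smaller side of the cut within each cluster to have at most $\sqrt{w}$ vertices, a lower bound of $\Omega(\alpha\deg(v))=\Omega(\sqrt{w})$ on an unshaved vertex's neighbors on that side via friendliness and the two shaving thresholds, a contradiction with simplicity, and the identical three-way edge accounting for the size bound. The only cosmetic difference is that you show the small side of each cluster contains no unshaved vertex (so each $H_i'$ lies on one side of the cut), whereas the paper fixes a hypothetical contracted cut edge and contradicts the bound on $\min\{\dd(L),\dd(R)\}$; the ingredients and constants are the same.
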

\begin{proof}[Proof of \Cref{lemma:friendly-cut-sparsifier}]
    Begin by constructing a $(\phi,\dd)$-expander decomposition of $G$ by applying \Cref{lemma:expander-decomposition} with $\phi=0.01$ and $\dd(v)=\phi^{-1}\sqrt{w}$ for all $v\in V$.
    Notice that this requires $\tO(n)$ cut queries.
    Then, for each expander $H_i$ in the decomposition, remove all vertices $v\in H_i$ such that $d_{G}(v) <  \max \{ 10\alpha^{-1}\sqrt{w}, 4\alpha|E_{G}(\set{v}, V\setminus H_i)|\}$.
    This requires at most $O(n)$ cut queries since one can find $d_G(v),|E_G(\set{v}, T)|$ for every $v\in V$ and $T\subseteq V$ using $O(1)$ cut queries by \Cref{claim:s-t-num-edges}.
    Therefore, computing the shaved expanders $H_i'$ requires at most $\tO(n)$ cut queries.
    To recover the graph $G'=(V',E')$, the algorithms recovers its entire edge set using $\tO(|E'|)$ cut queries by iteratively finding edges using \Cref{claim:one-edge-S-T}.
    Since the number of edges in $G'$ is at most $\tO(\alpha^{-1}n\sqrt{w})$ by \Cref{lemma:friendly-sparsifier-AKT}, we find that the total number of cut queries used by the algorithm is $\tO(\alpha^{-1}n\sqrt{w})$.
    Finally, the success probability of the algorithm is determined solely by the expander decomposition, which succeeds with probability $1-n^{-10}$ by \Cref{lemma:expander-decomposition}.
    This concludes the proof of \Cref{lemma:friendly-cut-sparsifier}.
\end{proof}

{\small
  \bibliographystyle{alphaurl}
  \bibliography{cut-gomory-hu}
} %

\appendix
\section{Concentration Inequalities}
\label{sec:concentration-inequalities}
\begin{theorem}\label[theorem]{theorem:chernoff}
  Let $X_1,\ldots,X_m\in [0,a]$ be independent random variables.
    For every $\delta \in [0,1]$ and $\mu \geq \mathbb{E}\left[\sum_{i=1}^{m}X_i\right]$, we have
        \begin{align}
            \nonumber
            \mathbb{P}\left[
                \left|
                    \sum_{i=1}^{m}X_i - \mathbb{E}\left[\sum_{i=1}^{m}X_i\right]
                \right|
                \geq \delta \mu
            \right]
            \leq
            2\exp
            \left(
                -\frac{\delta^2\mu}{3a}
            \right)
            .
        \end{align}
\end{theorem}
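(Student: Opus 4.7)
The statement is the standard multiplicative Chernoff bound for $[0,a]$-valued independent variables, written with an arbitrary upper bound $\mu \ge \mathbb{E}[\sum_i X_i]$ in place of the expectation itself. The plan is to reduce to the unit-interval case by rescaling and then apply the textbook moment-generating function argument.

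First, I would set $Y_i = X_i/a \in [0,1]$ and $\mu' = \mu/a$, reducing the target to $\Pr[|\sum_i Y_i - \mathbb{E}[\sum_i Y_i]| \ge \delta\mu'] \le 2\exp(-\delta^2\mu'/3)$, for $\mu' \ge \mathbb{E}[\sum_i Y_i]$ and $\delta \in [0,1]$. For the upper tail, I would fix any $t>0$, use the convexity bound $e^{ty} \le 1 + (e^t-1)y$ on $[0,1]$ to obtain $\mathbb{E}[e^{tY_i}] \le \exp((e^t-1)\mathbb{E}[Y_i])$, and multiply across $i$ by independence. Applying Markov to $e^{t\sum_i Y_i}$ then bounds $\Pr[\sum_i Y_i \ge \mathbb{E}[\sum_i Y_i] + \delta\mu']$ by $\exp(\mathbb{E}[\sum_i Y_i](e^t-1-t) - t\delta\mu')$. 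Since $e^t - 1 - t \ge 0$ and $\mathbb{E}[\sum_i Y_i] \le \mu'$, this is at most $\exp(\mu'(e^t - 1 - t(1+\delta)))$. Setting $t = \ln(1+\delta)$ and invoking the elementary inequality $(1+\delta)\ln(1+\delta) - \delta \ge \delta^2/3$ for $\delta \in [0,1]$ yields the upper-tail bound $\exp(-\delta^2\mu'/3)$.

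For the lower tail, I would run the mirror argument with $t<0$. One may assume $\delta\mu' \le \mathbb{E}[\sum_i Y_i]$, since otherwise the event is vacuous as $\sum_i Y_i \ge 0$; the standard computation then yields the even sharper bound $\exp(-\delta^2\mu'/2) \le \exp(-\delta^2\mu'/3)$. Summing the two tail bounds and rewriting in terms of the original $\mu$ and $a$ recovers the stated $2\exp(-\delta^2\mu/(3a))$. There is no real obstacle here; the only point requiring a moment of care is the asymmetric role of $\mu$ versus $\mathbb{E}[\sum_i X_i]$, which is handled by the inequality $e^t-1-t \ge 0$ in the upper tail (allowing substitution of $\mu'$ for the expectation in the correct direction) and by the vacuous-event observation in the lower tail.
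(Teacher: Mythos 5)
Your proof is correct and is the standard moment-generating-function (Chernoff) argument; the paper states this theorem in its appendix as a known concentration inequality without giving a proof, so there is nothing to compare against. The two places that need care are both handled properly: in the upper tail, replacing $\mathbb{E}[\sum_i Y_i]$ by the larger $\mu'$ is legitimate because the coefficient $e^t-1-t$ is nonnegative, and the elementary inequality $(1+\delta)\ln(1+\delta)-\delta\ge\delta^2/3$ does hold on $[0,1]$; in the lower tail, writing $\delta\mu'=\epsilon M$ with $M=\mathbb{E}[\sum_i Y_i]$ and noting $\epsilon^2 M=\delta^2\mu'^2/M\ge\delta^2\mu'$ converts the standard bound $\exp(-\epsilon^2 M/2)$ into the claimed one (you leave this last conversion implicit, but it is immediate).
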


\section{Friendly Cut Sparsifier Proof}
\label{sec:friendly-cut-sparsifier-akt}
In this section, we prove \Cref{lemma:friendly-sparsifier-AKT}.
We begin by restating the theorem for convenience.
\friendlySparsifierAKT*

The proof is based on the following two claims.
Notice that combining these two claims immediately yields \Cref{lemma:friendly-sparsifier-AKT}.
\begin{claim}[Equivalent of Claim 2.3 in \cite{AKT22b}]
    \label{claim:friendly-sparsifier-correctness}
    Let $S\subseteq V$ be an $\alpha$-friendly cut of $G$ with at most $w$ edges.
    Then, the output of \Cref{algorithm:friendly-cut-sparsifier} preserves $S$ (i.e. no edge of $E(S,V\setminus S)$ is contracted).
\end{claim}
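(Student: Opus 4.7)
The plan is to show that for each cluster $H_i$ produced by the expander decomposition, the shaved set $H_i' = H_i \setminus R_i$ lies entirely on one side of the cut $S$. Since the algorithm contracts precisely each $H_i'$ into a single super-vertex, this immediately yields that no edge of $E(S, V \setminus S)$ has both endpoints identified, and hence no cut edge is contracted.

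First I would apply the $(\phi, \dd)$-expander guarantee of $G[H_i]$. The cut edges of $G[H_i]$ crossing $S \cap H_i$ are a subset of $E(S, V\setminus S)$, hence at most $w$, so
\[
    w \;\ge\; \mintcut_{G[H_i]}(S \cap H_i) \;\ge\; \phi \cdot \min\{\dd(S \cap H_i),\, \dd(H_i \setminus S)\}.
\]
Plugging in the uniform demand $\dd(v) = \phi^{-1}\sqrt{w}$ rearranges to
\[
    \min\{|S \cap H_i|,\, |H_i \setminus S|\} \;\le\; \sqrt{w}.
\]
By symmetry, assume without loss of generality that $|S \cap H_i| \le \sqrt{w}$; the reverse case proceeds identically after swapping the roles of $S$ and $V \setminus S$.

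Next I would show that every $v \in S \cap H_i$ is removed into $R_i$. Since $S$ is $\alpha$-friendly, the number of non-cut edges incident to $v$ satisfies $|E(v, S)| \ge \alpha\cdot d_G(v)$. Splitting these edges by whether the other endpoint stays inside $H_i$ or leaves it gives
\[
    \alpha\cdot d_G(v) \;\le\; |E(v, S\cap H_i)| + |E(v, S\setminus H_i)| \;\le\; \sqrt{w} + |E_G(\{v\}, V\setminus H_i)|,
\]
where the first summand uses the simple-graph bound $|E(v, S \cap H_i)| \le |S \cap H_i| \le \sqrt{w}$. Suppose towards contradiction that $v \notin R_i$. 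Then $d_G(v)$ exceeds the shaving threshold, which forces both $\sqrt{w} \le \alpha d_G(v)/10$ (from the $10\alpha^{-1}\sqrt{w}$ term) and a proportional bound of the form $|E_G(\{v\}, V\setminus H_i)| \le c_0\cdot\alpha d_G(v)$ for a small constant $c_0$ (from the second term of the threshold). Substituting both into the displayed inequality yields $\alpha d_G(v) \le (1/10 + c_0)\alpha d_G(v)$, and choosing $c_0 < 9/10$ gives a contradiction. Hence $v \in R_i$, so $H_i' \subseteq H_i \setminus S$ and contracting $H_i'$ merges no cut edge.

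The main obstacle is the balance between the expander demand and the shaving thresholds. The demand $\dd(v) = \phi^{-1}\sqrt{w}$ is tuned exactly so that an $S$-cut with at most $w$ edges has its smaller side bounded by $\sqrt{w}$ inside each expander, and the first threshold $10\alpha^{-1}\sqrt{w}$ is chosen so that any non-removed vertex has $\alpha d_G(v)$ comfortably exceeding $\sqrt{w}$. The second threshold must then absorb the $|E_G(\{v\}, V\setminus H_i)|$ term in the friendliness inequality; verifying that its constant is compatible with the contradiction above is the one place where the proof is genuinely delicate, but the rest of the argument reduces to the two-step combination of expander structure and friendliness sketched here.
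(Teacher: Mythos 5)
Your proof is correct and follows essentially the same route as the paper's: the expander guarantee with uniform demand $\phi^{-1}\sqrt{w}$ forces one side of $S$ inside each cluster to have at most $\sqrt{w}$ vertices, and $\alpha$-friendliness combined with the two non-shaving conditions then shows that this entire side is removed (the paper phrases the same calculation contrapositively, assuming a surviving cut edge and deriving that both $\dd(L)$ and $\dd(R)$ exceed $\phi^{-1}w$). One remark: your reading of the second shaving threshold as giving $|E_G(\{v\},V\setminus H_i)|\le c_0\,\alpha\, d_G(v)$ matches the paper's own proof (with $c_0=1/4$) rather than the algorithm's literal condition $d_G(v)<4\alpha\cdot|E_G(\{v\},V\setminus H_i)|$, which appears to be a typo for $4\alpha^{-1}$ in the source.
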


\begin{claim}[Equivalent of Claim 2.4 in \cite{AKT22b}]
    \label{claim:friendly-sparsifier-size}
    The number of edges in the output of \Cref{algorithm:friendly-cut-sparsifier} is at most $\tO(\alpha^{-1}n\sqrt{w_j})$.
\end{claim}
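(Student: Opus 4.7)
The plan is to decompose the edges of $G'$ into those that are inter-cluster in the expander decomposition and those that are intra-cluster but incident on a shaved vertex in some $R_i$; these are precisely the edges of $G$ that survive the contraction, since edges with both endpoints in the same $H_i'$ are collapsed into the super-vertex $v_i$. I will bound each part separately.

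For inter-cluster edges, I apply \Cref{lemma:expander-decomposition} with $\phi = 0.01$ and the uniform demand $\dd(v) = \phi^{-1}\sqrt{w}$. The lemma guarantees that the total weight of inter-cluster edges is
\[
O\bigl(\phi\cdot \dd(V)\cdot \log(mU)\bigr)
 = O\bigl(\phi \cdot n\phi^{-1}\sqrt{w}\cdot \log n\bigr)
 = \tO(n\sqrt{w}),
\]
and since $G$ is unweighted this bounds the number of inter-cluster edges.

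For intra-cluster edges incident on $R \eqdef \bigcup_i R_i$, I split $R$ according to which clause of the $\max$ triggers the shaving rule. Let $R^A \eqdef \{v\in R : d_G(v) < 10\alpha^{-1}\sqrt{w}\}$ and $R^B \eqdef R\setminus R^A$; by the definition of $R_i$, every $v\in R^B$ lies in some cluster $H_i$ with $d_G(v) < 4\alpha\cdot |E_G(\{v\},V\setminus H_i)|$. Each vertex $v\in R$ contributes to $G'$ at most $d_G(v)$ edges, so
\[
\sum_{v\in R^A} d_G(v) \;\le\; |R^A|\cdot 10\alpha^{-1}\sqrt{w} \;\le\; 10\alpha^{-1}n\sqrt{w},
\]
while for $R^B$ the case hypothesis gives
\[
\sum_{v\in R^B} d_G(v) \;<\; 4\alpha \sum_{v\in R^B} |E_G(\{v\},V\setminus H_i)| \;\le\; 4\alpha \sum_i |E_G(H_i, V\setminus H_i)| \;\le\; 8\alpha\cdot \tO(n\sqrt{w}),
\]
where the last step bounds $\sum_i |E_G(H_i, V\setminus H_i)|$ by twice the inter-cluster edge count from the previous paragraph. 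Summing the three contributions and using $\alpha\le 1$, the total number of edges in $G'$ is at most
\[
\tO(n\sqrt{w}) + 10\alpha^{-1}n\sqrt{w} + \tO(\alpha n\sqrt{w}) \;=\; \tO(\alpha^{-1}n\sqrt{w}),
\]
as required.

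The one subtle point is the $R^B$ estimate: individually each such vertex can have large degree, but the defining inequality pays for its degree via its \emph{boundary} edges leaving $H_i$, which are inter-cluster edges already controlled by \Cref{lemma:expander-decomposition}. Once that charging is set up correctly (and one keeps track of the factor $2$ coming from counting each inter-cluster edge in both endpoints' clusters), the rest is just summation.
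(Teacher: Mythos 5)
Your proof is correct and follows essentially the same route as the paper's: both decompose the surviving edges into inter-cluster edges (bounded via \Cref{lemma:expander-decomposition}), edges at low-degree shaved vertices, and edges at vertices shaved by the boundary-edge clause, with the last class charged to the inter-cluster edges. Your handling of the third class takes the algorithm's condition $d_G(v) < 4\alpha\cdot|E_G(\{v\},V\setminus H_i)|$ literally and gets the (even smaller) bound $\tO(\alpha n\sqrt{w})$, whereas the paper's prose uses $d_{out}(x)\ge \alpha\deg(x)/4$ and gets $\tO(\alpha^{-1}n\sqrt{w})$; either way the stated bound holds, so this is only a cosmetic difference.
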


\begin{proof}[Proof of \Cref{claim:friendly-sparsifier-correctness}]
  The proof follows the same lines as the proof of Claim 2.1 in \cite{AKT22b}.
  Notice that an edge $(x,y)\in E(S,V\setminus S)$ is contracted if and only if both endpoints $x,y$ are contained in the same shaved expander $H_i'$.
  Let $L=H_i \cap S$ and $R=H_i \cap (V\setminus S)$ be the induced the cut on the original expander, before shaving.
  Assume without loss of generality that $x\in L, y\in R$.
  Since $H_i$ is a $(\phi,\dd)$-expander, we have,
  \begin{equation*}
    \frac{|E(L,R)|}{\min\set{\dd(L),\dd(R)}} 
    \ge \phi
    .
  \end{equation*}
  Hence, $\min\set{\dd(L),\dd(R)} \le \phi^{-1} w$ since $|E(L,R)|\le \mintcut_G(S)\le w$ by the theorem statement.
  We now lower bound  $\dd(L)$ ($\dd(R)$ is bounded by a symmetric argument).
  Notice that since the cut $S$ is friendly in $G$, we have $|E(\set{x},S)| = \deg(x) - \cross{v}{} \ge \alpha \cdot\deg(x)$ since $\cross{x}{} \le (1-\alpha)\cdot\deg(x)$.
  Furthermore, since $x$ was not shaved then $|E(\set{x},V\setminus H_i)| \le \alpha \cdot\deg(x)/4$ and,
  \begin{equation}
    \label{eq:friendly-sparsifier-lower-bound}
    |E(\set{x},L)| 
    = |E(\set{x},S\cap H_i)|
    \ge |E(\set{x},S)| - |E(\set{x},V\setminus H_i)|
    \ge (\alpha-\alpha/4)\deg(x)
    \ge 5/2 \deg(x) \sqrt{w}
    ,
  \end{equation}
  where the last inequality follows as $\deg(x) \ge 10\alpha^{-1}\sqrt{w}$ since $x$ was not shaved.
  Notice that since $G$ is a simple graph, we have that $|E(\set{x},L)| \le |L|$.
  By the definition of $\dd$, we have $\dd(L) = |L|\phi^{-1}\sqrt{w} \ge (5/2)\phi^{-1} w$, where the inequality is from \Cref{eq:friendly-sparsifier-lower-bound}.
  The argument is symmetric for $R$, hence we have $\min\set{\dd(L),\dd(R)} \ge (5/2)\phi^{-1} w$ which leads to a contradiction.
  Therefore, no edge of $E(S,V\setminus S)$ is contracted, and the algorithm preserves the cut $S$.
\end{proof}
\begin{proof}[Proof of \Cref{claim:friendly-sparsifier-size}]
  The proof follows the same lines as the proof of Claim 2.2 in \cite{AKT22b}.
  Notice that $G'$ has three types of edges,
  \begin{enumerate}
    \item The outer edges of the expander decomposition, of which there are at most $O(\phi\dd(V)\log m)=\tO(n\sqrt{w})$ by \Cref{lemma:expander-decomposition} and the value of the demand vector.
    \item The edges adjacent to vertices shaved because of their degree, i.e. those with $\deg(v) < 10\alpha^{-1}\sqrt{w}$, of which there are at most $O(\alpha^{-1}n\sqrt{w})$.
    \item Edges that are incident to vertices that were shaved because at least $\alpha^{-1} \deg(v)/4$ of their edges went outside their expander. 
    For every $v\in V$ let $d_{out}(v)$ be the cardinality of the edge set $E(\set{v},V\setminus H)$, where $H$ is the cluster to which $v$ belongs. 
    Furthermore, let $X\subseteq V$ be the set of the aforementioned shaved vertices, observe that the total number of inter-cluster edges is at most $\tO(n\sqrt{w}) = \sum_{v\in V}d_{out}(v)$ as explained above. 
    Since for every $x\in X$ we have $d_{out}(x) \ge \alpha \deg(x)/4$, we find that the number of edges incident to $X$ is at most $\tO(\alpha^{-1}n\sqrt{w})$.
  \end{enumerate}
  Summing up the three types of edges, we find that the total number of edges in $G'$ is at most $\tO(\alpha^{-1}n\sqrt{w})$.
\end{proof}

\section{Expander Decomposition}
\label{sec:expander-decomposition}
In this section we prove \Cref{lemma:expander-decomposition}, which we restate here for convenience.
\expanderDecomposition*
We use the $\BalCutPrune$ approach to find an expander decomposition of $G$.
\begin{definition}
    \label{def:bal-cut-prune}
    Given an undirected graph $G=(V,E)$, demands vector $\dd\in\R^{|V|}_+$, sparsity parameter $\phi\in(0,1]$, and approximation factor $\alpha$, the goal of the $\alpha$-approximate $\BalCutPrune$ problem is to find a partition $(A,B)$ of $V$ such that $w(E(A,B))\le \alpha\phi \min\set{\dd(A),\dd(B)}$ and, either:
    \begin{enumerate}
        \item \textbf{Cut:} $\dd(A),\dd(B) \ge \dd(V)/3$; or
        \item \textbf{Prune:} $\dd(A)\ge \dd(V)/2$ and $\Phi_{G[A]}\ge \phi$,
    \end{enumerate}
    where $\Phi_{G[A]}$ is the expansion of the induced subgraph $G[A]$ defined as,
    \begin{equation*}
        \Phi_{G[A]} 
        = \min_{S\subseteq A, 0<|S|<|A|} \frac{w(E(S,A\setminus S))}{\min{\dd(S),\dd(A\setminus S)}}
    \end{equation*}
\end{definition}

The main technical lemma needed for the proof of \Cref{lemma:expander-decomposition} is the following lemma, which shows how to solve the $\BalCutPrune$ problem using few cut queries.
\begin{lemma}
    \label{lemma:bal-cut-prune}
    Given an undirected graph $G=(V,E)$ on $n$ vertices, a subset of vertices $W\subseteq V$, demand vector $\dd\in\R^{|W|}_+$, and sparsity parameter $\phi\in(0,1]$, one can solve the $2$-approximate $\BalCutPrune$ on $G[W]$ using $\tO(|W|)$ randomized cut queries.
    The algorithm succeeds with probability $1-n^{-10}$.
\end{lemma}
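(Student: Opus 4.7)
The plan is to reduce $\BalCutPrune$ on $G[W]$ to a classical $\BalCutPrune$ call on a cut sparsifier that can be extracted efficiently with few cut queries. Since cut queries on $G$ restricted to subsets of $W$ give, via \Cref{claim:s-t-num-edges}, the information of cut queries on $G[W]$, I would treat $G[W]$ as the effective input graph on $|W|$ vertices; both the number of vertices $|W|$ and the induced edge structure are accessible by at most a constant blowup in query cost.

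First, I would apply \Cref{theorem:cut-sparsifier-construction} to $G[W]$ with a small constant quality parameter $\epsilon$ (e.g.\ $\epsilon=1/20$) to construct a $(1\pm\epsilon)$-cut sparsifier $H$. By that theorem, this uses $\tO(|W|/\epsilon^2) = \tO(|W|)$ cut queries and succeeds with probability $1 - 1/\poly(n)$. Crucially, $H$ is returned as an \emph{explicit} weighted graph, so all subsequent processing is free of cut queries.

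Second, I would invoke a classical near-linear-time $\BalCutPrune$ algorithm (such as the ones from \cite{CGLNPS20,LS21}) on $H$ with demand vector $\dd$ and sparsity parameter $\phi' \eqdef \phi / (1-\epsilon)^2$. This yields, in time polynomial in $|E(H)|$ and without any additional cut queries, a partition $(A,B)$ of $W$ that satisfies either the Cut condition or the Prune condition of \Cref{def:bal-cut-prune} on $H$ with the classical algorithm's approximation factor $\alpha_0$.

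Third, I would transfer the guarantee from $H$ back to $G[W]$. The $(1\pm\epsilon)$-sparsifier property gives, for every disjoint $S,T\subseteq W$, that $w_H(E(S,T)) \in (1\pm\epsilon)\, w_{G[W]}(E(S,T))$, and hence also that the expansion of every induced subgraph of $G[W]$ agrees with that of the corresponding induced subgraph of $H$ up to a $(1\pm O(\epsilon))$ factor. The balance conditions on $\dd$ are preserved verbatim since they involve only demands. In the Cut case one then obtains $w_{G[W]}(E(A,B)) \le \alpha_0 \phi' / (1-\epsilon)\cdot \min\set{\dd(A),\dd(B)} \le 2\phi\cdot\min\set{\dd(A),\dd(B)}$; in the Prune case, $\Phi_{H[A]}\ge \phi'$ implies $\Phi_{G[W][A]}\ge (1-\epsilon)\phi' \ge \phi$. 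The total query cost is dominated by the sparsifier construction, giving the claimed $\tO(|W|)$ bound, and the success probability inherits from \Cref{theorem:cut-sparsifier-construction}.

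The main obstacle is arranging that the composed approximation factor — namely the classical algorithm's $\alpha_0$ degraded by the $(1\pm\epsilon)$ loss from the sparsifier — is at most $2$. This is essentially a parameter-tuning issue rather than an algorithmic one, requiring a classical near-linear time $\BalCutPrune$ routine with approximation close enough to $1$; existing subroutines based on the cut-matching game with modern flow primitives are sufficient. Everything else — simulating $G[W]$, sparsifying, and invoking the classical algorithm — is a direct application of tools already established earlier in the paper.
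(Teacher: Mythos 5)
Your first two steps (simulating cut queries on $G[W]$ via \Cref{claim:s-t-num-edges} and building an explicit $(1\pm\epsilon)$-cut sparsifier with $\tO(|W|)$ queries via \Cref{theorem:cut-sparsifier-construction}) match the paper's proof. The gap is in your third step, and it is not a parameter-tuning issue: you need a classical $\BalCutPrune$ subroutine whose approximation factor $\alpha_0$ is close enough to $1$ that, after the $(1\pm\epsilon)$ sparsifier loss, the composed factor is at most $2$. No such near-linear-time (or even polynomial-time) routine is known. The algorithms of \cite{CGLNPS20,LS21} and the cut-matching-game-based routines you allude to have approximation factors that are polylogarithmic or $n^{o(1)}$, and even offline the best known approximation for balanced-cut-type objectives is $O(\sqrt{\log n})$. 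So as written, your reduction cannot deliver the factor-$2$ guarantee in the lemma statement, and the claim that "existing subroutines \ldots are sufficient" is where the argument breaks.

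The paper avoids this entirely by exploiting that the lemma only charges for \emph{cut queries}, not computation time: once the sparsifier $H$ is held explicitly, the algorithm simply enumerates all partitions $(A,B)$ of $W$, uses $H$ to estimate $w(E(A,B))$ and the induced expansions up to the sparsifier's constant factor, and returns the first partition satisfying the Cut or Prune condition. This brute force costs zero additional queries, and the only approximation loss is the sparsifier's, which is how the factor $2$ is obtained. (One exists to be found because $G[W]$ itself always admits a valid $\BalCutPrune$ output.) If you wanted to keep your structure, you would either have to accept a polylogarithmic approximation factor and propagate it through \Cref{lemma:expander-decomposition} — which the downstream applications could likely tolerate, since $\phi$ there is a constant and the edge bounds already hide polylogs — or replace the classical subroutine with exhaustive search as the paper does.
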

\begin{proof}
    Notice that one can simulate any cut query $S\subseteq W$ in $G[W]$ by $O(1)$ cut queries in $G$, since $\mintcut_{G[W]}(S) = |E(S,W\setminus S)|$ which can be recovered using $O(1)$ cut queries in $G$ by \Cref{claim:s-t-num-edges}.
    Begin by constructing a quality $(1\pm 1/2)$-cut sparsifier $H$ of $G$, this requires $\tO(n)$ cut queries by \Cref{theorem:cut-sparsifier-construction}.
    Notice that this allows us to approximate the value $w(E(A,B))$ for any partition $(A,B)$ of $V$ up to factor $2$.
    Hence, the algorithm can find all partitions $(A,B)$ of $V$ such that $w(E(A,B))\le 2\cdot\phi \min\set{\dd(A),\dd(B)}$ and then return the first partition that satisfies either the \textbf{Cut} or \textbf{Prune} condition.
    Finally, notice that the only probabilistic part of the algorithm is the construction of the cut sparsifier, which succeeds with probability $1-n^{-10}$ by \Cref{theorem:cut-sparsifier-construction}.
    This concludes the proof of \Cref{lemma:bal-cut-prune}.
\end{proof}
\begin{proof}[Proof of \Cref{lemma:expander-decomposition}]
    The proof follows the line of the proof of Corollary 2.5 in \cite{LS21}.
    Throughout the proof we maintain a collection $\HH$ of disjoint subgraphs of $G$ which are called clusters, the collection is partitioned into \emph{active} clusters $\HH^A$ and \emph{inactive} clusters $\HH^I$.
    We also maintain a set of edges $E'$ which are outside all the clusters.

    The algorithm for the decomposition is as follows.
    While $\HH^A\neq \emptyset$, run the $2$-approximate $\BalCutPrune$ on $G[H]$ with demands $\dd_H$ and sparsity parameter $\phi$ for all $H\in \HH^A$ in parallel.
    Fix some $H\in\HH$ and let $(A,B)$ be the partition returned by the $\BalCutPrune$ procedure.
    If $\dd(A),\dd(B)\ge \dd(H)/3$ then replace $H$ with $A$ and $B$ in $\HH^A$.
    Otherwise, $\dd(A)\ge \dd(H)/2$ and $\Phi_{G[A]}\ge \phi$, hence the algorithm removes $H$ from $\HH^A$ and adds $A$ to $\HH^I$ and $B$ to $\HH^A$.
    Finally, it adds the edges $E(A,B)$ to $E'$.

    Notice that in every iteration, $\dd_H(H)$ is reduced by at least a constant factor, therefore the algorithm terminates after at most $O(\log mU)$ iterations.
    Notice that when the algorithm terminates, we are guaranteed that $\Phi_{G[H]}\ge \phi$ for every $H\in \HH^I$.
    This satisfies the first condition of the theorem.

    We now bound the total weight of inter-cluster edges.
    Notice that since the cuts found are $\phi$-sparse, in every cluster $H$ we have that $w(E_{G[H]}(A,B)) \le 2\phi \min\set{\dd_H(A),\dd_H(B)}\le 2\phi\dd_H(H)$, and summing over all partitions we find that the total weight of inter-cluster edges is at most, $2\phi \dd(V)$.
    Therefore, in $O(\log (mU))$ iterations the algorithm adds edges of total weight at most $O(\phi \dd(V)\log (mU))$ to $E'$.

    To conclude the proof, we analyze the query complexity of the algorithm.
    Notice that the query complexity of the $2$-approximate $\BalCutPrune$ on a cluster $H$ is $\tO(|H|)$ by \Cref{lemma:bal-cut-prune}.
    Since the total number of vertices in all clusters is at most $n$, the total query complexity of the algorithm in a single iteration is $\tO(n)$ and the overall query complexity is $\tO(n\log (mU))$.
    Finally, the success probability of the algorithm is determined solely by the $\BalCutPrune$ procedure, and using a union bound over all iterations we find that the algorithm succeeds with probability $1-n^{-10}\cdot \tO(n\log (mU))=1-1/\poly(n)$.
\end{proof}

\section{Gomory-Hu Algorithm}
\label{sec:gomory-hu-algorithm}
In this section we provide a reduction from a single-source minimum cut procedure algorithm to computing the Gomory-Hu tree of a graph $G$, proving \Cref{lemma:gomory-hu-tree-reduction}.
We begin by restating the lemma for convenience.
\ghtreereduction*
The algorithm is based on the general framework of \cite{AKT22b,AKT22a}.
The proof requires the following definitions.
\begin{definition}[Partial $k$-Tree]
    A $k$-partial tree of a graph $G=(V,E)$ is a partition tree of $G$, such that for every two sets $X,Y\subseteq V$ in the partition,
    \begin{itemize}
        \item For every $x_1,x_2\in X$, the minimum $x_1,x_2$-cut is larger than  $k$, and
        \item For every $x\in X,y\in Y$ the minimum $x,y$-cut in $G$ is equal to the minimum $X,Y$-cut in $T$.
    \end{itemize} 
\end{definition}
\begin{definition}[Auxiliary Graph]
    Given a partition tree of $T$ of a graph $G=(V,E)$ and a super-vertex $W\in T$, the auxiliary graph $G_W$ is the graph obtained by contracting every connected component of $T\setminus W$ into a single vertex.
\end{definition}

We begin by presenting the algorithm of \cite{AKT22a} and then explain how to implement it in the cut-query model.
\paragraph{Gomory-Hu Algorithm}
\begin{enumerate}
    \item Compute a partial $k$-tree $T$ of $G$ with $k=\sqrt{n}$. This is the initial Gomory-Hu tree that is refined throughout the algorithm.
    \item For each super vertex $V_i$, do the following:
    \begin{enumerate}
        \item If $|V_i|=1$, then stop and continue to the next super vertex. \label{lst:step:gh-stop-if-singleton}
        \item Get an auxiliary graph $G_i$ using the current tree $T$, then compute a perturbed version $\tilde{G}_i$ of $G_i$ such that the minimum cuts are unique.
        \item Pick a pivot $p\in V_i$ uniformly at random, and compute single-source minimum cuts $S_v$ from $p$ to all other vertices in $V_i$ on the perturbed graph $\tilde{G}_i$. \label{lst:step:gh-random-pivot}
        \item Denote each vertex such that $|S_v\cap V_i|\le |V_i|/2$ as good, and the rest as bad.
        \item If there are less than $|V_i|/4$ good vertices return to step \ref{lst:step:gh-random-pivot} and pick a new pivot (If this fails $20\log n$ times then abort the algorithm).
        Otherwise, continue to the next step. \label{lst:step:gh-bad-pivot}
        \item Refine $T$ based on the cuts as follows. 
        For each good vertex $u\in V_i$, assign it to the largest good cut $S_v$ such that $u\in S_v$.
        Then, let $S_1,\ldots, S_r$ be the cuts to which at least one good vertex was assigned.
        Refine the tree by taking a Gomory-Hu step vertex $V_i$ using each of the cuts above.
        This results in replacing  $V_i$ with a set of super vertices $\{S_1,\ldots, S_r,V_i\setminus (\cup_i S_i)\}$.
        \item Recurse on all the new super vertices by going to step \ref{lst:step:gh-stop-if-singleton}.
    \end{enumerate}
    \item Return the final tree $T$.
\end{enumerate}
The following result states that the algorithm above computes a Gomory-Hu tree of $G$.
\begin{lemma}[\cite{AKT22a}]
    \label{lemma:akt-gh-tree-algorithm-guarantee}
    The algorithm above returns a Gomory-Hu tree of $G$.
    Furthermore, the probability of failure in step \ref{lst:step:gh-bad-pivot} is at most $n^{-10}$.
\end{lemma}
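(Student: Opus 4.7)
The plan is to decompose the lemma into two essentially independent claims: (a) at every point in the execution the current tree $T$ is a ``partial Gomory-Hu tree'' in the sense that for every two adjacent super-vertices $X,Y\in T$ the edge weight equals the minimum $x,y$-cut value in $G$ for every $x\in X$ and $y\in Y$; (b) each execution of step \ref{lst:step:gh-random-pivot} yields at least $|V_i|/4$ good vertices with constant probability. Correctness then follows because the algorithm only stops (via step \ref{lst:step:gh-stop-if-singleton}) when every super-vertex is a singleton, at which point (a) exactly says that $T$ is a Gomory-Hu tree of $G$, while the amplification over $20\log n$ attempts reduces the failure analysis to (b).

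For (a) I would proceed by induction on the refinement steps. The initial partial $\sqrt{n}$-tree satisfies the invariant by definition. For the inductive step, consider a super-vertex $V_i$ and its auxiliary graph $G_i$. The partial-tree property of $T$ combined with the inductive hypothesis implies that for every $p,v\in V_i$ one has $\lambda_{p,v}(G_i)=\lambda_{p,v}(G)$, so the minimum $p,v$-cut $S_v$ computed in the perturbed auxiliary graph $\tilde{G}_i$ is a minimum $p,v$-cut in the perturbed $G$. A single Gomory-Hu splitting step using any such cut $(S_v,V\setminus S_v)$ preserves the invariant by the classical result of \cite{GH61}. The algorithm performs $r$ such steps simultaneously using the good cuts $S_1,\ldots,S_r$; the key structural point is that because $\tilde{E}$ forces uniqueness of minimum $s,t$-cuts, these cuts form a laminar family inside $V_i$ (via submodular uncrossing, any two crossing minimum cuts would contradict uniqueness). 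Laminarity implies that assigning each good vertex to the largest good cut containing it gives a well-defined refinement, and that applying the $r$ Gomory-Hu steps sequentially in any order produces the same tree as the parallel step, thereby preserving the invariant.

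For the probability bound in (b), fix $V_i$ and pick $p\in V_i$ uniformly at random. By uniqueness, for each pair $(p,v)$ with $v\in V_i\setminus\{p\}$ the partition $(S_v\cap V_i,\,V_i\setminus S_v)$ is well-defined and at most one of the two sides has size greater than $|V_i|/2$. Using the bijection that swaps $p$ and $v$, for a uniformly random ordered pair the probability that $v$ lies on the smaller side is at least $1/2$. Therefore, for a uniformly random pivot $p$, the expected number of good $v\in V_i$ is at least $|V_i|/2$, and a reverse-Markov argument applied to the complementary count of bad vertices gives $\Pr[\,\text{\#good}\ge |V_i|/4\,]\ge 1/3$. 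With $20\log n$ independent pivot choices, the probability that every attempt fails is at most $(2/3)^{20\log n}\le n^{-10}$.

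The main obstacle I anticipate is the parallel refinement consistency used in (a). Classical Gomory-Hu is presented one cut at a time, so one must argue that folding $r$ cuts into a single refinement is equivalent to a sequential schedule. The uniqueness guaranteed by $\tilde{E}$ together with the standard uncrossing inequality $\mintcut(A)+\mintcut(B)\ge\mintcut(A\cap B)+\mintcut(A\cup B)$ forces laminarity of the family $\{S_1,\ldots,S_r\}$ restricted to $V_i$, and once this laminarity is in hand the remainder of the correctness argument is a direct invariant-preservation check along the lines of \cite{AKT22a}.
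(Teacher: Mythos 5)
The paper itself does not prove this lemma: it is imported from \cite{AKT22a} (hence the citation in the statement), so there is no in-paper proof to compare against. Your reconstruction is essentially the standard argument from that line of work, and it is sound: the invariant that $T$ remains a partial Gomory--Hu tree throughout the refinement, the fact that contracting the other components of $T$ preserves $\lambda_{p,v}$ for $p,v$ in the same super-vertex, laminarity of the single-source minimum cuts under uniqueness, and the random-pivot argument with reverse Markov giving a constant success probability per attempt and $(2/3)^{20\log n}\le n^{-10}$ after amplification. If you were to write this out in full, the two places deserving the most care are exactly the ones you flag: (i) the uncrossing step needs both the submodular inequality $c(A)+c(B)\ge c(A\cap B)+c(A\cup B)$ and its posimodular counterpart $c(A)+c(B)\ge c(A\setminus B)+c(B\setminus A)$, depending on which sides $u$ and $v$ fall into, before uniqueness yields a contradiction; and (ii) the cuts $S_1,\dots,S_r$ actually used in the refinement are pairwise \emph{disjoint} (not merely laminar), because each is maximal among the good cuts containing some assigned vertex, and disjointness is what makes the parallel refinement equivalent to any sequential schedule of Gomory--Hu steps. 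One further routine point worth a sentence is that the perturbation weights are small enough (total below $1$) that a minimum cut of the perturbed graph is also a minimum cut of the unweighted graph, so the computed $S_v$ are legitimate cuts to refine by.
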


We now prove \Cref{lemma:gomory-hu-tree-reduction}.
\begin{proof}[Proof of \Cref{lemma:gomory-hu-tree-reduction}]
    We explain how to implement the algorithm above in the cut-query model.
    The first step is to compute a partial $k$-tree of $G$ with $k=\sqrt{n}$.
    We do this by recovering a $2\sqrt{n}$-NI sparsifier of $G$ using \Cref{claim:nagamochi-ibaraki-sparsifier} and then constructing a partial $k$-tree of the sparsifier using some algorithm.
    Notice that since the sparsifier captures every cut of size at most $\sqrt{n}$, it suffices for constructing a partial $k$-tree.
    Note that this takes $\tO(n^{1.5})$ cut queries by \Cref{claim:nagamochi-ibaraki-sparsifier}.

    To proceed, instead of recursively breaking each super vertex $V_i$ until it becomes a singleton, run the algorithm inside all super vertices of a given level in parallel.
    To perform step \ref{lst:step:gh-random-pivot}, sample uniformly a pivot $p_i\in V_i$ for all $i$ in parallel.
    Then, call the single-source minimum cut algorithm in the theorem statement.
    Note that the single-source minimum cut algorithm solves the problem of computing the minimum $p_i,v$-cut for every $v\in V_i$ on $G$ directly, and not on the auxiliary graphs $\set{G_i}_i$.
    However, by the structure of the Gomory-Hu tree we are guaranteed that those cuts are the same as the minimum $p_i,v$-cuts in the auxiliary graphs $G_i$.

    For every partition $V_i$ which passes the condition in step \ref{lst:step:gh-bad-pivot}, the algorithm stops and refines the tree.
    Notice that we do not need any additional cut queries for checking the condition or the refinement step, since the cuts $S_v$ and their values were already computed in the previous step.
    
    Notice that the algorithm terminates with depth $O(\log n)$ and runs for at most $O(\log n)$ times in each level, therefore we call the single-source minimum cut algorithm at most $O(\log^2 n)$ times.
    Hence, the total number of cut queries used by the algorithm is at most $\tO(n^{1.5}+ q(n)\log^2 n)$, where $q(n)$ is the query complexity of the single-source minimum cut algorithm in the theorem statement.
    The success probability follows from \Cref{lemma:akt-gh-tree-algorithm-guarantee} and the fact that the single-source minimum cut algorithm succeeds with probability $\rho_F$.
    This concludes the proof of \Cref{lemma:gomory-hu-tree-reduction}.
\end{proof}

\end{document}